\newcommand{\iid}{\stackrel{\mathrm{iid}}{\sim}}
\newcommand{\reals}{\mathbb{R}}
\newtheorem{lemma}{Lemma}
\newtheorem{proposition}{Proposition}
\newtheorem{corollary}{Corollary}
\begin{document}

\begin{frontmatter}

\title{Full Matching Approach to Instrumental Variables Estimation with Application to the Effect of Malaria on Stunting}
\runtitle{Full Matching IV Estimation}


\author{\fnms{Hyunseung} \snm{Kang}\thanksref{m1}\corref{}\ead[label=e1]{khyuns@wharton.upenn.edu}}
\author{\fnms{Benno} \snm{Kreuels}\thanksref{m2,m3}\ead[label=e2]{kreuels@bni-hamburg.de}}
\author{\fnms{J\"{u}rgen} \snm{May}\thanksref{m3}\ead[label=e3]{may@bni.uni-hamburg.de}}
\and
\author{\fnms{Dylan} S. \snm{Small}\thanksref{m1}\ead[label=e4]{dsmall@wharton.upenn.edu}}

\runauthor{H. Kang et al.}

\affiliation{University of Pennsylvania\thanksmark{m1}, University Medical Centre\thanksmark{m2} and
                   Bernhard Nocht Institute for Tropical Medicine\thanksmark{m3}}

\address{Department of Statistics \\
The Wharton School \\
University of Pennsylvania \\
Philadelphia, PA, USA19104 \\
\printead{e1} \\
\phantom{E-mail:\ }\printead*{e4}}

\address{Division of Tropical Medicine \\
I. Department of Internal Medicine \\
University Medical Centre \\
Hamburg, Eppendorf, Germany \\
\printead{e2}}

\address{Research Group Infectious Disease Epidemiology \\
Bernhard Nocht Institute for Tropical Medicine \\
Hamburg, Germany \\
\printead{e2} \\
\phantom{E-mail:\ } \printead*{e3}}

\begin{abstract}
Most previous studies of the causal relationship between malaria and stunting have been studies where potential confounders are controlled via regression-based methods, but these studies may have been biased by unobserved confounders.  Instrumental variables (IV) regression offers a way to control for unmeasured confounders where, in our case, the sickle cell trait can be used as an instrument. However, for the instrument to be valid, it may still be important to account for measured confounders.  The most commonly used instrumental variable regression method, two-stage least squares, relies on parametric assumptions on the effects of measured confounders to account for them.  Additionally, two-stage least squares lacks transparency with respect to covariate balance and weighing of subjects and does not blind the researcher to the outcome data. To address these drawbacks, we propose an alternative method for IV estimation based on full matching. We evaluate our new procedure on simulated data and real data concerning the causal effect of malaria on stunting among children.  We estimate that the risk of stunting among children with the sickle cell trait decreases by 0.22 per every malaria episode prevented by the sickle cell trait, a substantial effect of malaria on stunting (p-value: 0.011, 95\% CI: 0.044, 1).
\end{abstract}


\begin{keyword}
\kwd{Full matching}
\kwd{Instrumental variables}
\kwd{Malaria}
\kwd{Stunting}
\kwd{Two-stage least squares}
\end{keyword}

\end{frontmatter}

\section{Introduction}

\subsection{Motivation: Does malaria cause stunting?} \label{sec:intro.motivatingExample}
From January 2003 to January 2004, 1070 infants from Ghana, Africa were recruited to a clinical trial on Intermittent Preventative Treatment for malaria (IPT) \citep{kobbe_randomized_2007}. From the time of recruitment at 3 months of age until two years of age, each child was monitored monthly for the presence of malaria parasites with measurements every three months of length/height. Table \ref{tab:char} lists the baseline characteristics of the 1070 infants in our data.

One of the public health questions of interest from this clinical study was whether malaria caused stunted growth among children. In 2013 alone, there were 128 million estimated cases of malaria in sub-Saharan Africa, with most cases occurring in children under the age of 5 \citep{who_world_2014}. Stunting, defined as a child's height being two standard deviations below the mean for his/her age, is a key indicator of child development \citep{who_who_2006}. If malaria does cause stunted growth, several intervention strategies can be implemented to mitigate stunted growth, such as distribution of mosquito nets, control of the mosquito population during seasons of high malarial incidence, and surveillance of mosquito populations. 

The current body of evidence suggests that there is a strong positive relationship between malaria exposure and stunted growth \citep{genton_relation_1998, deen_increased_2002,nyakeriga_malaria_2004, ehrhardt_malaria_2006,fillol_impact_2009, deribew_malaria_2010,crookston_exploring_2010}. Unfortunately, a fundamental limitation with these prior studies is that they are observational studies and consequently, there is always a concern that important confounders were not controlled for. For example, \citet{fillol_impact_2009} and \citet{deribew_malaria_2010} stated that a limitation in their studies was not controlling for diet, specifically a child's intake of micronutrients such as vitamins, zinc, or iron as these micronutrients could impact a child's growth as well as his immune system's ability to fight off a malaria episode. In addition, \citet{ehrhardt_malaria_2006} and \citet{crookston_exploring_2010} suggested controlling for socioeconomic factors in future studies of malaria and malnutrition because affluent families are more likely to provide mosquito nets and nutritious food to their children compared to impoverished families. Short of a randomized clinical trial, which is unethical in this context, unmeasured confounders are likely present in all the aforementioned studies, because of the practical limitations of accounting for all possible confounders.


\subsection{Instrumental variables and sickle cell trait} \label{sec:intro.IV}
Instrumental variables (IVs) is an alternative method to estimate the causal effect of an exposure on the outcome when there is unmeasured confounding, provided that a valid instrument is available \citep*{angrist_identification_1996,hernan_instruments_2006, brookhart_preference-based_2007, cheng_semiparametric_2009, swanson_commentary_2013, baiocchi_instrumental_2014}. The core assumptions for a variable to be a valid instrumental variable are that the variable (A1) is associated with the exposure, (A2) has no direct pathways to the outcome, and (A3) is not associated with any unmeasured confounders after controlling for the measured confounders (See Figure \ref{fig:1} and Section \ref{sec:defIV} for more detailed discussions). If measured covariates are available, like in our data, the plausibility of the instrument satisfying the three core assumptions can be improved by conditioning on the covariates, especially (A3).
\begin{figure}
\vspace{6pc}
\includegraphics[scale=0.2]{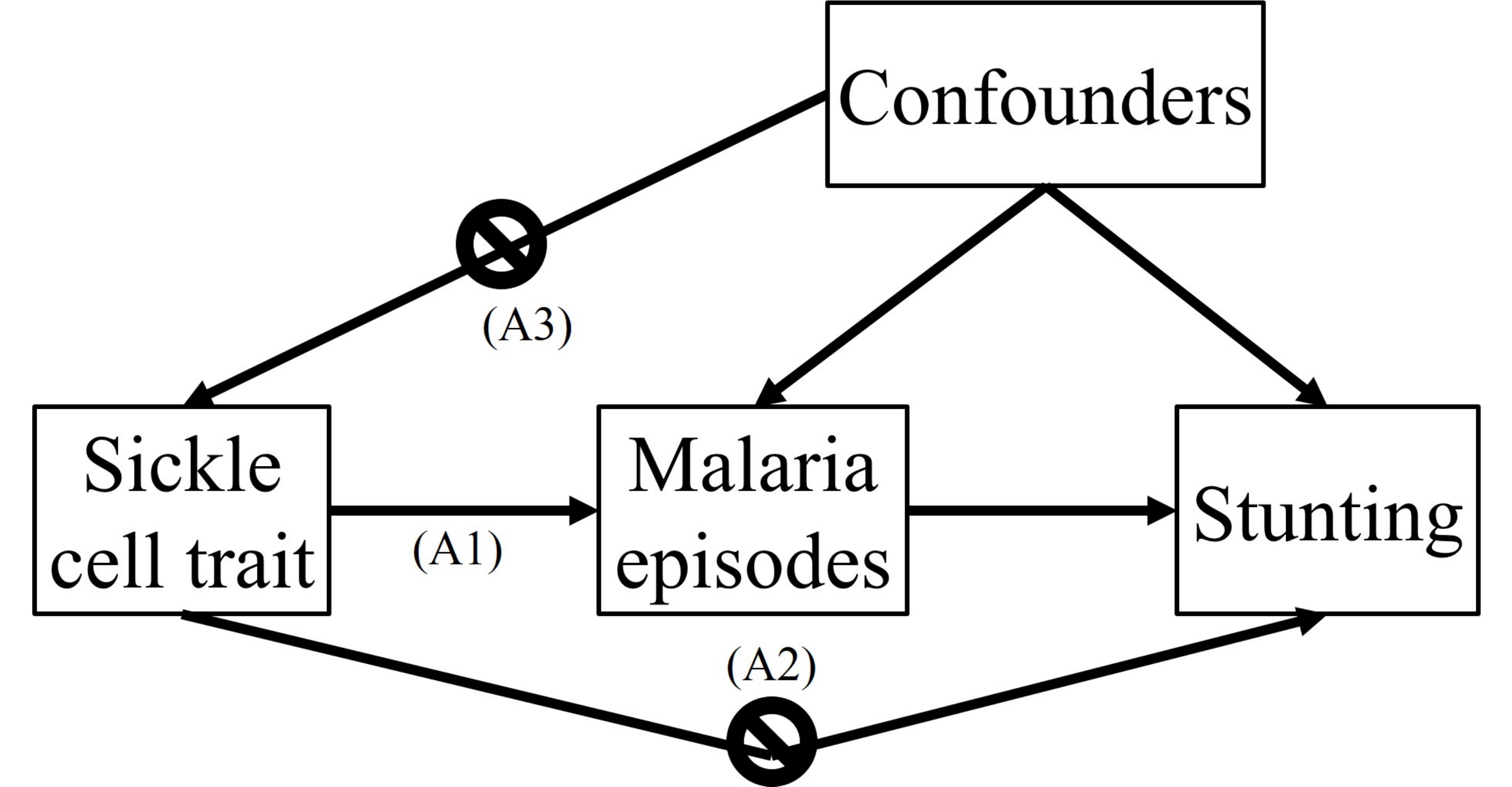}
\caption[]{Causal diagram for the malaria study. Numbers (A1,A2,A3) represent MR assumptions.}
\label{fig:1}
\end{figure} 
For our study of analyzing the causal effect of malaria on stunting, we follow a recent approach by \citet{davey_smith_mendelian_2003} and especially \citet{kang_causal_2013} where genotypic variations are used as instruments and propose to use the presence of a sickle cell genotype (HbAS) versus carrying the normal hemoglobin type (HbAA) as an instrument. The sickle cell genotype (HbAS) is a condition where a person inherits from one parent a mutated copy of the hemoglobin beta (HBB) gene called the sickle cell gene mutation that bends red blood cells into a sickle (crescent) shape, but inherits a normal copy of the HBB gene from the other parent. The sickle cell trait protects against malaria, but is thought to be otherwise mostly asymptomatic \citep{aidoo_protective_2002}. Note that we exclude from the analysis people who have two copies of the HBB gene, i.e. people who suffer from sickle cell disease which causes severe symptoms; sickle cell disease (two copies of the HBB gene) is thought to persist despite its evolutionary disadvantage because of the sickle cell trait (one copy of the HBB gene) protecting against malaria \citep{may_hemoglobin_2007}. We discuss in detail the validity of the sickle cell trait IV in Section \ref{sec:defIV}. In addition, we propose to combine the covariates that were already measured for this data in Table \ref{tab:char} to increase the plausibility of our sickle cell trait being a valid instrument.

\subsection{Two-stage least squares} \label{sec:intro.tsls}
The most popular and well-studied among methods that use an IV and measured covariates to estimate causal effects is two-stage least squares (2SLS) \citep{angrist_does_1991, card_using_1995, wooldridge_econometrics_2010}. For example, in \citet{card_using_1995}, which studied the effect of education on wages, 2SLS with proximity to a 4-year college as an IV was used to control for measured covariates such as race and parents' education. Specifically, 2SLS first estimated, via least squares, the predicted exposure (education) given the instrument (proximity to 4-year college) and the measured covariates, and second, regressed the outcome (earnings) on this predicted exposure and the measured covariates; the 2SLS estimate of the causal effect is the coefficient on the predicted exposure in the second regression. Standard results in econometrics show 2SLS estimators are consistent and efficient under linear single-variable structural equation models with a constant treatment effect \citep{wooldridge_econometrics_2010}. When treatment effects are not constant, \citet{angrist_two-stage_1995} showed that under certain monotonicity assumptions, 2SLS converges to a weighted average of the covariate-specific treatment effects with the weights proportional to the average conditional variance of the expected value of the treatment given the covariates and the instrument. Other IV methods to estimate causal effects in the presence of measured covariates include Bayesian methods \citep{imbens_bayesian_1997}, semiparametric methods \citep{abadie_semiparametric_2003, tan_regression_2006, ogburn_doubly_2015}, and nonparametric methods \citep{frolich_nonparametric_2007}.

Despite its attractive estimation properties, 2SLS has some drawbacks in (i) lack of transparency of the population to which the estimate applies, (ii) lack of blinding of the analyst/researcher and (iii) dependence on parametric assumptions. First, with regards to transparency, suppose that there are some values of the covariates for which the instrument is almost always low, some values for which the instrument is almost always high and some values of the covariates for which the instrument takes on both low and high values.  Then, the 2SLS estimate will put most of its weight on the causal effect for subjects with the values of the covariates for which the instrument takes on both low and high values, and little weight on subjects with the values of the covariates for which the instrument usually takes on low (or high) values.  In our malaria study, this would mean that there might be some villages (a covariate) that are receiving little weight in the 2SLS estimate; consequently, the 2SLS estimate might not be helpful for understanding the effect of malaria on stunting in these villages even though these villages might have contributed many subjects to the analysis.  Although the weighting function in 2SLS can be studied, there is nothing in the 2SLS estimation procedure itself that warns us when some values of the covariates are receiving little weight and it is rare to see discussion of the weighting function for 2SLS in empirical papers.  

Second, 2SLS lacks blinding with respect to the outcome data when adjusting for covariates. \citet{cochran_planning_1965}, \citet{rubin_design_2007} and \citet{rosenbaum_design_2010} argue that the best observational studies resemble randomized experiments. An important feature of the design of randomized experiments is that when designing the study and planning the analysis, the researcher is blinded to the outcome data. However, in regression-based procedures for adjusting for covariates like 2SLS, there is often judgment that needs to be exercised in choosing covariate adjustment models, which require one to look at the outcome data and estimates of causal effects to exercise such judgment. It is difficult even for the most honest researcher to be completely objective in comparing models when the researcher has an a priori hypothesis or expectation about the direction of the causal effect \citep{rubin_estimating_2006}. 

Third, 2SLS relies on proper specification of how the measured covariates affect the outcomes. Often, parametric modeling assumptions are made for how the measured confounders affect the outcome. In particular, 2SLS, as usually implemented, relies on the measured confounders having a linear effect on the expected outcome. Section \ref{sec:simulation.2sls} contains simulation evidence about 2SLS that demonstrates its reliance on linear, parametric assumptions. 

\subsection{Instrumental variables with full matching} \label{sec:intro.fullmatch}
Matching is an alternative method to adjust for measured covariates. A matching algorithm groups individuals in the data with different values of the instrument but similar values of the observed covariates, so that within each group, the only difference between the individuals is their values of the instrument \citep*{haviland_combining_2007, rosenbaum_design_2010, stuart_matching_2010}. For example, in the malaria data, a matching algorithm seeks to produce matched sets so that in a matched set, individuals are born in the same village and are similar on other measured covariates. The only difference between individuals in a matched set is their instrument values. We can then compare stunting between individuals with high and low values of the instrument within a matched set to assess the effect of malaria on stunting \citep{baiocchi_building_2010}.

Matching addresses the drawbacks of 2SLS discussed in the previous section. First, if there are values of covariates for which almost all subjects have a high (or low) value of the IV, then the matching algorithm and associated diagnostics will tell us that matched sets cannot be formed when subjects in the matched sets have certain values of the covariates but different levels of the IV; thus, it will be transparent that for these values of the covariates, the causal effect cannot be estimated without extrapolation.  Relatedly, matching allows us to control the weighting of subjects with different values of the covariates to make the weighting transparent, such as weighting the covariates in proportion to their population frequency. Second, matching is blind to the outcome data; a matching algorithm only requires the measured covariates and the instrument values for each individual in the data.  Diagnostics can be done and the matching can be adjusted until it is adequate, all without looking at the outcome data. Finally, when estimating the causal effect, matching makes non-parametric inference; it does not use any parametric assumptions model such as linearity and parametric assumptions on the model. 
 
Previous work using matching in studying causality is abundant in non-IV settings; see \citet{stuart_matching_2010} for a complete overview. In contrast, work on using matching methods on IV estimation is limited to pair matching \citep{baiocchi_building_2010} and fixed control matching, i.e. each unit with level 1 of the IV is matched to a fixed number of units with level 0 of the IV \citep{kang_causal_2013}. A drawback to these matching methods is that they do not use the full data \citep{keele_stronger_2013,zubizarreta_stronger_2013}. In particular, \citet{kang_causal_2013} studied the same causal effect of interest, malaria on stunting, but with a smaller amount of data, because the statistical methodology was limited to matching with fixed controls. That is, out of the total of 884 individuals available, the matching algorithm dropped 25\% of the individuals and the final statistical inference was based only on 660 individuals. 

In this paper, we develop an IV full matching approach that uses the full data. Full matching is the most general, flexible, and optimal type of matching \citep{rosenbaum_characterization_1991, hansen_full_2004, rosenbaum_design_2010}. Specifically, full matching is the generalization of any type of matching, such as pair matching, matching with fixed controls, or matching with variable controls. Full matching is also flexible in that it can incorporate constraints on matched set structures, such as limiting the number of individuals in each matched set, to improve statistical efficiency.  Finally, full matching is optimal in the sense that it produces matched sets where within each set, measured covariates between individuals with different instrument values are most similar \citep{rosenbaum_characterization_1991}.

Under IV estimation with full matching, we derive a randomization-based testing procedure and sensitivity analysis based on the proposed test statistic. We conduct simulation studies to study the performance of 2SLS versus full matching IV estimation, specifically analyzing the robustness of both methods to non-linearity (Section \ref{sec:simulation.2sls}). In the same spirit, we also conduct simulation studies to compare our full matching IV estimation with another nonparametric method introduced by \citep{frolich_nonparametric_2007} introduced in Section \ref{sec:intro.tsls}, specifically looking at bias and variance between the two nonparametric methods. Finally, we apply full matching IV estimation to analyze the causal effect of malaria on stunting and demonstrate the full matching method's transparency in adjusting for covariates. 

\section{Methods}
\subsection{Notation} \label{sec:notation}
To introduce the idea of matching in IV estimation, we introduce the following notation. Let $i = 1,\ldots,I$ index the $I$ total matched sets that individuals are matched into. Each matched set $i$ contains $n_i \geq 2$ subjects who are indexed by $j = 1,\ldots,n_i$ and there are a total of $N = \sum_{i=1}^I n_i$ individuals in the data. Let $Z_{ij}$ denote a binary instrument for subject $j$ in matched set $i$. In each matched set $i$, there are $m_i$ subjects with $Z_{ij} = 1$ and $n_i - m_i$ subjects with $Z_{ij}=0$. For instance, in the malaria data, for each $i$th matched set, there are $m_i$ children who inherited the sickle cell trait, HbAS (i.e. $Z_{ij} =1$), and $n_i - m_i$ children who inherited HbAA (i.e. $Z_{ij} = 0$). Let $\mathbf{Z}$ be a random variable that consists of the collection of $Z_{ij}$'s, $\mathbf{Z} = (Z_{11},Z_{12},....,Z_{I,n_I})$. Define $\Omega$ as the set that contains all possible values $\mathbf{z}$ of $\mathbf{Z}$, so $\mathbf{z} \in \Omega$ if $z_{ij}$ is  binary and $\sum_{j=1}^{n_i} z_{ij}=m_i$ for all $I$ matched sets. Thus, the cardinality of $\Omega$, denoted as $|\Omega|$, is $|\Omega| = \prod_{i=1}^I {n_i \choose m_i} $. Denote $\mathcal{Z}$ to be the event that $\mathbf{Z} \in \Omega$. 

For individual $j$ in matched set $i$, define $d_{1ij}$ and $d_{0ij}$ to be the potential exposure values under $Z_{ij} = 1$ or $Z_{ij} = 0$, respectively. With the malaria data, $d_{1ij}$ and $d_{0ij}$ represent the number of malaria episodes the child would have if she had the sickle cell trait, $Z_{ij} =1$, and no sickle cell trait, $Z_{ij} = 0$, respectively. Also, define $r_{1ij}^{(k)}$ to be the outcome individual $i$ would have if she were assigned instrument value $1$ and level $k$  of the exposure, and $r_{0ij}^{(k)}$ to be the outcome individual $i$ would have if she were assigned instrumental value $0$ and level $k$ of the exposure. Then, $r_{1ij}^{(d_{1ij})}$ and $r_{0ij}^{(d_{0ij})}$ are the potential outcomes if the individual were assigned levels 1 and 0 of the instrument respectively and the exposure took its natural level given the instrument, resulting in exposures, $d_{1ij}$ and $d_{0ij}$, respectively. In the malaria data, $r_{1ij}^{(d_{1ij})}$ is a binary variable that represents whether the $j$th child in the $i$th matched set would be stunted (i.e. 1) or not (i.e. 0) if the child carried the sickle cell trait (i.e. if $Z_{ij} =1$) and $r_{0ij}^{(d_{0ij})}$ is a binary variable that represents whether the child would be stunted or not if the child carried no sickle cell trait (i.e. if $Z_{ij} = 0$). The potential outcome notations assume the Stable Unit Treatment Value Assumption where (i) an individual's outcome and exposure depend only on her own value of the instrument and not on other people's instrument values and (ii) an individual's outcome only depends on her own value of the exposure and not on other people's exposure  \citep{rubin_comment_1980}.

For individual $j$ in matched set $i$, let $R_{ij}$ be the binary observed outcome and $D_{ij}$ be the observed exposure. The potential outcomes $r_{1ij}^{(d_{1ij})},r_{0ij}^{(d_{0ij})},d_{1ij}$, and $d_{0ij}$ and the observed values $R_{ij}, D_{ij}$, and  $Z_{ij}$ are related by the following equation:
\begin{equation} \label{eq:potObs}
R_{ij} = r_{1ij}^{(d_{1ij})} Z_{ij} + r_{0ij}^{(d_{0ij})}(1 - Z_{ij}) \qquad{} D_{ij} = d_{1ij} Z_{ij} + d_{0ij} (1 - Z_{ij})
\end{equation}
For individual $j$ in matched set $i$, let $\mathbf{X}_{ij}$ be a vector of observed covariates and $u_{ij}$ be the unobserved covariates. For example, in the malaria data, $\mathbf{X}_{ij}$ represents each child's covariates listed in Table \ref{tab:char} while $u_{ij}$ is an unmeasured confounder, like diet, which was mentioned in Section \ref{sec:intro.motivatingExample}. We define the set $\mathcal{F} = \{(r_{1ij}^{(d_{1ij})},r_{0ij}^{(d_{0ij})},d_{1ij},d_{0ij},\mathbf{X}_{ij},u_{ij}),i=1,...,I,j=1,...,n_i\}$ to be the collection of potential outcomes and all covariates/confounders, observed and unobserved.

\subsection{Full matching algorithm} \label{sec:fullmatch}
A matching algorithm controls the bias resulting from different observed covariates by creating $I$ matched sets indexed by $i$, $i=1,\ldots, I$ such that individuals within each matched set have similar covariate values $\mathbf{x}_{ij}$ and the only difference between individuals in each matched set is their instrument values, $Z_{ij}$. In a full matching algorithm, each matched set $i$ either contains $m_i =1$ individual with $Z_{ij} = 1$ and $n_i - 1$ individuals with $Z_{ij} = 0$ or $m_i = n_i -1$ individuals with $Z_{ij} = 1$ and 1 individual with $Z_{ij} = 0$.

\citet{rosenbaum_observational_2002, rosenbaum_design_2010}, \citet{hansen_full_2004}, and \citet{stuart_matching_2010} provide an overview of matching and a discussion on various distance metrics and tools to measure similarity for observed and missing covariates. For the malaria data, Section \ref{sec:data.match} describes how we used propensity score caliper matching with rank-based Mahalanobis distance to measure covariate similarity. Once we have obtained the distance matrix, we use an R package available on CRAN called \emph{optmatch} developed by \citet{hansen_optimal_2006} to find the optimal full matching.

\subsection{Conditions for sickle cell trait as a valid instrument} \label{sec:defIV}
We formalize the core assumptions of an instrumental variable below \citep{holland_causal_1988, angrist_identification_1996,yang_dissonant_2014} (see Figure \ref{fig:1}).
\begin{enumerate}
\item[(A1)] The instrument must be associated with the exposure, or in $\mathcal{F}$, $\sum_{i=1}^I$ $\sum_{j=1}^{n_i} (d_{1ij} - d_{0ij}) \neq 0$
\item[(A2)] The instrument can only affect the outcome if it affects the exposure. Since the $r$'s don't depend on $z$ under this assumption, we write $r_{1ij}^{(k)} = r_{0ij}^{(k)} \equiv r_{ij}^{(k)}$ for all $k$ in $\mathcal{F}$ (exclusion restriction)
\item[(A3)] The instrument is effectively randomly assigned within a matched set, $P(Z_{ij} = 1 | \mathcal{F}, \mathcal{Z}) = m_i/n_i$ for each $i$.
\end{enumerate}
In Figure \ref{fig:1}, (A1) corresponds to there being an association between the instrument and the exposure, (A2) corresponds to that all directed pathways from the instrument to the outcome pass through the exposure and (A3) corresponds to the instrument, conditional on measured variables, being unassociated with unmeasured variables that are associated with the outcome.

We now assess the validity of (A1)-(A3) for the sickle cell trait, the instrument for our analysis on the effect of malaria on stunting. For assumption (A1), there is substantial evidence that the sickle cell trait does provide protection against malaria as compared to people with two normal copies of the HBB gene (HbAA) \citep{aidoo_protective_2002, williams_immune_2005, may_hemoglobin_2007, cholera_impaired_2008, kreuels_differing_2010}. For assumption (A2), this could be violated if the sickle cell trait had effects on stunting other than through causing malaria, for instance, if the sickle cell trait was pleiotropic  \citep{davey_smith_mendelian_2003}. We can partially test this assumption by examining individuals who carry the sickle cell trait, but who grew up in a region where malaria is not present. That is, if assumption (A2) were violated, heights between individuals with HbAS and HbAA in such a region would be different since there would be a direct arrow between the sickle cell trait and height. We examined studies among African American children and children from the Dominican Republic and Jamaica for whom the sickle cell trait is common, but there is no malaria in the area. These two regions also match nutritional and socioeconomic conditions that are closer to our study population in Ghana so that the populations (and subsequent subpopulations among them) are comparable. From these studies from the regions, we found no evidence that the sickle cell trait affected a child's physical development \citep{ashcroft_growth_1976,kramer_growth_1978, ashcroft_heights_1978, rehan_growth_1981}. This supports the validity of assumption (A2). 

Although the results of this test support the validity of (A2), (A2) could still be violated. For example, the regions we use to support assumption (A2) may be different than Ghana through unmeasured characteristics, which would make the populations incomparable. As another example, the sickle cell trait could have a direct effect that interacts with the environment in such a way that the direct effect is only present in Africa, but not in the United States, the Dominican Republic, or Jamaica. One specific point of concern raised by a referee is iron supplements. In the malaria study that we are considering, children with low hemoglobin received iron supplements and iron supplements can reduce the risk of stunting. A potential concern is that the sickle cell trait may induce a child to have low iron levels, thereby increasing the risk of stunting without going through the malaria pathway in Figure \ref{fig:1} and violating (A2). However, \cite{kreuels_differing_2010} found that in the malaria study we are considering, children carrying HbAA tend to have lower hemoglobin levels than children carrying HbAS. Thus, children with the sickle cell trait,  HbAS, were less likely to receive iron supplements. Consequently, if there's a violation of the exclusion restriction because of iron supplements, it would tend to bias our estimate of the increase in stunting from malaria downwards and our estimate can be regarded as a conservative estimate of the effect of malaria on increasing stunting.

For assumption (A3), this assumption would be questionable in our data if we did not control for any population stratification covariates. Population stratification is a condition where there are subpopulations, some of which are more likely to have the sickle cell trait, and some of which are more likely to be stunted through mechanisms other than malaria \citep{davey_smith_mendelian_2003}. For example, in Table \ref{tab:char} which provides the baseline characteristics for our data, we observed that the village Tano-Odumasi had more children with HbAA than HbAS. It is possible that there are other variables besides HbAA that differ between the village Tano-Odumasi and other villages and affect stunting. Hence, assumption (A3) is more plausible if we control for observed variables like village of birth. Specifically, within the framework of full matching, for each matched set $i$, if the observed variables $\mathbf{x}_{ij}$ are similar among all $n_i$ individuals, it may be more plausible that the unobserved variable $u_{ij}$ plays no role in the distribution of $Z_{ij}$ among the $n_i$ children. If (A3) exactly holds and subjects are exactly matched for $X_{ij}$, then within each matched set $i$, $Z_{ij}$ is simply a result of random assignment where $Z_{ij} = 1$ with probability $m_i / n_i$ and $Z_{ij} = 0$ with probability $(n_i - m_i)/ n_i$ when we condition on the number of units int he matched set with $Z_{ij} = 1$ being $m_i$. In Section \ref{sec:sens}, we discuss a sensitivity analysis that allows for the possibility that even after matching for observed variables, the unobserved variable $u_{ij}$ may still influence the assignment of $Z_{ij}$ in each matched set $i$, meaning that assumption (A3) is violated.

There are also other assumptions associated with instrumental variables, most notably the Stable Unit Treatment Value Assumption (SUTVA) in Section \ref{sec:notation} and the monotonicity assumption in \citet{angrist_identification_1996}. SUTVA, within the framework of MR, states that one individual's potential outcomes are not affected by the exposures and genotype assignments of other individuals given the individual's exposure and genotype assignment, and one individual's potential exposure is not affected by the genotype assignment of other individuals given the individual's own genotype assignment \citep{angrist_identification_1996}. This is fairly reasonable in our setting. The outcome, stunting, given an individual's own malaria exposure and HbAS status, should not be affected by others'  malaria exposure and HbAS. The exposure would be affected by others' HbAs status if HbAS affected malaria transmission. However, there is no evidence that HbAs protects against parasitemia and hence, there is no evidence that HbAS affects transmission; HbAS's effect appears to be limited to protection against severe disease manifestations from malaria (90\%) and mild disease manifestations (30\%) \citep{kreuels_differing_2010, taylor_haemoglobinopathies_2012}. 

Monotonicity, within the framework of MR, states that there are no individuals who would experience an adverse effect on the exposure from inheriting the genotype which is purported to bring positive effect on the exposure. In our study, monotonicity is plausible because there are known biological mechanisms by which the sickle cell genotype protects against malaria \citep{friedman_erythrocytic_1978, friedman_biochemistry_1981, williams_immune_2005, cholera_impaired_2008} and no known mechanisms by which the sickle cell genotype increase the risk of malaria.

\subsection{Effect ratio} \label{sec:effectRatio}
We define the parameter of interest, called the \emph{effect ratio}, which is a parameter of the finite population of $N = \sum_{i=1}^I n_i$ individuals characterized by $\mathcal{F}$.
\begin{equation} \label{eq:effectRatio}
\lambda = \frac{ \sum_{i=1}^I  \sum_{j=1}^{n_i} r_{1ij}^{(d_{1ij})} - r_{0ij}^{(d_{0ij})}}{\sum_{i=1}^I \sum_{j=1}^{n_i} d_{1ij} - d_{0ij}}
\end{equation}
The effect ratio is the change in the outcome caused by the instrument divided by the change in the exposure caused by the instrument. The effect ratio can be identified by taking the ratio of the differences in expected values.
\begin{equation} \label{eq:effectRatioID}
\lambda = \frac{\sum_{i=1}^I \sum_{j=1}^{n_i} E(R_{ij} | Z_{ij} = 1, \mathcal{F}, \mathcal{Z}) - E(R_{ij} | Z_{ij} = 0, \mathcal{F}, \mathcal{Z})}{\sum_{i=1}^I \sum_{j=1}^{n_i} E(D_{ij} | Z_{ij} = 1, \mathcal{F}, \mathcal{Z}) - E(D_{ij} | Z_{ij} = 0, \mathcal{F}, \mathcal{Z})}
\end{equation}
The effect ratio also admits a well-known interpretation in IV literature if all the IV assumptions, (A1)-(A3), and the monotonicity assumption whereby $d_{1ij} \geq d_{0ij}$ for every $i,j$ in $\mathcal{F}$, are satisfied. Specifically, suppose $d_{1ij}$ and $d_{0ij}$ are discrete values from $0$ to $M$, which is the case with the malaria data where $d_{1ij}$ and $d_{0ij}$ are the number of malaria episodes. Then, in Proposition 1 of the supplementary article \citep{kang_supp_2015}, we show that
\begin{equation} \label{eq:effectRatioCACE}
\lambda =  \sum_{i=1}^I \sum_{j=1}^{n_i} \sum_{k=1}^M (r_{ij}^{(k)} - r_{ij}^{(k-1)}) w_{ijk} 
\end{equation}
where 
\[
w_{ijk} = \frac{\chi(d_{1ij} \geq k > d_{0ij})}{\sum_{i=1}^I \sum_{j=1}^{n_i} \sum_{l=1}^M \chi(d_{1ij} \geq l > d_{0ij})}
\]
and $\chi(\cdot)$ is an indicator function. In words, with the IV assumptions and the monotonicity assumption, the effect ratio is interpreted as the weighted average of the causal effect of a one unit change in the exposure among individuals in the study population whose exposure would be affected by a change in the instrument. Each weight $w_{ijk}$ represents whether an individual $j$ in stratum $i$'s exposure would be moved from below $k$ to at or above $k$ by the instrument, relative to the number of people in the study population whose exposure would be changed by the instrument. For example, if $\lambda = 0.1$ in the malaria data and we assume the said conditions, $0.1$ is the weighted average reduction in stunting from a one-unit reduction in malaria episodes among individuals who were protected from malaria by the sickle cell trait. Similarly, each weight $w_{ijk}$ represents the $j$th individual in $i$th stratum's  protection from at least $k$ malaria episodes by carrying the sickle cell trait compared to the overall number of individuals who are protected from varying degrees of malaria episodes by carrying the sickle cell trait. In short, the interpretation of $\lambda$ is akin to Theorem 1 in \citet{angrist_two-stage_1995}, except that our result is for the finite-sample case and is specific to matching.

Also, with regards to identification, technically speaking, only assumptions (A1) and (A3) are necessary to identify the `bare-bone' interpretation of $\lambda$ in \eqref{eq:effectRatio}, the ratio of causal effects of the instrument on the outcome (numerator) and on the exposure (denominator) since the numerator and the denominator can both be identified by the differences in expectations in \eqref{eq:effectRatioID}. However, without (A2), i.e. the exclusion restriction, and the monotonicity assumption, this ratio of differences in expectations in \eqref{eq:effectRatioID} cannot identify the weighted average \eqref{eq:effectRatioCACE} of effects of the exposure described in the above paragraph.

When full matching is used so that all subject are used in the matching, the effect ratio \eqref{eq:effectRatio} and its equivalent expression \eqref{eq:effectRatioCACE} are defined for the whole study population. Additionally, the effect ratio is invariant to the particular full match it used. For instance, if a different distance between pairs of subjects were used that resulted in a different full match, the effect ratio would remain the same. Also, one of the advantages of using full matching compared to other matching algorithms that discard some data, such as pair matching, matching with fixed controls, and matching with variable controls, is that full matching estimates the effect ratio \eqref{eq:effectRatio} (or equivalently \eqref{eq:effectRatioCACE}) for the whole study population whereas for the matching methods that discard data, these methods only estimate \eqref{eq:effectRatio} for the data that was not discarded, making the parameter estimate dependent on the individuals that were discarded from the matching algorithm. In contrast, the full matching algorithm incorporates all the individuals in the data and the effect ratio parameter, specifically the subscripts $i,j$ count all the individuals in the data. In fact, the effect ratio \eqref{eq:effectRatio} generalizes previous expressions for the effect ratio with pair matching, $n_i = 2$, by \citet{baiocchi_building_2010} or matching with fixed controls, $n_i = c$, by \citet{kang_causal_2013}. 

\subsection{Inference for effect ratio} \label{sec:inferenceEffectRatio}
We would like to conduct the following hypothesis test for the effect ratio $\lambda$.
\begin{equation} \label{eq:hyp}
H_0: \lambda = \lambda_0, \quad{} H_a: \lambda \neq \lambda_0
\end{equation}
To test the hypothesis in \eqref{eq:hyp}, we propose the following test statistic
\begin{equation} \label{eq:testStat}
T(\lambda_0) = \frac{1}{I} \sum_{i=1}^I V_{i}(\lambda_0) 
\end{equation}
where 
\[
V_{i}(\lambda_0)= \frac{n_i}{m_i} \sum_{j=1}^{n_i} Z_{ij} (R_{ij} - \lambda_0 D_{ij}) - \frac{n_i}{n_i - m_i}\sum_{j=1}^{n_i} (1 - Z_{ij}) (R_{ij} - \lambda_0 D_{ij})
\]
and $S^2(\lambda_0)$, the estimator for the variance of the test statistic, $Var\{T(\lambda_0) | \mathcal{F},\mathcal{Z}\}$ 
\begin{equation} \label{eq:estVar}
S^2(\lambda_0) = \frac{1}{I(I-1)}\sum_{i=1}^I \{V_{i}(\lambda_0) - T(\lambda_0)\}^2
\end{equation} 
Each variable $V_{i}(\lambda_0)$ is the difference in adjusted responses, $R_{ij} - \lambda_0 D_{ij}$, of those individuals with $Z_{ij} = 1$ and those with $Z_{ij} = 0$. Under the null hypothesis in \eqref{eq:hyp}, these adjusted responses have the same expected value for $Z_{ij} = 1$ and $Z_{ij} = 0$ and thus, deviation of $T(\lambda_0)$ from zero suggests $H_0$ is not true. 

Proposition 2 in the supplementary article \citep{kang_supp_2015} states that under regularity conditions, the asymptotic null distribution of $T(\lambda_0)/S(\lambda_0)$ is standard Normal. This provides a point estimate as well as a confidence interval for the effect ratio. For the point estimate, in the spirit of \citet{hodges_estimation_1963}, we find the value of $\lambda$ that maximizes the p-value, Specifically, setting $T(\lambda) / S(\lambda) = 0$ and solving for $\lambda$ gives an estimate for the effect ratio, $\hat{\lambda}$
\[
\hat{\lambda} = \frac{\sum_{i=1}^I \frac{n_i^2}{m_i (n_i - m_i)} \sum_{j=1}^{n_i} (Z_{ij} - \bar{Z}_{i.}) (R_{ij} - \bar{R}_{i.})}{\sum_{i=1}^I \frac{n_i^2}{m_i(n_i - m_i)} \sum_{j=1}^{n_i} (Z_{ij} - \bar{Z}_{i.}) (D_{ij} - \bar{D}_{i.})}
\]
where $\bar{Z}_{i.}, \bar{R}_{i.}$, and $\bar{D}_{i.}$ are averages of the instrument, response, and exposure, respectively, within each matched set. For confidence interval estimation, say 95\% confidence interval, we can solve the equation $T(\lambda)/S(\lambda) = \pm 1.96$ for $\lambda$ to get the confidence interval for the effect ratio. A closed form solution for the confidence interval is provided in Corollary 1 of the supplementary article \citep{kang_supp_2015}.

For our analysis of the malaria data, the regularity conditions, specifically the moment conditions in Proposition 2 of the supplementary article \citep{kang_supp_2015} (i.e.  $V_{i}^4(\bar{\lambda})$ is uniformly bounded), are automatically met because the responses are binary (i.e. stunted or not stunted) and the malaria episodes are bounded whole numbers. Hence, Proposition 2 and its subsequent Corollary 1 from the supplementary article \citep{kang_supp_2015} are used to compute the point estimate, the p-value, and the confidence intervals for the casual effect of malaria on stunting. Note that the inferences we develop for the effect ratio allow for non-binary outcomes and exposures, even though our malaria data have binary outcomes and whole-number exposures. 

\subsection{Sensitivity analysis} \label{sec:sens}
Sensitivity analysis attempts to measure the influence of unobserved confounders on the inference on $\lambda$. In the case of instrumental variables, a sensitivity analysis quantifies how a violation of assumption (A3) in Section \ref{sec:defIV} would impact the inference on $\lambda$ \citep{rosenbaum_observational_2002}. Specifically, under assumption (A3), the instrument is assumed to be free from unmeasured confounders or free after conditioning on observed confounders via matching. The latter implies that the instruments are assigned randomly, $P(\mathbf{Z} = z | \mathcal{F}, \mathcal{Z}) = (|\Omega|)^{-1}$, i.e. that within each matched set $i$, $P(Z_{ij} = 1 | \mathcal{F}, \mathcal{Z}) = m_i / n_i$. 

However, as discussed in Section \ref{sec:defIV}, even after matching for observed confounders, unmeasured confounders may influence the viability of assumption (A3). For example, with the malaria study, within a matched set $i$ , two children, $j$ and $k$, may have the same birth weights, be from the same village, and have the same covariate values ($\mathbf{x}_{ij} = \mathbf{x}_{ik}$), but have different probabilities of carrying the HbAS genotype, $P(Z_{ij} = 1 | \mathcal{F}) \neq P(Z_{ik}=1 | \mathcal{F})$ due to unmeasured confounders, denoted as $u_{ij}$ and $u_{ik}$ for the $j$th and $k$th unit, respectively. Despite our best efforts to minimize the observed differences in covariates and to adhere to assumption (A3) after conditioning on the matched sets, unmeasured confounders such as a child's family's ancestry could still be different between the $j$th and $k$th child, and this difference could make the inheritance of the sickle cell trait depart from randomized assignment, violating assumption (A3).

To model this deviation from randomized assignment due to unmeasured confounders, let $\pi_{ij} = P(Z_{ij} = 1 | \mathcal{F})$ and $\pi_{ik} = P(Z_{ik} = 1 | \mathcal{F})$ for each unit $j$ and $k$ in the $i$th matched set. The odds that unit $j$ will receive $Z_{ij} =1$ instead of $Z_{ij} = 0$ is $\pi_{ij}/(1 - \pi_{ij})$. Similarly, the odds for unit $k$ is $\pi_{ik}/(1 - \pi_{ik})$. Suppose the ratio of these odds is bounded by $\Gamma \geq 1$
\begin{equation} \label{eq:gamma}
\frac{1}{\Gamma} \leq \frac{\pi_{ij}  (1 -\pi_{ik})}{\pi_{ik} (1 - \pi_{ij})} \leq \Gamma
\end{equation}
If unmeasured confounders play no role in the assignment of $Z_{ij}$, then $\pi_{ij} = \pi_{ik}$ and $\Gamma = 1$. That is, child $j$ and $k$ have the same probability of receiving $Z_{ij} =1$ in matched set $i$. If there are unmeasured confounders that affect the distribution of $Z_{ij}$, then $\pi_{ij} \neq \pi_{ik}$ and $\Gamma > 1$. For a fixed $\Gamma > 1$, we can obtain lower and upper bounds on $\pi_{ij}$, which can be used to derive the null distribution of $T(0)/S(0)$ under $H_0: \lambda = 0$ in the presence of unmeasured confounding and be used to compute a range of possible p-values for the hypothesis $H_0: \lambda = 0$ \citep{rosenbaum_observational_2002}. The range of p-values indicates the effect of unmeasured confounders on the conclusions reached by the inference on $\lambda$. If the range contains $\alpha$, the significance value, then we cannot reject the null hypothesis at the $\alpha$ level when there is an unmeasured confounder with an effect quantified by $\Gamma$. In addition, we can amplify the interpretation of $\Gamma$ using \citet{rosenbaum_amplification_2009} to get a better understanding of the impact of the unmeasured confounding on the outcome and the instrument (see the supplementary article \citep{kang_supp_2015} for the derivation of the sensitivity analysis and the amplification of $\Gamma$) 

\section{Simulation Study} 
\subsection{Robustness of our method} \label{sec:simulation.2sls}
One of the advantages of matching based IV estimation versus traditional IV estimation, such as conventional 2SLS without matching, is its robustness to parametric assumptions between the outcome and the covariates. Specifically, for conventional 2SLS, in order for the estimate to be consistent, the covariates must have a linear effect on the expected outcome. In contrast, matching-based IV estimation puts no constraints on the structure of the relationship between the outcome and the covariates. In this section, we study this phenomena in detail through a simulation study.

Let the outcome $R_{ij}$, the exposure $D_{ij}$, the observed covariates $\mathbf{X}_{ij}$, and the instrument $Z_{ij}$ be generated based on the following model known as the structural equations model in econometrics \citep{wooldridge_econometrics_2010}.
\begin{align*}
\left.
\begin{array}{r@{\mskip\thickmuskip}l}
R_{ij} &=  \alpha + \beta D_{ij} + f(\mathbf{X}_{ij}) + \epsilon_{ij} \\
D_{ij} &= \kappa + \pi Z_{ij} + \bm{\rho}^T \mathbf{X}_{ij} + \xi_{ij}
\end{array}
\quad{},
\begin{array}{r@{\mskip\thickmuskip}l}
\begin{pmatrix}
\epsilon_{ij} \\
\xi_{ij} 
\end{pmatrix} &\iid N \left( \begin{bmatrix} 0 \\ 0 \end{bmatrix}, \begin{bmatrix} 1 & 0.8 \\ 0.8 & 1\end{bmatrix} \right)
\end{array}\right.
\end{align*}
where the parameters $\alpha,\beta,\kappa$ and $\bm{\rho}$ are all fixed throughout the simulation. The parameters $\alpha$ and $\kappa$ are intercepts. The parameter $\beta$ is the quantity of interest, the effect of the exposure on the outcome, and is also equal to the effect ratio (see Section 1 of the supplementary article \citep{kang_supp_2015} for details). The parameter $\pi$ quantifies the strength of the instrument. The function $f(\cdot)$ is a pre-defined function that takes in a vector of observed covariates $\mathbf{X}_{ij}$ and produces a scalar value that affects the outcome, $R_{ij}$. In the simulation, $\mathbf{X}_{ij}$, are five-dimensional vectors or $\mathbf{X_{ij}} = (X_{ij1},\ldots,X_{ij5})$. Also, we consider the following list of functions parametrized by $\bm{\gamma} \in \reals^5$
\begin{enumerate}
\item[(a)] Linear function: $f(\mathbf{X}_{ij}) = \sum_{k=1}^5 \gamma_k X_{ijk}$
\item[(b)] Quadratic function: $f(\mathbf{X}_{ij}) = \sum_{k=1}^5 \gamma_k X_{ijk}^2  $
\item[(c)] Cubic function: $f(\mathbf{X}_{ij}) = \sum_{k=1}^5 \gamma_{k} X_{ijk}^3$
\item[(d)] Exponential function: $f(\mathbf{X}_{ij}) = \sum_{k=1}^5 \gamma_{k} \exp(X_{ijk})$
\item[(e)] Log function: $f(\mathbf{X}_{ij}) = \sum_{k=1}^5 \gamma_{k} \log(|X_{ijk}|)$
\item[(f)] Logistic function: $f( \mathbf{X}_{ij}) = \frac{1}{1 + \exp(- \sum_{k=1}^5 X_{ijk} \gamma_k )}$
\item[(g)] Truncated function: $f(\mathbf{X}_{ij}) = \sum_{k=1}^{5} \gamma_k \chi(X_{ijk} \geq 0)$ where $\chi(\cdot)$ is an indicator function.
\item[(h)] Square root function: $f(\mathbf{X}_{ij}) = \sum_{k=1}^5 \gamma_k \sqrt{|X_{ijk}|}$
\end{enumerate}
To generate $\mathbf{X}_{ij}$, we adopt the following scheme. For individuals with $Z_{ij} =0$, $\mathbf{X}_{ij}$ comes from a five-dimensional multivariate Normal distribution with mean $(0,\ldots,0)$ and an identity covariance matrix. For individuals with $Z_{ij} = 1$, $\mathbf{X}_{ij}$ comes from a five-dimensional multivariate Normal with mean $(1,0,\ldots,0)$ and an identity covariance matrix. The instruments, $Z_{ij}$, are generated randomly with $P(Z_{ij}=1) = 1/8$ and $P(Z_{ij} = 0) = 7/8$, similar to that observed in our malaria data. For each generated data set, we compute the estimate of $\beta$ using 2SLS and our procedure. 2SLS is based on (i) regressing $D_{ij}$ on $Z_{ij}$ and $X_{ij}$ to obtain the predicted value of $D_{ij}$, say $\hat{D}_{ij}$, and (ii) regressing $R_{ij}$ on $\hat{D}_{ij}$ and $\mathbf{X}_{ij}$. We simulate this process 5000 times and compute the estimates of $\beta$ produced by the two procedures. We measure the performance of the two procedures by computing the median absolute deviation, the absolute bias of the median (i.e. the absolute value of the bias of the median estimate with respect to $\beta$), and the Type 1 error rate over 5000 simulations. For each simulation study, we vary the function $f(\cdot)$ and $\pi$.

\begin{figure} 
\includegraphics[scale=0.55]{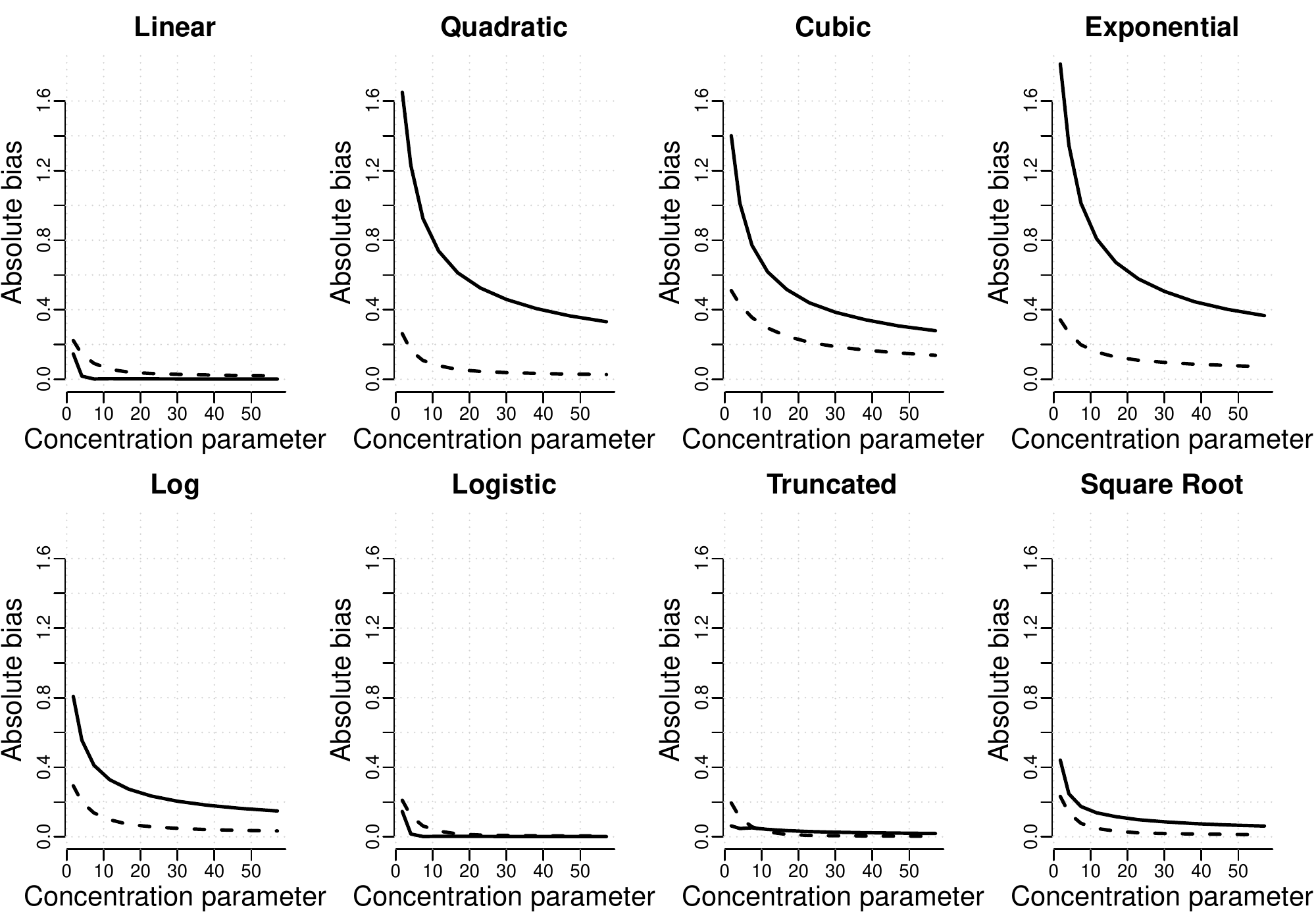}
\caption[]{Absolute bias of the median for our method vs. two stage least squares (2SLS) for different concentration parameters. The solid line indicates 2SLS and the dashed line indicates our method.}
\label{fig:strengthBias}
\end{figure}

Figures \ref{fig:strengthBias} and \ref{fig:strengthTypeI} compare performances between 2SLS and our method when we fix the sample size, but vary the strength of the instrument (i.e. the strength of the effect of the instrument on the treatment) via $\pi$. Specifically, we evaluate the strength of the instrument using a popular measure known as the concentration parameter \citep{bound_problems_1995}. High values of the concentration parameter indicate a strong instrument while low values of it indicate a weak instrument. The concentration parameter is the population value of the first stage partial F statistic for the instruments when the treatment is regressed on the instrument and the measured covariates $\mathbf{X}_{ij}$; this first stage F statistic is often used to check instrument strength where an F below $10$ suggests that the instruments are weak \citep{stock_survey_2002}. The sample size is fixed at $800$ where $100$ individuals have $Z_{ij} = 1$ and $700$ individuals have $Z_{ij} = 0$, similar to the sample size presented in the malaria data. We also vary $f(\cdot)$ based on the functions listed in the previous paragraph.

Figure \ref{fig:strengthBias} measures the absolute bias of the median for 2SLS and our method. When $f(\cdot)$ is a linear function of the observed covariates $\mathbf{x}_{ij}$, 2SLS does slightly better than our method. 2SLS doing well for the linear function is to be expected since 2SLS is consistent when the model is linear. However, if $f(\cdot)$ is non-linear, our matching estimator does better than 2SLS and is never substantially worse for all instrument strengths. For example, for quadratic, cubic, exponential, log, and square root functions, our method has lower bias than 2SLS for all strengths of the instrument. For logistic and truncated functions, our method is similar in performance to 2SLS for all strengths of the instrument. In the supplementary article \citep{kang_supp_2015}, we also measure the median absolute deviation of 2SLS and our method and we find that the price we pay for lower bias of our method in a slight increase in dispersion compared to 2SLS.

\begin{figure}
\includegraphics[width=\textwidth]{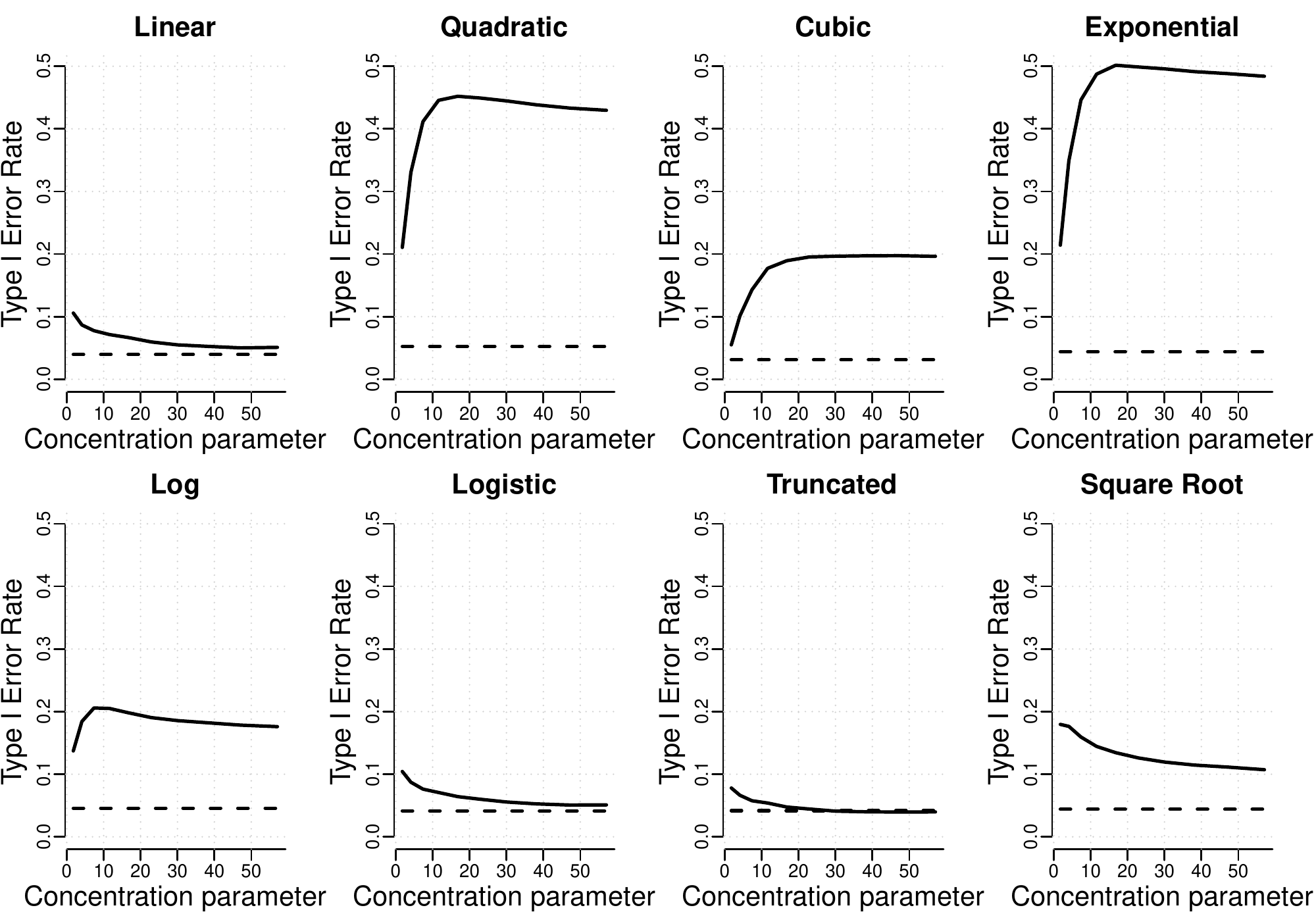}
\caption[]{Type I error rate for our method vs. two stage least squares (2SLS) for different concentration parameters. The solid line indicates 2SLS and the dashed line indicates our method.}
\label{fig:strengthTypeI}
\end{figure}
Finally, Figure \ref{fig:strengthTypeI} measures the Type I error rate of 2SLS and our method. Regardless of the function type and the instrument strength, our method retains the nominal $0.05$ rate. In fact, even for the linear case where 2SLS is designed to excel, our estimator has the correct Type I error rate for all instrument strengths while 2SLS has higher Type I error for weak instruments. For all the non-linear functions, the Type I error rate for 2SLS remains above the 0.05 line while our estimator maintains the nominal Type I error rate. This provides evidence that our estimator will have the correct 95\% coverage for confidence intervals regardless of non-linearity or instrument strength.

In summary, the simulation study shows promise that our method is generally more robust to assumptions about instrument strength and linearity between the outcome and the covariates than 2SLS at the expense of a small increase in dispersion.

\subsection{Comparison to \citet{frolich_nonparametric_2007}} \label{sec:simulation.frolich}

In addition to comparing our method against the most popular IV estimator, 2SLS, we also compare our method to the non-parametric IV method of \citet{frolich_nonparametric_2007} implemented by \citet{frolich_estimation_2010}. The simulation setup is identical to Section \ref{sec:simulation.2sls}, except that we discretize the exposure value $D_i$ so that we can compare our method to the method in \citet{frolich_nonparametric_2007}. Specifically, let $D_{ij}^*$ be defined as $D_{ij}$ in Section \ref{sec:simulation.2sls}, i.e. $D_{ij}^* = \kappa + \pi Z_{ij} + \mathbf{\rho}^T \mathbf{X}_{ij} + \xi_{ij}$. Then, we define
\[
D_{ij} = \chi(D_{ij}^* < -1) + 2\chi(-1 \leq D_{ij}^* < 1) + 3\chi(1 \leq D_{ij}^*)
\]
where $\chi(\cdot)$ is the indicator function. The response $R_{ij}$ is generated from the same model as in Section \ref{sec:simulation.2sls}, except with a discretetized $D_{ij}$. The rest of the data generating process is identical to Section \ref{sec:simulation.2sls}. 

For each simulated data, we use the code provided by \citet{frolich_estimation_2010} to generate an estimate for $\beta^*$, the local average treatment effect, with the default settings for the tuning parameters. We also use our method to estimate $\beta^*$. Finally, for comparison, we run 2SLS on the simulate data. As before, we measure the median absolute deviation and the absolute bias of the median. For each simulation study, we vary the function $f(\cdot)$ and $\pi$, the strength of the instrument.

Figures \ref{fig:frolichBias} and \ref{fig:frolichMAD} show the absolute bias and median absolute deviation between the three methods. Generally speaking, both our method and \citet{frolich_nonparametric_2007}'s method do better than 2SLS when $f(\cdot)$ is non-linear. Between our method and \citet{frolich_nonparametric_2007}'s method, in most cases, our method is better or similar to the \citet{frolich_nonparametric_2007}'s method when it comes to bias. With regards to variability, our method and \citet{frolich_nonparametric_2007}'s method are very similar to each other. For the quadratic, cubic, and exponential functions, our simulations show that our method dominates both in bias and variance compared to \citet{frolich_nonparametric_2007}. Further details of the simulation in this Section can be found in the supplementary article \citep{kang_supp_2015}.

\begin{figure}
\includegraphics[width=\textwidth]{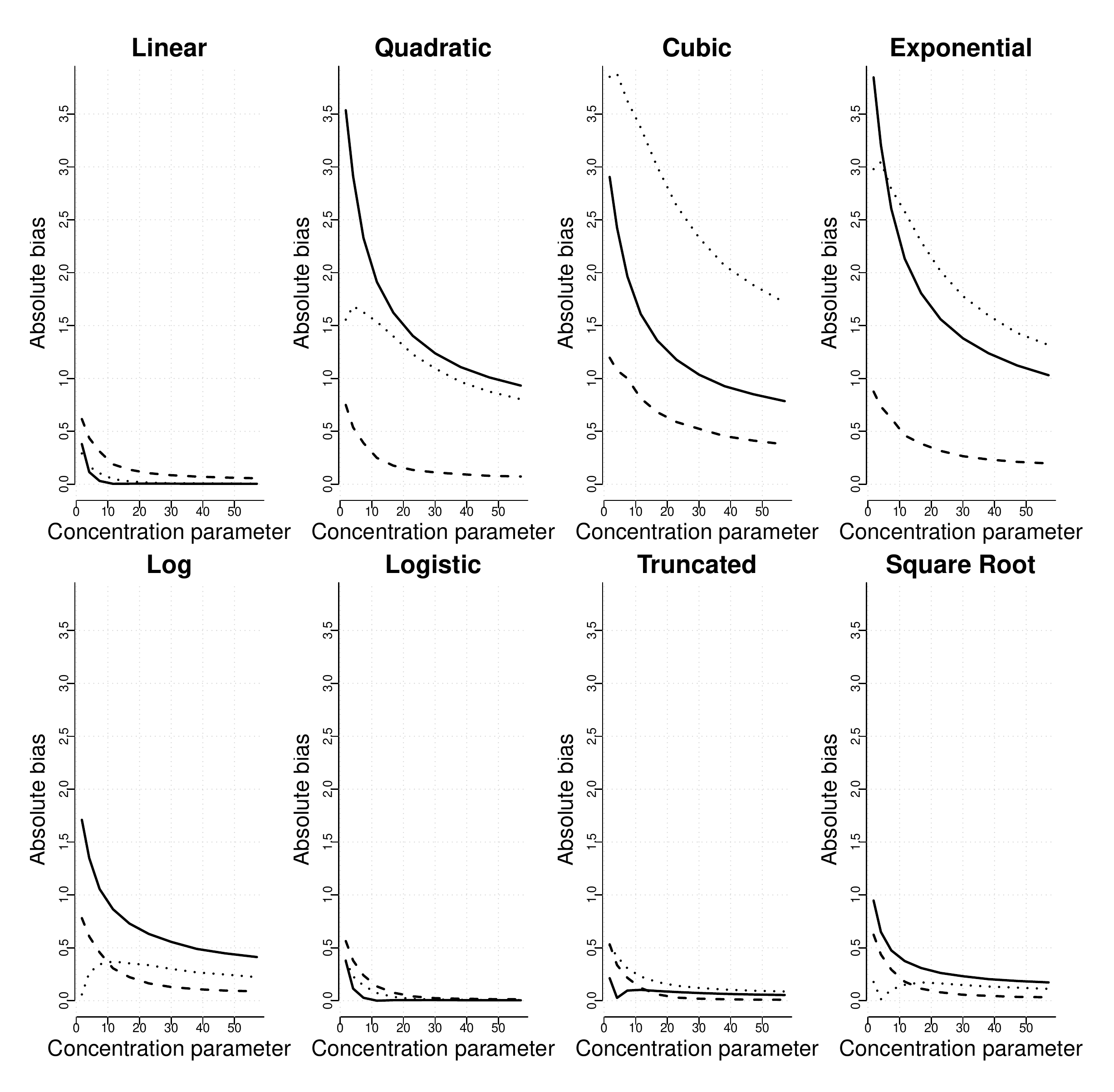} 
\caption[]{Absolute bias of the median for our method, 2SLS, and Fr\"{o}lich's method for different concentration parameters. The solid line indicates 2SLS, the dashed line indicates our method, and the dotted line indicates Fr\"{o}lich's method.}
\label{fig:frolichBias}
\end{figure}

\begin{figure}
\includegraphics[width=\textwidth]{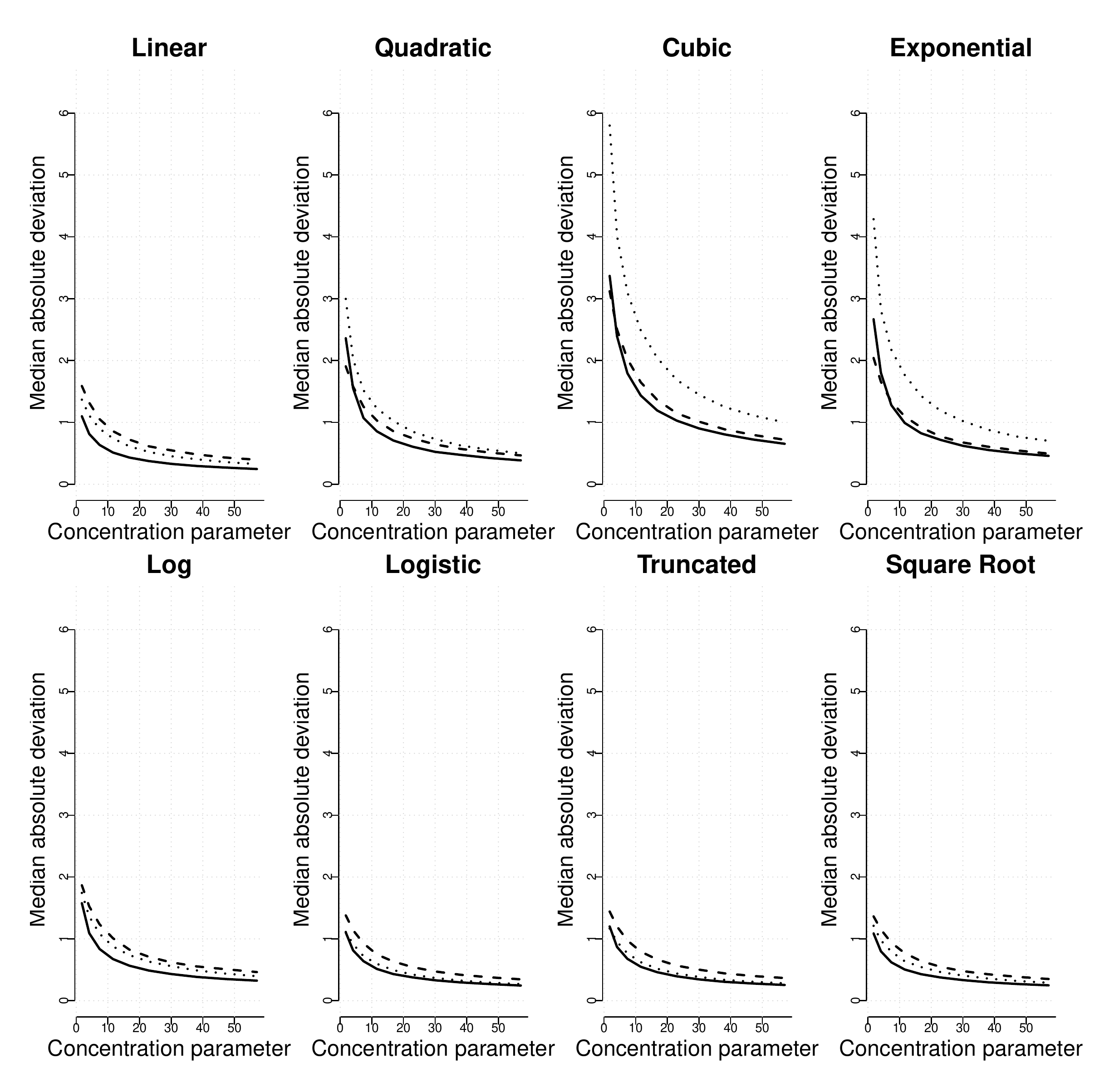}
\caption[]{Median absolute deviation between our method, 2SLS, and Fr\"{o}lich's method for different concentration parameters. The solid line indicates 2SLS, the dashed line indicates our method, and the dotted line indicates Fr\"{o}lich's method.}
\label{fig:frolichMAD}
\end{figure}

\section{Data Analysis of the Causal Effect of Malaria on Stunting} \label{sec:data}
\subsection{Background information} \label{sec:data.background}
Using the new full matching IV method in this paper, we analyze the data set introduced in Section \ref{sec:intro.motivatingExample} to study the causal effect of malaria on stunting. Following \citet{kreuels_sickle_2009}, we only consider children with the heterozygous strand HbAS, the sickle cell trait, or wildtype HbAA and exclude children with the homozygous strand (HbSS), or a different mutation on the same gene leading to hemoglobin C (HbAC, HbCC, HbSC); this reduced the sample size from $1070$ to $884$. Among $884$ children, $110$ children carried HbAS and $774$ children carried HbAA. 

The instrument was a binary variable indicating either the HbAS or HbAA genotype. The exposure of interest was the malarial history, which was defined as the total number of malarial episodes during the study. A malaria episode was defined as having a parasite density of more than 500 parasites/$\mu$l and a body temperature greater than 38$^{\circ}$C or the mother reported a fever within the last 48 hours. The outcome was whether the child was stunted at the last recorded visit, which occurred when the child was approximately two years old. The difference in episodes of malaria between children with HbAS and HbAA is significant (Risk ratio: 0.82, p-value: 0.02, 95\% CI: (0.70, 0.97)), indicating that the sickle cell trait instrument satisfies Assumption (A1) of being associated with the exposure. 

\begin{sidewaystable}
\caption{Characteristics of study participants at recruitment. P-values were obtained by doing a Pearson's chi-squared test for categorical covariates and two-sample t tests for numerical variables. *** corresponds to a p-value of less than 0.01, ** corresponds to a p-value between 0.01 and 0.05, and * corresponds to a p-value between 0.05 and 0.1.}
\label{tab:char}
\begin{tabular}{p{6.5cm}  l l}
& HbAS ($n=110$) & HbAA ($n=774$)  \\ \cline{2-3} \\
Birth weight (Mean,(SD))& 3112.44 (381.9) (32 missing) & 2978.7 (467.9) (239 missing) *** \\
Sex (Male/Female) & 46.4\% Male & 51.0\% Male \\
Birth season (Dry/Rainy) & 56.4\% Dry & 55.3\% Dry \\
Ethnic group  (Akan/Northerner) & 86.4\% Akan  & 88.8 \% Akan (4 missing) \\
$\alpha$-globin genotype (Norm/Hetero/Homo) & 75.7\% / 21.5\% /  2.8\% (3 missing) & 74.4\% / 23.1\% / 2.6\% (29 missing)\\
Village of residence: & & \\
~~  Afamanso &4.6 \% & 4.8\% \\
~~  Agona &10.0\% & 13.6\% \\
~~  Asamang& 13.6\% & 11.1\% \\
~~  Bedomase & 5.5\% & 4.5\% \\
~~  Bipoa & 14.5 \% & 10.7\%  \\
~~  Jamasi & 15.5 \% & 13.8\%  \\
~~  Kona & 16.4 \% & 12.8\%  \\
~~  Tano-Odumasi & 4.5 \% & 12.3\%** \\
~~  Wiamoase &15.5 \% & 16.4\%  \\ 
Mother's occupation (Nonfarmer/Farmer) & 79.0\% Nonfarmer & 78.0\% Nonfarmer (11 missing) \\
Mother's education (Literate/Illiterate) & 91.7\% Literate (2 missing) & 90.5\% Literate (8 missing) \\
Family's financial status (Good/Poor) & 69.1\% Good (13 missing)& 70.1\% Good  (84 missing)\\ 
Mosquito protection (None/Net/Screen) & 55.7\% / 32.0\% / 12.4\% (13 missing) & 45.4\%* / 35.1\% / 19.5\%  (76 missing) \\ 
Sulphadoxine pyrimethamine (Placebo/SP) & 49.1\% Placebo & 50.1\% Placebo \\
\end{tabular}
\end{sidewaystable}

Table \ref{tab:char} summarizes all the measured covariates in this data. All the covariates were collected at the beginning of the study, which is three months after the child's birth. We will match on all these covariates for reasons that will be explained below. Broadly speaking, for valid inference of the causal effect using instrumental variables, we would like to include all confounders for the instrument-outcome relationship, i.e. covariates that are determined before (or at the same time and not affected by) the sickle cell trait and that are associated with the outcome. The following covariates in Table \ref{tab:char}, village of residence, sex, ethnicity, birth season, and alpha-globin genotype, represent such potential confounders. They occur before (or at the same time and are not affected by) the sickle cell trait and they could be associated with the outcome of stunting through population stratification. If these covariates were the only instrument-outcome confounders, then we would not need to consider matching for other covariates. 

However, other possible confounders in our data include family's socioeconomic status and parents' sickle cell genotype. Family's socioeconomic status may be associated with the sickle cell trait through population stratification and can affect the outcome of stunting through the nutrition and hygienic environment of the child. Parents' sickle cell genotype is associated with the child's sickle cell genotype because of the properties of genetic inheritance and may be associated with the outcome of stunting through population stratification. Although these two possible confounders were not measured at the time of instrument assignment (i.e. the child's conception), the following covariates in Table \ref{tab:char}, birthweight, mother's occupation, mother's education, family's financial status, and mosquito protection are proxies for these variables. Specifically, mother's occupation, mother's education, and family's financial status measured three months after the child's birth are proxies of family's socioeconomic status at the time of the child's conception. Mosquito protection and birthweight are proxies for parents' sickle cell genotype. In particular, mosquito protection at the time of the child's conception (i.e. whether the family's home is protected by nets, screens, or nothing) may be associated with parents' sickle cell genotype because a family might be less likely to seek additional mosquito protection if members of the family are naturally protected by being carriers of the sickle cell genotype; one can see in Table \ref{tab:char} that children carrying HbAS tend to have less mosquito protection than child carrying HbAA. Birthweight may be associated with maternal sickle cell genotype because a mother having HbAS may be protected from malaria during pregnancy, which may increase birthweight \citep{eisele_malaria_2012}. 

But, matching on covariates that are measured or determined after the instrument such as birthweight, mosquito protection, and family's socioeconomic status three months after the child's birth could create bias if the instrument could alter these values \citep{rosenbaum_consequences_1984}. However, we think the child's sickle cell trait instrument does not alter these covariates because children are generally protected from malaria in the first three months of life due to maternal antibodies \citep{snow_risk_1998} and parents were generally not aware of the child's sickle cell genotype. Consequently, the child's sickle cell genotype does not affect the child's birthweight, family decisions about mosquito protection, and family's socioeconomic status at the time the child is three months old and these variables are effectively pre-instrument covariates so that matching for them does not create bias \citep{rosenbaum_consequences_1984,holland_statistics_1986}. In short, we match for all the covariates in Table \ref{tab:char} because they are either pre-instrument potential confounders or effectively pre-instrument proxies for unmeasured potential confounders.

Finally, we note that some of the covariates in Table \ref{tab:char} may not be highly associated with the sickle cell trait genotype. For example, sulpadoxine pyrimethamine vs. placebo was randomly assigned as part of a randomized trial. However, we still have chosen to match on all the covariates because each covariate may be associated with the outcome and matching a covariate that is associated with the outcome increases efficiency and reduces sensitivity to unobserved biases \citep{rosenbaum_heterogeneity_2005, zubizarreta_matching_2014}. Furthermore,  \citet{rubin_should_2009} argues for erring on the side of being inclusive when deciding which variables to match on (i.e. control for) in an observational study. Failure to match for a covariate that has an important effect on outcome and is slightly out of balance can cause substantial bias. 

In terms of the balance of the covariates in Table \ref{tab:char}, before matching, we see that there are a few significant differences between the HbAS and HbAA groups, most notably in birth weight, village of birth, and mosquito protection status. Children with the sickle cell trait (HbAS) tend to have high birth weights and lack any protection against mosquitos compared to HbAA children. Also, children living in the village of Tano-Odumasi tend to inherit HbAA more frequently than HbAS. Any one of these differences can contribute to the violation of IV assumption (A3) in Section \ref{sec:defIV} if we do not control for these differences. For instance, it is possible that children with low birth weights were malnourished at birth, making them more prone to malarial episodes and stunted growth compared to children with high birth weights. We must control for these differences to eliminate this possibility, which we do through full matching. 

\subsection{Implementation of full matching on data} \label{sec:data.match}
We conduct full matching on all observed covariates. In particular, we group children with HbAS and without HbAS based on all the observed characteristics in Table \ref{tab:char} as well as match for patterns of missingness. To measure similarity of the observed and missing covariates, we use the rank-based Mahalanobis distance as the distance metric for covariate similarity \citep{rosenbaum_design_2010}. In addition, we compute propensity scores by logistic regression. Here, the propensity score is an instrumental propensity score, which is the probability of having the sickle cell trait given the measured confounders \citep{cheng_using_2011}. In addition, children with missing values in their covariates were matched to other children with similar patterns of missing data \citep{rosenbaum_design_2010}. Once covariate similarity was calculated, the matching algorithm optmatch in R \citep{hansen_optimal_2006} matched children carrying HbAS with children carrying HbAA in a way that within each matched set, their covariates are similar.

\begin{figure}
\includegraphics[scale = 0.65]{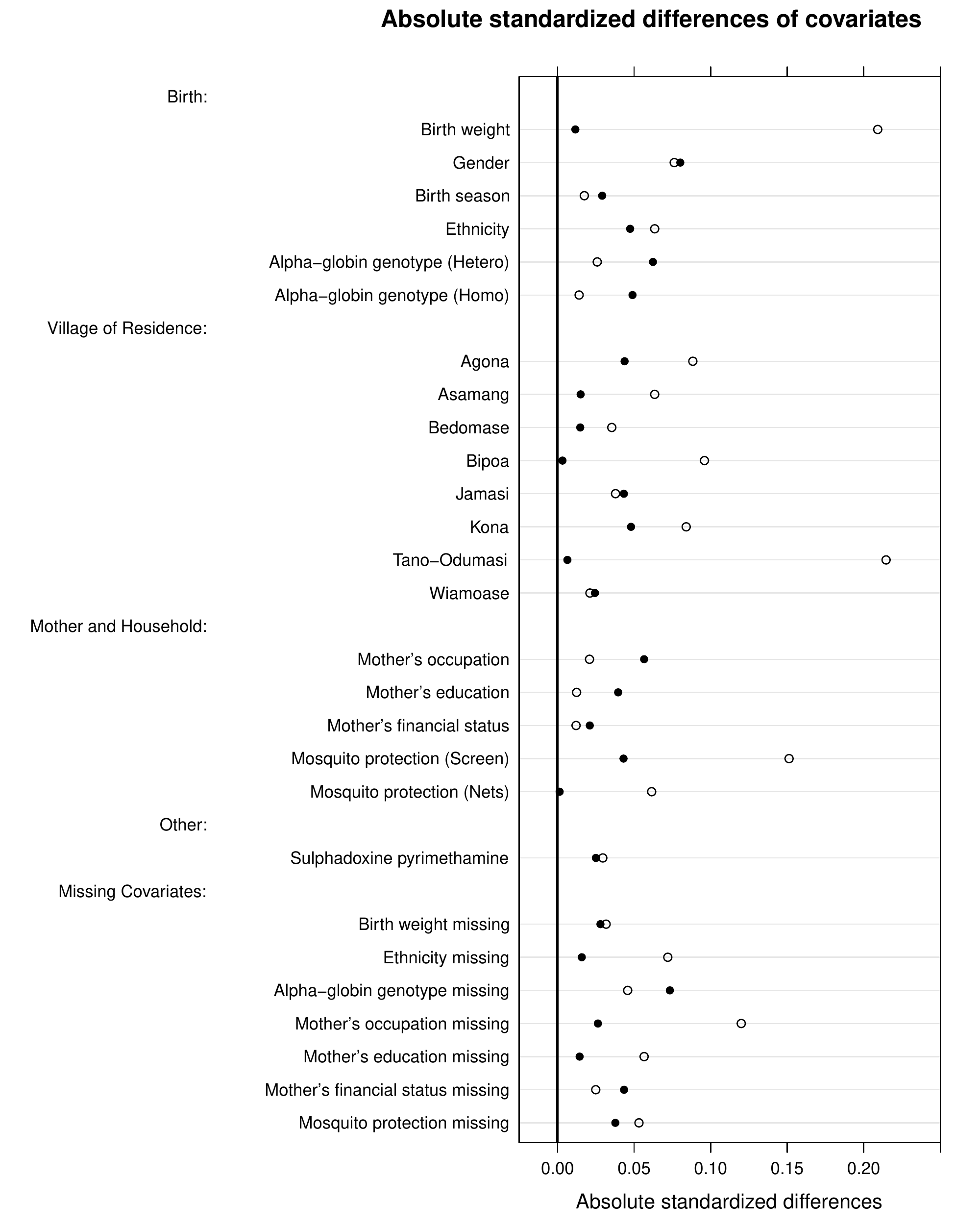}
\caption[]{Absolute standardized differences before and after full matching. Unfilled circles indicate differences before matching and filled circles indicate differences after matching.}
\label{fig:fullMatch0}
\end{figure}

Figure \ref{fig:fullMatch0} shows covariate balance before and after full matching using absolute standardized differences. Absolute standardized differences before matching are computed by taking the difference of the means between children with HbAS and HbAA for each covariate, taking the absolute value of it, and normalizing it by the within group standard deviation before matching (the square root of the average of the variances within the groups). Absolute standardized differences after matching are computed by taking the  differences of the means between children with HbAS and HbAA within each strata, averaging this difference across strata, taking the absolute value of it, and normalizing it by the same within group standard deviation before matching as before. Before matching, there are differences in birth weight, mosquito protection, and village of residence between children with HbAS and HbAA. After matching, these covariates are balanced. Specifically, the standardized differences for birth weight, village of residence, and mosquito protection, are under $0.1$ indicating balance \citep{normand_validating_2001}. In fact, all the covariates are balanced after matching and the p-values used to test the differences between HbAS and HbAA in Table \ref{tab:char} are no longer significant after matching. \citet{hansen_full_2004} discusses how the size of matched sets in full matching can be restricted. In the supplementary article \citep{kang_supp_2015}, we compare different restrictions on full matching versus unrestricted full matching in terms of balance and efficiency. In short, the analysis reveals that unrestricted full matching creates the most covariate balance by a substantial amount while having a only slight decrease in efficiency compared to other full matching schemes considered and hence, we use unrestricted full matching.

\subsection{Estimate of causal effect of malaria on stunting} \label{sec:data.est}
\begin{table}
\caption{Estimates of the causal effect using full matching compared to two-stage least squares and multiple regression. }
\label{tab:estData} 
\begin{tabular}{l l l l}
Methods & Estimate & P-value & 95\% confidence interval \\ \hline
Our method & 0.22 & 0.011 & (0.044, 1) \\
Two stage least squares & 0.21 & 0.14 & (-0.065, 0.47) \\
Multiple regression & 0.018 & 0.016 & (0.0034, 0.033)
\end{tabular}
\end{table}
Table \ref{tab:estData} shows the estimates of the causal effect of malaria on stunting from different methods, specifically our matching-based method, conventional two stage least squares (2SLS), and multiple regression. Our matching-based method computed the estimate by the procedure outlined in Section \ref{sec:inferenceEffectRatio}. 2SLS computed the estimate by regressing all the measured covariates and the instrument on the exposure and using the prediction from that regression and the measured covariates to obtain the estimated effect. Inference for 2SLS was derived using standard asymptotic Normality arguments \citep{wooldridge_econometrics_2010}. Finally, the multiple regression estimate was derived by regressing the outcome on the exposure and the covariates and the inference on the estimate was based on a standard t test. 

We see that the full matching method estimates $\lambda$ to be $0.22$. That is, the risk of stunting among children with the sickle cell trait is estimated to decrease by $0.22$ times the average malaria episodes prevented by the sickle cell trait. Furthermore, we reject the hypothesis $H_0: \lambda = 0$, that malaria does not cause stunting, at the $0.05$ significance level. The confidence interval $\lambda$ is $(0.044, 1.0)$. Even the lower limit of this confidence interval of $0.044$ means that malaria has a substantial effect on stunting; it would mean that the risk of stunting among children with the sickle cell trait is decreased by $0.044$ times the average malaria episodes prevented by the sickle cell trait. 

The estimate based on 2SLS is 0.21, similar to our method. However, our method achieves statistical significance but 2SLS does not. Also, multiple regression, which does not control for unmeasured confounders, estimates a much smaller effect of 0.018. 
\begin{table}
\caption{Sensitivity analysis for instrumental variables with full matching. The range of significance is the range of p-values over the different possible distributions of the unmeasured confounder given a particular value of $\Gamma$, which represents the effect of unobserved confounders on the inference of $\lambda$.}
\label{tab:sensData} 
\begin{tabular}{l  l}
Gamma & Range of significance\\ \hline
1.1 & (0.0082, 0.041) \\
1.2 & (0.0034, 0.074)  \\
1.3 & (0.0015, 0.12)  
\end{tabular}
\end{table}

Table \ref{tab:sensData} shows the sensitivity analysis due to unmeasured confounders. Specifically, we measure how sensitive our estimate and the p-value in Table \ref{tab:estData} is to violation of assumption (A3) in Section \ref{sec:defIV}, even after matching. We see that our results are somewhat sensitive to unmeasured confounders at the $0.05$ significance level. If there is an unmeasured confounder that increases the odds of inheriting HbAS over HbAA by 10\%, i.e. $\Gamma = 1.1$, then we would still have strong evidence that malaria causes stunting. But, if an unmeasured confounder increases the odds of inheriting HbAS over HbAA in a child by 20\% (i.e. $\Gamma = 1.2$), the range of possible p-values includes $0.05$, the significance level, meaning that we would not reject the null hypothesis of $H_0: \lambda = 0$, that malaria does not cause stunting. In the supplementary article \citep{kang_supp_2015}, we amplify the sensitivity analysis following \citet{rosenbaum_amplification_2009}.

\section{Summary}

Overall, in contrast to regression-based IV estimation procedures like 2SLS, our full matching IV method (i) provided a clear way to assess the balance of observed covariates and design the study without looking at the outcome data and (ii) provided a method to quantify the effect of unmeasured confounders on our inference of the causal effect. Our method made it explicitly clear how these covariates were adjusted by stratifying individuals based on similar covariate values. Finally, like in a randomized experiment, our analysis only looked at the outcome data once the balance was acceptable, i.e. once the differences in birth weight, village of residence, and mosquito protection between children with HbAS and HbAA were controlled for. If the balance was unacceptable, then comparing the outcomes between the two groups would not provide reliable causal inference since any differences in the outcome can be attributed to the differences in the covariates. In contrast, conventional 2SLS can only analyze the causal relationship in the presence of outcome data, making the outcome data necessary throughout the entire analysis. Finally, our method is robust to parametric modeling assumptions between the outcome and the covariates with respect to Type I error and point estimate, which cannot be said about 2SLS.

At the expense of these benefits, especially blinding and transparency with regards to covariate balance, unfortunately matching estimators tend to be less efficient than 2SLS or some of the semiparametric methods mentioned in Section \ref{sec:intro.tsls} when the semiparametric methods' assumptions hold. In practice, our estimator's blinding and transparency can be a powerful design and visual tool for applied researchers to assess the validity of the causal conclusions. However, a more careful exploration of the trade-offs between the efficiency of our estimator and the efficiency of some of the semiparametric and non-parametric methods is an interesting direction for future research.

\bibliographystyle{imsart-nameyear}
\bibliography{mainbib}

\begin{thebibliography}{71}

\bibitem[\protect\citeauthoryear{Abadie}{2003}]{abadie_semiparametric_2003}
\begin{barticle}[author]
\bauthor{\bsnm{Abadie},~\bfnm{Alberto}\binits{A.}}
(\byear{2003}).
\btitle{Semiparametric instrumental variable estimation of treatment response
  models}.
\bjournal{Journal of Econometrics}
\bvolume{113}
\bpages{231 - 263}.
\end{barticle}
\endbibitem

\bibitem[\protect\citeauthoryear{Aidoo et~al.}{2002}]{aidoo_protective_2002}
\begin{barticle}[author]
\bauthor{\bsnm{Aidoo},~\bfnm{Michael}\binits{M.}},
  \bauthor{\bsnm{Terlouw},~\bfnm{Dianne~J.}\binits{D.~J.}},
  \bauthor{\bsnm{Kolczak},~\bfnm{Margarette~S.}\binits{M.~S.}},
  \bauthor{\bsnm{McElroy},~\bfnm{Peter~D.}\binits{P.~D.}},
  \bauthor{\bparticle{ter} \bsnm{Kuile},~\bfnm{Feiko~O.}\binits{F.~O.}},
  \bauthor{\bsnm{Kariuki},~\bfnm{Simon}\binits{S.}},
  \bauthor{\bsnm{Nahlen},~\bfnm{Bernard~L.}\binits{B.~L.}},
  \bauthor{\bsnm{Lal},~\bfnm{Altaf~A.}\binits{A.~A.}} \AND
  \bauthor{\bsnm{Udhayakumar},~\bfnm{Venkatachalam}\binits{V.}}
(\byear{2002}).
\btitle{Protective effects of the sickle cell gene against malaria morbidity
  and mortality}.
\bjournal{The Lancet}
\bvolume{359}
\bpages{1311-1312}.
\end{barticle}
\endbibitem

\bibitem[\protect\citeauthoryear{Angrist and
  Imbens}{1995}]{angrist_two-stage_1995}
\begin{barticle}[author]
\bauthor{\bsnm{Angrist},~\bfnm{Joshua~D.}\binits{J.~D.}} \AND
  \bauthor{\bsnm{Imbens},~\bfnm{Guido~W.}\binits{G.~W.}}
(\byear{1995}).
\btitle{Two-Stage Least Squares Estimation of Average Causal Effects in Models
  with Variable Treatment Intensity}.
\bjournal{Journal of the American Statistical Association}
\bvolume{90}
\bpages{431-442}.
\end{barticle}
\endbibitem

\bibitem[\protect\citeauthoryear{Angrist, Imbens and
  Rubin}{1996}]{angrist_identification_1996}
\begin{barticle}[author]
\bauthor{\bsnm{Angrist},~\bfnm{Joshua~D.}\binits{J.~D.}},
  \bauthor{\bsnm{Imbens},~\bfnm{Guido~W.}\binits{G.~W.}} \AND
  \bauthor{\bsnm{Rubin},~\bfnm{Donald~B.}\binits{D.~B.}}
(\byear{1996}).
\btitle{Identification of Causal Effects Using Instrumental Variables}.
\bjournal{Journal of the American Statistical Association}
\bvolume{91}
\bpages{444--455}.
\end{barticle}
\endbibitem

\bibitem[\protect\citeauthoryear{Angrist and Krueger}{1991}]{angrist_does_1991}
\begin{barticle}[author]
\bauthor{\bsnm{Angrist},~\bfnm{Joshua~D.}\binits{J.~D.}} \AND
  \bauthor{\bsnm{Krueger},~\bfnm{Alan~B.}\binits{A.~B.}}
(\byear{1991}).
\btitle{Does Compulsory School Attendance Affect Schooling and Earnings?}
\bjournal{The Quarterly Journal of Economics}
\bvolume{106}
\bpages{979--1014}.
\end{barticle}
\endbibitem

\bibitem[\protect\citeauthoryear{Ashcroft, Desai and
  Richardson}{1976}]{ashcroft_growth_1976}
\begin{barticle}[author]
\bauthor{\bsnm{Ashcroft},~\bfnm{Michael~T.}\binits{M.~T.}},
  \bauthor{\bsnm{Desai},~\bfnm{Patricia}\binits{P.}} \AND
  \bauthor{\bsnm{Richardson},~\bfnm{Stephen~A.}\binits{S.~A.}}
(\byear{1976}).
\btitle{Growth, behaviour, and educational achievement of Jamaican children
  with sickle-cell trait.}
\bjournal{British Medical Journal}
\bvolume{1}
\bpages{1371-1373}.
\end{barticle}
\endbibitem

\bibitem[\protect\citeauthoryear{Ashcroft et~al.}{1978}]{ashcroft_heights_1978}
\begin{barticle}[author]
\bauthor{\bsnm{Ashcroft},~\bfnm{Michael~T.}\binits{M.~T.}},
  \bauthor{\bsnm{Desai},~\bfnm{Patricia}\binits{P.}},
  \bauthor{\bsnm{Grell},~\bfnm{G.~A.}\binits{G.~A.}},
  \bauthor{\bsnm{Serjeant},~\bfnm{Beryl~E.}\binits{B.~E.}} \AND
  \bauthor{\bsnm{Serjeant},~\bfnm{Graham~R.}\binits{G.~R.}}
(\byear{1978}).
\btitle{Heights and weights of West Indian children with the sickle cell
  trait.}
\bjournal{Archives of Disease in Childhood}
\bvolume{53}
\bpages{596-598}.
\end{barticle}
\endbibitem

\bibitem[\protect\citeauthoryear{Baiocchi, Cheng and
  Small}{2014}]{baiocchi_instrumental_2014}
\begin{barticle}[author]
\bauthor{\bsnm{Baiocchi},~\bfnm{Michael}\binits{M.}},
  \bauthor{\bsnm{Cheng},~\bfnm{Jing}\binits{J.}} \AND
  \bauthor{\bsnm{Small},~\bfnm{Dylan~S.}\binits{D.~S.}}
(\byear{2014}).
\btitle{Instrumental variable methods for causal inference}.
\bjournal{Statistics in Medicine}
\bvolume{33}
\bpages{2297--2340}.
\end{barticle}
\endbibitem

\bibitem[\protect\citeauthoryear{Baiocchi
  et~al.}{2010}]{baiocchi_building_2010}
\begin{barticle}[author]
\bauthor{\bsnm{Baiocchi},~\bfnm{Mike}\binits{M.}},
  \bauthor{\bsnm{Small},~\bfnm{Dylan~S.}\binits{D.~S.}},
  \bauthor{\bsnm{Lorch},~\bfnm{Scott}\binits{S.}} \AND
  \bauthor{\bsnm{Rosenbaum},~\bfnm{Paul~R.}\binits{P.~R.}}
(\byear{2010}).
\btitle{Building a stronger instrument in an observational study of perinatal
  care for premature infants}.
\bjournal{Journal of the American Statistical Association}
\bvolume{105}
\bpages{1285-1296}.
\end{barticle}
\endbibitem

\bibitem[\protect\citeauthoryear{Bound, Jaeger and
  Baker}{1995}]{bound_problems_1995}
\begin{barticle}[author]
\bauthor{\bsnm{Bound},~\bfnm{J.}\binits{J.}},
  \bauthor{\bsnm{Jaeger},~\bfnm{D.~A.}\binits{D.~A.}} \AND
  \bauthor{\bsnm{Baker},~\bfnm{R.~M.}\binits{R.~M.}}
(\byear{1995}).
\btitle{Problems with instrumental variables estimation when the correlation
  between instruments and the endogenous variable is weak}.
\bjournal{Journal of the American Statistical Association}
\bvolume{90}
\bpages{443--450}.
\end{barticle}
\endbibitem

\bibitem[\protect\citeauthoryear{Brookhart and
  Schneeweiss}{2007}]{brookhart_preference-based_2007}
\begin{barticle}[author]
\bauthor{\bsnm{Brookhart},~\bfnm{M.~Alan}\binits{M.~A.}} \AND
  \bauthor{\bsnm{Schneeweiss},~\bfnm{Sebastian}\binits{S.}}
(\byear{2007}).
\btitle{Preference-based instrumental variable methods for the estimation of
  treatment effects: assessing validity and interpreting results}.
\bjournal{The International Journal of Biostatistics}
\bvolume{3}
\bpages{14}.
\end{barticle}
\endbibitem

\bibitem[\protect\citeauthoryear{Card}{1995}]{card_using_1995}
\begin{bbook}[author]
\bauthor{\bsnm{Card},~\bfnm{D.}\binits{D.}}
(\byear{1995}).
\btitle{Using geographic variations in college proximity to estimate the return
  to schooling}.
\bpublisher{University of Toronto Press}.
\end{bbook}
\endbibitem

\bibitem[\protect\citeauthoryear{Cheng}{2011}]{cheng_using_2011}
\begin{barticle}[author]
\bauthor{\bsnm{Cheng},~\bfnm{Jing}\binits{J.}}
(\byear{2011}).
\btitle{Using the instrumental propensity score in observational studies for
  causal effects}.
\bjournal{Joint Statistical Meeting Presentation}.
\end{barticle}
\endbibitem

\bibitem[\protect\citeauthoryear{Cheng, Qin and
  Zhang}{2009}]{cheng_semiparametric_2009}
\begin{barticle}[author]
\bauthor{\bsnm{Cheng},~\bfnm{Jing}\binits{J.}},
  \bauthor{\bsnm{Qin},~\bfnm{Jing}\binits{J.}} \AND
  \bauthor{\bsnm{Zhang},~\bfnm{Biao}\binits{B.}}
(\byear{2009}).
\btitle{Semiparametric estimation and inference for distributional and general
  treatment effects}.
\bjournal{Journal of the Royal Statistical Society: Series B (Statistical
  Methodology)}
\bvolume{71}
\bpages{881--904}.
\end{barticle}
\endbibitem

\bibitem[\protect\citeauthoryear{Cholera et~al.}{2008}]{cholera_impaired_2008}
\begin{barticle}[author]
\bauthor{\bsnm{Cholera},~\bfnm{Rushina}\binits{R.}},
  \bauthor{\bsnm{Brittain},~\bfnm{Nathaniel~J.}\binits{N.~J.}},
  \bauthor{\bsnm{Gillrie},~\bfnm{Mark~R.}\binits{M.~R.}},
  \bauthor{\bsnm{Lopera-Mesa},~\bfnm{Tatiana~M.}\binits{T.~M.}},
  \bauthor{\bsnm{Diakité},~\bfnm{Séidina A.~S.}\binits{S.~A.~S.}},
  \bauthor{\bsnm{Arie},~\bfnm{Takayuki}\binits{T.}},
  \bauthor{\bsnm{Krause},~\bfnm{Michael~A.}\binits{M.~A.}},
  \bauthor{\bsnm{Guindo},~\bfnm{Aldiouma}\binits{A.}},
  \bauthor{\bsnm{Tubman},~\bfnm{Abby}\binits{A.}},
  \bauthor{\bsnm{Fujioka},~\bfnm{Hisashi}\binits{H.}},
  \bauthor{\bsnm{Diallo},~\bfnm{Dapa~A.}\binits{D.~A.}},
  \bauthor{\bsnm{Doumbo},~\bfnm{Ogobara~K.}\binits{O.~K.}},
  \bauthor{\bsnm{Ho},~\bfnm{May}\binits{M.}},
  \bauthor{\bsnm{Wellems},~\bfnm{Thomas~E.}\binits{T.~E.}} \AND
  \bauthor{\bsnm{Fairhurst},~\bfnm{Rick~M.}\binits{R.~M.}}
(\byear{2008}).
\btitle{Impaired cytoadherence of Plasmodium falciparum-infected erythrocytes
  containing sickle hemoglobin}.
\bjournal{Proceedings of the National Academy of Sciences}
\bvolume{105}
\bpages{991-996}.
\end{barticle}
\endbibitem

\bibitem[\protect\citeauthoryear{Cochran}{1965}]{cochran_planning_1965}
\begin{barticle}[author]
\bauthor{\bsnm{Cochran},~\bfnm{William~G.}\binits{W.~G.}}
(\byear{1965}).
\btitle{The planning of observational studies of human populations}.
\bjournal{Journal of the Royal Statistical Society: Series A (General)}
\bpages{234-266}.
\end{barticle}
\endbibitem

\bibitem[\protect\citeauthoryear{Crookston
  et~al.}{2010}]{crookston_exploring_2010}
\begin{barticle}[author]
\bauthor{\bsnm{Crookston},~\bfnm{Benjamin~T.}\binits{B.~T.}},
  \bauthor{\bsnm{Alder},~\bfnm{Stephen~C.}\binits{S.~C.}},
  \bauthor{\bsnm{Boakye},~\bfnm{Isaac}\binits{I.}},
  \bauthor{\bsnm{Merrill},~\bfnm{Ray~M.}\binits{R.~M.}},
  \bauthor{\bsnm{Amuasi},~\bfnm{John~H.}\binits{J.~H.}},
  \bauthor{\bsnm{Porucznik},~\bfnm{Christina~A.}\binits{C.~A.}},
  \bauthor{\bsnm{Stanford},~\bfnm{Joseph~B.}\binits{J.~B.}},
  \bauthor{\bsnm{Dickerson},~\bfnm{Ty~T.}\binits{T.~T.}},
  \bauthor{\bsnm{Dearden},~\bfnm{Kirk~A.}\binits{K.~A.}},
  \bauthor{\bsnm{Hale},~\bfnm{DeVon~C.}\binits{D.~C.}},
  \bauthor{\bsnm{Sylverken},~\bfnm{Justice}\binits{J.}},
  \bauthor{\bsnm{Snow},~\bfnm{Bryce~S.}\binits{B.~S.}},
  \bauthor{\bsnm{Osei-Akoto},~\bfnm{Alex}\binits{A.}} \AND
  \bauthor{\bsnm{Ansong},~\bfnm{Daniel}\binits{D.}}
(\byear{2010}).
\btitle{Exploring the relationship between chronic undernutrition and
  asymptomatic malaria in Ghanaian children}.
\bjournal{Malaria Journal}
\bvolume{9}.
\end{barticle}
\endbibitem

\bibitem[\protect\citeauthoryear{Davey~Smith and
  Ebrahim}{2003}]{davey_smith_mendelian_2003}
\begin{barticle}[author]
\bauthor{\bsnm{Davey~Smith},~\bfnm{George}\binits{G.}} \AND
  \bauthor{\bsnm{Ebrahim},~\bfnm{Shah}\binits{S.}}
(\byear{2003}).
\btitle{‘Mendelian randomization’: can genetic epidemiology contribute to
  understanding environmental determinants of disease?}
\bjournal{International Journal of Epidemiology}
\bvolume{32}
\bpages{1-22}.
\end{barticle}
\endbibitem

\bibitem[\protect\citeauthoryear{Deen, Walraven and von
  Seidlein}{2002}]{deen_increased_2002}
\begin{barticle}[author]
\bauthor{\bsnm{Deen},~\bfnm{J.~L.}\binits{J.~L.}},
  \bauthor{\bsnm{Walraven},~\bfnm{G.~E.~L.}\binits{G.~E.~L.}} \AND
  \bauthor{\bparticle{von} \bsnm{Seidlein},~\bfnm{L.}\binits{L.}}
(\byear{2002}).
\btitle{Increased Risk for Malaria in Chronically Malnourished Children Under 5
  Years of Age in Rural Gambia}.
\bjournal{Journal of Tropical Pediatrics}
\bvolume{48}
\bpages{78-83}.
\end{barticle}
\endbibitem

\bibitem[\protect\citeauthoryear{Deribew et~al.}{2010}]{deribew_malaria_2010}
\begin{barticle}[author]
\bauthor{\bsnm{Deribew},~\bfnm{Amare}\binits{A.}},
  \bauthor{\bsnm{Alemseged},~\bfnm{Fessehaye}\binits{F.}},
  \bauthor{\bsnm{Tessema},~\bfnm{Fasil}\binits{F.}},
  \bauthor{\bsnm{Sena},~\bfnm{Lelisa}\binits{L.}},
  \bauthor{\bsnm{Birhanu},~\bfnm{Zewdie}\binits{Z.}},
  \bauthor{\bsnm{Zeynudin},~\bfnm{Ahmed}\binits{A.}},
  \bauthor{\bsnm{Sudhakar},~\bfnm{Morankar}\binits{M.}},
  \bauthor{\bsnm{Abdo},~\bfnm{Nasir}\binits{N.}},
  \bauthor{\bsnm{Deribe},~\bfnm{Kebede}\binits{K.}} \AND
  \bauthor{\bsnm{Biadgilign},~\bfnm{Sibhatu}\binits{S.}}
(\byear{2010}).
\btitle{Malaria and Under-Nutrition: A Community Based Study Among Under-Five
  Children at Risk of Malaria, South-West Ethiopia}.
\bjournal{PLoS One}
\bvolume{5}
\bpages{e10775}.
\end{barticle}
\endbibitem

\bibitem[\protect\citeauthoryear{Ehrhardt et~al.}{2006}]{ehrhardt_malaria_2006}
\begin{barticle}[author]
\bauthor{\bsnm{Ehrhardt},~\bfnm{Stephan}\binits{S.}},
  \bauthor{\bsnm{Burchard},~\bfnm{Gerd~D.}\binits{G.~D.}},
  \bauthor{\bsnm{Mantel},~\bfnm{Carsten}\binits{C.}},
  \bauthor{\bsnm{Cramer},~\bfnm{Jakob~P.}\binits{J.~P.}},
  \bauthor{\bsnm{Kaiser},~\bfnm{Sarah}\binits{S.}},
  \bauthor{\bsnm{Kubo},~\bfnm{Martina}\binits{M.}},
  \bauthor{\bsnm{Otchwemah},~\bfnm{Rowland~N.}\binits{R.~N.}},
  \bauthor{\bsnm{Bienzle},~\bfnm{Ulrich}\binits{U.}} \AND
  \bauthor{\bsnm{Mockenhaupt},~\bfnm{Frank~P.}\binits{F.~P.}}
(\byear{2006}).
\btitle{Malaria, Anemia, and Malnutrition in African Children—Defining
  Intervention Priorities}.
\bjournal{Journal of Infectious Diseases}
\bvolume{194}
\bpages{108-114}.
\end{barticle}
\endbibitem

\bibitem[\protect\citeauthoryear{Eisele et~al.}{2012}]{eisele_malaria_2012}
\begin{barticle}[author]
\bauthor{\bsnm{Eisele},~\bfnm{Thomas~P.}\binits{T.~P.}},
  \bauthor{\bsnm{Larsen},~\bfnm{David~A.}\binits{D.~A.}},
  \bauthor{\bsnm{Anglewicz},~\bfnm{Philip~A.}\binits{P.~A.}},
  \bauthor{\bsnm{Keating},~\bfnm{Joseph}\binits{J.}},
  \bauthor{\bsnm{Yukich},~\bfnm{Josh}\binits{J.}},
  \bauthor{\bsnm{Bennett},~\bfnm{Adam}\binits{A.}},
  \bauthor{\bsnm{Hutchinson},~\bfnm{Paul}\binits{P.}} \AND
  \bauthor{\bsnm{Steketee},~\bfnm{Richard~W.}\binits{R.~W.}}
(\byear{2012}).
\btitle{Malaria prevention in pregnancy, birthweight, and neonatal mortality: a
  meta-analysis of 32 national cross-sectional datasets in Africa}.
\bjournal{The Lancet Infectious Diseases}
\bvolume{12}
\bpages{942-949}.
\end{barticle}
\endbibitem

\bibitem[\protect\citeauthoryear{Fillol et~al.}{2009}]{fillol_impact_2009}
\begin{barticle}[author]
\bauthor{\bsnm{Fillol},~\bfnm{Florie}\binits{F.}},
  \bauthor{\bsnm{Sarr},~\bfnm{Jean~B.}\binits{J.~B.}},
  \bauthor{\bsnm{Boulanger},~\bfnm{Denis}\binits{D.}},
  \bauthor{\bsnm{Cisse},~\bfnm{Badara}\binits{B.}},
  \bauthor{\bsnm{Sokhna},~\bfnm{Cheikh}\binits{C.}},
  \bauthor{\bsnm{Riveau},~\bfnm{Gilles}\binits{G.}},
  \bauthor{\bsnm{Simondon},~\bfnm{Kirsten~B.}\binits{K.~B.}} \AND
  \bauthor{\bsnm{Remou{\'e}},~\bfnm{Franck}\binits{F.}}
(\byear{2009}).
\btitle{Impact of child malnutrition on the specific anti-Plasmodium falciparum
  antibody response}.
\bjournal{Malaria Journal}
\bvolume{8}
\bpages{116}.
\end{barticle}
\endbibitem

\bibitem[\protect\citeauthoryear{Friedman}{1978}]{friedman_erythrocytic_1978}
\begin{barticle}[author]
\bauthor{\bsnm{Friedman},~\bfnm{Milton~J.}\binits{M.~J.}}
(\byear{1978}).
\btitle{Erythrocytic mechanism of sickle cell resistance to malaria}.
\bjournal{Proceedings of the National Academy of Sciences}
\bvolume{75}
\bpages{1994-1997}.
\end{barticle}
\endbibitem

\bibitem[\protect\citeauthoryear{Friedman and
  Trager}{1981}]{friedman_biochemistry_1981}
\begin{barticle}[author]
\bauthor{\bsnm{Friedman},~\bfnm{Milton~J.}\binits{M.~J.}} \AND
  \bauthor{\bsnm{Trager},~\bfnm{William}\binits{W.}}
(\byear{1981}).
\btitle{The biochemistry of resistance to malaria}.
\bjournal{Scientific American}
\bvolume{244}
\bpages{154-155}.
\end{barticle}
\endbibitem

\bibitem[\protect\citeauthoryear{Fr\"{o}lich}{2007}]{frolich_nonparametric_2007}
\begin{barticle}[author]
\bauthor{\bsnm{Fr\"{o}lich},~\bfnm{Markus}\binits{M.}}
(\byear{2007}).
\btitle{Nonparametric IV estimation of local average treatment effects with
  covariates}.
\bjournal{Journal of Econometrics}
\bvolume{139}
\bpages{35-75}.
\end{barticle}
\endbibitem

\bibitem[\protect\citeauthoryear{Fr\"{o}lich and
  Melly}{2010}]{frolich_estimation_2010}
\begin{barticle}[author]
\bauthor{\bsnm{Fr\"{o}lich},~\bfnm{Markus}\binits{M.}} \AND
  \bauthor{\bsnm{Melly},~\bfnm{Blaise}\binits{B.}}
(\byear{2010}).
\btitle{Estimation of quantile treatment effects with Stata}.
\bjournal{Stata Journal}
\bvolume{10}
\bpages{423-457}.
\end{barticle}
\endbibitem

\bibitem[\protect\citeauthoryear{Genton et~al.}{1998}]{genton_relation_1998}
\begin{barticle}[author]
\bauthor{\bsnm{Genton},~\bfnm{B.}\binits{B.}},
  \bauthor{\bsnm{Al-Yaman},~\bfnm{F.}\binits{F.}},
  \bauthor{\bsnm{Ginny},~\bfnm{M.}\binits{M.}},
  \bauthor{\bsnm{Taraika},~\bfnm{J.}\binits{J.}} \AND
  \bauthor{\bsnm{Alpers},~\bfnm{M.~P.}\binits{M.~P.}}
(\byear{1998}).
\btitle{Relation of anthropometry to malaria morbidity and immunity in Papua
  New Guinean children.}
\bjournal{The American Journal of Clinical Nutrition}
\bvolume{68}
\bpages{734-41}.
\end{barticle}
\endbibitem

\bibitem[\protect\citeauthoryear{{WHO Multicentre Growth Reference Study
  Group}}{2006}]{who_who_2006}
\begin{barticle}[author]
\bauthor{\bsnm{{WHO Multicentre Growth Reference Study Group}}}
(\byear{2006}).
\btitle{{WHO} Child Growth Standards based on length/height, weight and age}.
\bjournal{Acta Paediatrica. Supplement}
\bvolume{450}
\bpages{76--85}.
\end{barticle}
\endbibitem

\bibitem[\protect\citeauthoryear{Hansen}{2004}]{hansen_full_2004}
\begin{barticle}[author]
\bauthor{\bsnm{Hansen},~\bfnm{Ben~B.}\binits{B.~B.}}
(\byear{2004}).
\btitle{Full Matching in an Observational Study of Coaching for the SAT}.
\bjournal{Journal of the American Statistical Association}
\bvolume{99}
\bpages{609--618}.
\end{barticle}
\endbibitem

\bibitem[\protect\citeauthoryear{Hansen and
  Klopfer}{2006}]{hansen_optimal_2006}
\begin{barticle}[author]
\bauthor{\bsnm{Hansen},~\bfnm{Ben~B.}\binits{B.~B.}} \AND
  \bauthor{\bsnm{Klopfer},~\bfnm{Stephanie~Olsen}\binits{S.~O.}}
(\byear{2006}).
\btitle{Optimal Full Matching and Related Designs via Network Flows}.
\bjournal{Journal of Computational and Graphical Statistics}
\bvolume{15}
\bpages{609-627}.
\end{barticle}
\endbibitem

\bibitem[\protect\citeauthoryear{Haviland, Nagin and
  Rosenbaum}{2007}]{haviland_combining_2007}
\begin{barticle}[author]
\bauthor{\bsnm{Haviland},~\bfnm{Amelia}\binits{A.}},
  \bauthor{\bsnm{Nagin},~\bfnm{Daniel~S.}\binits{D.~S.}} \AND
  \bauthor{\bsnm{Rosenbaum},~\bfnm{Paul~R.}\binits{P.~R.}}
(\byear{2007}).
\btitle{Combining Propensity Score Matching and Group-Based Trajectory Analysis
  in an Observational Study}.
\bjournal{Psychological Methods}
\bvolume{12}
\bpages{247}.
\end{barticle}
\endbibitem

\bibitem[\protect\citeauthoryear{Hern{\'a}n and
  Robins}{2006}]{hernan_instruments_2006}
\begin{barticle}[author]
\bauthor{\bsnm{Hern{\'a}n},~\bfnm{Miguel~A.}\binits{M.~A.}} \AND
  \bauthor{\bsnm{Robins},~\bfnm{James~M.}\binits{J.~M.}}
(\byear{2006}).
\btitle{Instruments for Causal Inference: An Epidemiologist's Dream?}
\bjournal{Epidemiology}
\bvolume{17}
\bpages{360--372}.
\end{barticle}
\endbibitem

\bibitem[\protect\citeauthoryear{Hodges and
  Lehmann}{1963}]{hodges_estimation_1963}
\begin{barticle}[author]
\bauthor{\bsnm{Hodges},~\bfnm{J.~L.}\binits{J.~L.}} \AND
  \bauthor{\bsnm{Lehmann},~\bfnm{E.~L.}\binits{E.~L.}}
(\byear{1963}).
\btitle{Estimation of location based on ranks}.
\bjournal{Annals of Mathematical Statistics}
\bvolume{34}
\bpages{598-611}.
\end{barticle}
\endbibitem

\bibitem[\protect\citeauthoryear{Holland}{1986}]{holland_statistics_1986}
\begin{barticle}[author]
\bauthor{\bsnm{Holland},~\bfnm{Paul~W.}\binits{P.~W.}}
(\byear{1986}).
\btitle{Statistics and Causal Inference: Rejoinder}.
\bjournal{Journal of the American Statistical Association}
\bvolume{81}
\bpages{968-970}.
\end{barticle}
\endbibitem

\bibitem[\protect\citeauthoryear{Holland}{1988}]{holland_causal_1988}
\begin{barticle}[author]
\bauthor{\bsnm{Holland},~\bfnm{Paul~W.}\binits{P.~W.}}
(\byear{1988}).
\btitle{Causal Inference, Path Analysis, and Recursive Structural Equations
  Models}.
\bjournal{Sociological Methodology}
\bvolume{18}
\bpages{449-484}.
\end{barticle}
\endbibitem

\bibitem[\protect\citeauthoryear{Imbens and Rubin}{1997}]{imbens_bayesian_1997}
\begin{barticle}[author]
\bauthor{\bsnm{Imbens},~\bfnm{Guido~W.}\binits{G.~W.}} \AND
  \bauthor{\bsnm{Rubin},~\bfnm{Donald~B.}\binits{D.~B.}}
(\byear{1997}).
\btitle{Bayesian Inference for Causal Effects in Randomized Experiments with
  Noncompliance}.
\bjournal{The Annals of Statistics}
\bvolume{25}
\bpages{305--327}.
\end{barticle}
\endbibitem

\bibitem[\protect\citeauthoryear{Kang et~al.}{2013}]{kang_causal_2013}
\begin{barticle}[author]
\bauthor{\bsnm{Kang},~\bfnm{Hyunseung}\binits{H.}},
  \bauthor{\bsnm{Kreuels},~\bfnm{Benno}\binits{B.}},
  \bauthor{\bsnm{Adjei},~\bfnm{Ohene}\binits{O.}},
  \bauthor{\bsnm{Krumkamp},~\bfnm{Ralf}\binits{R.}},
  \bauthor{\bsnm{May},~\bfnm{J{\"u}rgen}\binits{J.}} \AND
  \bauthor{\bsnm{Small},~\bfnm{Dylan~S.}\binits{D.~S.}}
(\byear{2013}).
\btitle{The causal effect of malaria on stunting: a Mendelian randomization and
  matching approach}.
\bjournal{International Journal of Epidemiology}
\bvolume{42}
\bpages{1390-1398}.
\end{barticle}
\endbibitem

\bibitem[\protect\citeauthoryear{Kang et~al.}{2015}]{kang_supp_2015}
\begin{barticle}[author]
\bauthor{\bsnm{Kang},~\bfnm{Hyunseung}\binits{H.}},
  \bauthor{\bsnm{Kreuels},~\bfnm{Benno}\binits{B.}},
  \bauthor{\bsnm{May},~\bfnm{J{\"u}rgen}\binits{J.}} \AND
  \bauthor{\bsnm{Small},~\bfnm{Dylan~S.}\binits{D.~S.}}
(\byear{2015}).
\btitle{Supplement to ``Full matching approach to instrumental variables
  estimation with application to the effect of malaria on stunting''}.
\end{barticle}
\endbibitem

\bibitem[\protect\citeauthoryear{Keele and Morgan}{2013}]{keele_stronger_2013}
\begin{barticle}[author]
\bauthor{\bsnm{Keele},~\bfnm{L.~J.}\binits{L.~J.}} \AND
  \bauthor{\bsnm{Morgan},~\bfnm{J.}\binits{J.}}
(\byear{2013}).
\btitle{Stronger Instruments by Design}.
\bjournal{Working Paper}.
\end{barticle}
\endbibitem

\bibitem[\protect\citeauthoryear{Kobbe et~al.}{2007}]{kobbe_randomized_2007}
\begin{barticle}[author]
\bauthor{\bsnm{Kobbe},~\bfnm{Robin}\binits{R.}},
  \bauthor{\bsnm{Kreuzberg},~\bfnm{Christina}\binits{C.}},
  \bauthor{\bsnm{Adjei},~\bfnm{Samuel}\binits{S.}},
  \bauthor{\bsnm{Thompson},~\bfnm{Benedicta}\binits{B.}},
  \bauthor{\bsnm{Langefeld},~\bfnm{Iris}\binits{I.}},
  \bauthor{\bsnm{Thompson},~\bfnm{Peter~Apia}\binits{P.~A.}},
  \bauthor{\bsnm{Abruquah},~\bfnm{Harry~Hoffman}\binits{H.~H.}},
  \bauthor{\bsnm{Kreuels},~\bfnm{Benno}\binits{B.}},
  \bauthor{\bsnm{Ayim},~\bfnm{Matilda}\binits{M.}},
  \bauthor{\bsnm{Busch},~\bfnm{Wibke}\binits{W.}},
  \bauthor{\bsnm{Marks},~\bfnm{Florian}\binits{F.}},
  \bauthor{\bsnm{Amoah},~\bfnm{Kwado}\binits{K.}},
  \bauthor{\bsnm{Opoku},~\bfnm{Ernest}\binits{E.}},
  \bauthor{\bsnm{Meyer},~\bfnm{Christian~G.}\binits{C.~G.}},
  \bauthor{\bsnm{Adjei},~\bfnm{Ohene}\binits{O.}} \AND
  \bauthor{\bsnm{May},~\bfnm{Jürgen}\binits{J.}}
(\byear{2007}).
\btitle{A Randomized Controlled Trial of Extended Intermittent Preventive
  Antimalarial Treatment in Infants}.
\bjournal{Clinical Infectious Diseases}
\bvolume{45}
\bpages{16-25}.
\end{barticle}
\endbibitem

\bibitem[\protect\citeauthoryear{Kramer, Rooks and
  Pearson}{1978}]{kramer_growth_1978}
\begin{barticle}[author]
\bauthor{\bsnm{Kramer},~\bfnm{Michael~S.}\binits{M.~S.}},
  \bauthor{\bsnm{Rooks},~\bfnm{Yolanda}\binits{Y.}} \AND
  \bauthor{\bsnm{Pearson},~\bfnm{Howard~A.}\binits{H.~A.}}
(\byear{1978}).
\btitle{Growth and development in children with sickle-cell trait: a
  prospective study of matched pairs}.
\bjournal{New England Journal of Medicine}
\bvolume{299}
\bpages{686-689}.
\end{barticle}
\endbibitem

\bibitem[\protect\citeauthoryear{Kreuels et~al.}{2009}]{kreuels_sickle_2009}
\begin{barticle}[author]
\bauthor{\bsnm{Kreuels},~\bfnm{Benno}\binits{B.}},
  \bauthor{\bsnm{Ehrhardt},~\bfnm{Stephan}\binits{S.}},
  \bauthor{\bsnm{Kreuzberg},~\bfnm{Christina}\binits{C.}},
  \bauthor{\bsnm{Adjei},~\bfnm{Samuel}\binits{S.}},
  \bauthor{\bsnm{Kobbe},~\bfnm{Robin}\binits{R.}},
  \bauthor{\bsnm{Burchard},~\bfnm{Gerd}\binits{G.}},
  \bauthor{\bsnm{Ehmen},~\bfnm{Christa}\binits{C.}},
  \bauthor{\bsnm{Ayim},~\bfnm{Matilda}\binits{M.}},
  \bauthor{\bsnm{Adjei},~\bfnm{Ohene}\binits{O.}} \AND
  \bauthor{\bsnm{May},~\bfnm{J\"{u}rgen}\binits{J.}}
(\byear{2009}).
\btitle{Sickle cell trait (HbAS) and stunting in children below two years of
  age in an area of high malaria transmission}.
\bjournal{Malaria Journal}
\bvolume{8}
\bpages{16}.
\end{barticle}
\endbibitem

\bibitem[\protect\citeauthoryear{Kreuels et~al.}{2010}]{kreuels_differing_2010}
\begin{barticle}[author]
\bauthor{\bsnm{Kreuels},~\bfnm{Benno}\binits{B.}},
  \bauthor{\bsnm{Kreuzberg},~\bfnm{Christina}\binits{C.}},
  \bauthor{\bsnm{Kobbe},~\bfnm{Robin}\binits{R.}},
  \bauthor{\bsnm{Ayim-Akonor},~\bfnm{Matilda}\binits{M.}},
  \bauthor{\bsnm{Apiah-Thompson},~\bfnm{Peter}\binits{P.}},
  \bauthor{\bsnm{Thompson},~\bfnm{Benedicta}\binits{B.}},
  \bauthor{\bsnm{Ehmen},~\bfnm{Christa}\binits{C.}},
  \bauthor{\bsnm{Adjei},~\bfnm{Samuel}\binits{S.}},
  \bauthor{\bsnm{Langefeld},~\bfnm{Iris}\binits{I.}},
  \bauthor{\bsnm{Adjei},~\bfnm{Ohene}\binits{O.}} \AND
  \bauthor{\bsnm{May},~\bfnm{J{\"u}rgen}\binits{J.}}
(\byear{2010}).
\btitle{Differing effects of HbS and HbC traits on uncomplicated falciparum
  malaria, anemia, and child growth}.
\bjournal{Blood}
\bvolume{115}
\bpages{4551--4558}.
\end{barticle}
\endbibitem

\bibitem[\protect\citeauthoryear{May et~al.}{2007}]{may_hemoglobin_2007}
\begin{barticle}[author]
\bauthor{\bsnm{May},~\bfnm{J{\"u}rgen}\binits{J.}},
  \bauthor{\bsnm{Evans},~\bfnm{Jennifer~A.}\binits{J.~A.}},
  \bauthor{\bsnm{Timmann},~\bfnm{Christian}\binits{C.}},
  \bauthor{\bsnm{Ehmen},~\bfnm{Christa}\binits{C.}},
  \bauthor{\bsnm{Busch},~\bfnm{Wibke}\binits{W.}},
  \bauthor{\bsnm{Thye},~\bfnm{Thorsten}\binits{T.}},
  \bauthor{\bsnm{Agbenyega},~\bfnm{Tsiri}\binits{T.}} \AND
  \bauthor{\bsnm{Horstmann},~\bfnm{Rolf~D.}\binits{R.~D.}}
(\byear{2007}).
\btitle{Hemoglobin Variants and Disease Manifestations in Severe Falciparum
  Malaria}.
\bjournal{Journal of the American Medical Association}
\bvolume{297}
\bpages{2220--2226}.
\end{barticle}
\endbibitem

\bibitem[\protect\citeauthoryear{Normand
  et~al.}{2001}]{normand_validating_2001}
\begin{barticle}[author]
\bauthor{\bsnm{Normand},~\bfnm{Sharon-Lise~T.}\binits{S.-L.~T.}},
  \bauthor{\bsnm{Landrum},~\bfnm{Mary~Beth}\binits{M.~B.}},
  \bauthor{\bsnm{Guadagnoli},~\bfnm{Edward}\binits{E.}},
  \bauthor{\bsnm{Ayanian},~\bfnm{John~Z.}\binits{J.~Z.}},
  \bauthor{\bsnm{Ryan},~\bfnm{Thomas~J.}\binits{T.~J.}},
  \bauthor{\bsnm{Cleary},~\bfnm{Paul~D.}\binits{P.~D.}} \AND
  \bauthor{\bsnm{McNeil},~\bfnm{Barbara~J.}\binits{B.~J.}}
(\byear{2001}).
\btitle{Validating recommendations for coronary angiography following acute
  myocardial infarction in the elderly: a matched analysis using propensity
  scores}.
\bjournal{Journal of Clinical Epidemiology}
\bvolume{54}
\bpages{387--398}.
\end{barticle}
\endbibitem

\bibitem[\protect\citeauthoryear{Nyakeriga
  et~al.}{2004}]{nyakeriga_malaria_2004}
\begin{barticle}[author]
\bauthor{\bsnm{Nyakeriga},~\bfnm{A.~M.}\binits{A.~M.}},
  \bauthor{\bsnm{Troye-Blomberg},~\bfnm{M.}\binits{M.}},
  \bauthor{\bsnm{Chemtai},~\bfnm{A.~K.}\binits{A.~K.}},
  \bauthor{\bsnm{Marsh},~\bfnm{K.}\binits{K.}} \AND
  \bauthor{\bsnm{Williams},~\bfnm{T.~N.}\binits{T.~N.}}
(\byear{2004}).
\btitle{Malaria and Nutritional Status in Children Living on the Coast of
  Kenya}.
\bjournal{Scandinavian Journal of Immunology}
\bvolume{59}
\bpages{615--616}.
\end{barticle}
\endbibitem

\bibitem[\protect\citeauthoryear{Ogburn, Rotnitzky and
  Robins}{2015}]{ogburn_doubly_2015}
\begin{barticle}[author]
\bauthor{\bsnm{Ogburn},~\bfnm{Elizabeth~L.}\binits{E.~L.}},
  \bauthor{\bsnm{Rotnitzky},~\bfnm{Andrea}\binits{A.}} \AND
  \bauthor{\bsnm{Robins},~\bfnm{James~M.}\binits{J.~M.}}
(\byear{2015}).
\btitle{Doubly robust estimation of the local average treatment effect curve}.
\bjournal{Journal of the Royal Statistical Society: Series B (Statistical
  Methodology)}
\bvolume{77}
\bpages{373--396}.
\end{barticle}
\endbibitem

\bibitem[\protect\citeauthoryear{{World Health
  Organization}}{2014}]{who_world_2014}
\begin{barticle}[author]
\bauthor{\bsnm{{World Health Organization}}}
(\byear{2014}).
\btitle{World Malaria Report 2014}.
\bjournal{World Health Organization}.
\end{barticle}
\endbibitem

\bibitem[\protect\citeauthoryear{Rehan}{1981}]{rehan_growth_1981}
\begin{barticle}[author]
\bauthor{\bsnm{Rehan},~\bfnm{N.}\binits{N.}}
(\byear{1981}).
\btitle{Growth status of children with and without sickle cell trait.}
\bjournal{Clinical Pediatrics}
\bvolume{20}
\bpages{705--709}.
\end{barticle}
\endbibitem

\bibitem[\protect\citeauthoryear{Rosenbaum}{1984}]{rosenbaum_consequences_1984}
\begin{barticle}[author]
\bauthor{\bsnm{Rosenbaum},~\bfnm{Paul~R.}\binits{P.~R.}}
(\byear{1984}).
\btitle{The Consequences of Adjustment for a Concomitant Variable That Has Been
  Affected by the Treatment}.
\bjournal{Journal of the Royal Statistical Society. Series A}
\bvolume{147}
\bpages{656--666}.
\end{barticle}
\endbibitem

\bibitem[\protect\citeauthoryear{Rosenbaum}{1991}]{rosenbaum_characterization_1991}
\begin{barticle}[author]
\bauthor{\bsnm{Rosenbaum},~\bfnm{Paul~R.}\binits{P.~R.}}
(\byear{1991}).
\btitle{A Characterization of Optimal Designs for Observational Studies}.
\bjournal{Journal of the Royal Statistical Society. Series B. Methodological}
\bvolume{53}
\bpages{597-610}.
\end{barticle}
\endbibitem

\bibitem[\protect\citeauthoryear{Rosenbaum}{2002}]{rosenbaum_observational_2002}
\begin{bbook}[author]
\bauthor{\bsnm{Rosenbaum},~\bfnm{Paul~R.}\binits{P.~R.}}
(\byear{2002}).
\btitle{Observational Studies},
\bedition{second} ed.
\bseries{Springer Series in Statistics}.
\bpublisher{Springer-Verlag, New York}.
\end{bbook}
\endbibitem

\bibitem[\protect\citeauthoryear{Rosenbaum}{2005}]{rosenbaum_heterogeneity_2005}
\begin{barticle}[author]
\bauthor{\bsnm{Rosenbaum},~\bfnm{Paul~R.}\binits{P.~R.}}
(\byear{2005}).
\btitle{Heterogeneity and Causality: Unit Heterogeneity and Design Sensitivity
  in Observational Studies}.
\bjournal{American Statistician}
\bvolume{59}
\bpages{147-152}.
\end{barticle}
\endbibitem

\bibitem[\protect\citeauthoryear{Rosenbaum}{2010}]{rosenbaum_design_2010}
\begin{bbook}[author]
\bauthor{\bsnm{Rosenbaum},~\bfnm{Paul~R.}\binits{P.~R.}}
(\byear{2010}).
\btitle{Design of Observational Studies}.
\bseries{Springer Series in Statistics}.
\bpublisher{Springer, New York}.
\end{bbook}
\endbibitem

\bibitem[\protect\citeauthoryear{Rosenbaum and
  Silber}{2009}]{rosenbaum_amplification_2009}
\begin{barticle}[author]
\bauthor{\bsnm{Rosenbaum},~\bfnm{Paul~R.}\binits{P.~R.}} \AND
  \bauthor{\bsnm{Silber},~\bfnm{Jeffrey~H.}\binits{J.~H.}}
(\byear{2009}).
\btitle{Amplification of Sensitivity Analysis in Matched Observational
  Studies}.
\bjournal{Journal of the American Statistical Association}
\bvolume{104}
\bpages{1398-1405}.
\end{barticle}
\endbibitem

\bibitem[\protect\citeauthoryear{Rubin}{1980}]{rubin_comment_1980}
\begin{barticle}[author]
\bauthor{\bsnm{Rubin},~\bfnm{Donald~B.}\binits{D.~B.}}
(\byear{1980}).
\btitle{Comment on ``Randomized Analysis of Experimental Data: The Fisher
  Randomization Test''}.
\bjournal{Journal of the American Statistical Association}
\bvolume{75}
\bpages{591-593}.
\end{barticle}
\endbibitem

\bibitem[\protect\citeauthoryear{Rubin}{2007}]{rubin_design_2007}
\begin{barticle}[author]
\bauthor{\bsnm{Rubin},~\bfnm{Donald~B.}\binits{D.~B.}}
(\byear{2007}).
\btitle{The design versus the analysis of observational studies for causal
  effects: parallels with the design of randomized trials}.
\bjournal{Statistics in Medicine}
\bvolume{26}
\bpages{20-36}.
\end{barticle}
\endbibitem

\bibitem[\protect\citeauthoryear{Rubin}{2009}]{rubin_should_2009}
\begin{barticle}[author]
\bauthor{\bsnm{Rubin},~\bfnm{Donald~B.}\binits{D.~B.}}
(\byear{2009}).
\btitle{Should observational studies be designed to allow lack of balance in
  covariate distributions across treatment groups?}
\bjournal{Statistics in Medicine}
\bvolume{28}
\bpages{1420-1423}.
\end{barticle}
\endbibitem

\bibitem[\protect\citeauthoryear{Rubin and
  Waterman}{2006}]{rubin_estimating_2006}
\begin{barticle}[author]
\bauthor{\bsnm{Rubin},~\bfnm{Donald~B.}\binits{D.~B.}} \AND
  \bauthor{\bsnm{Waterman},~\bfnm{Richard~P.}\binits{R.~P.}}
(\byear{2006}).
\btitle{Estimating the causal effects of marketing interventions using
  propensity score methodology}.
\bjournal{Statistical Science}
\bpages{206-222}.
\end{barticle}
\endbibitem

\bibitem[\protect\citeauthoryear{Snow et~al.}{1998}]{snow_risk_1998}
\begin{barticle}[author]
\bauthor{\bsnm{Snow},~\bfnm{Robert~W.}\binits{R.~W.}},
  \bauthor{\bsnm{Nahlen},~\bfnm{B.}\binits{B.}},
  \bauthor{\bsnm{Palmer},~\bfnm{A.}\binits{A.}},
  \bauthor{\bsnm{Donnelly},~\bfnm{C.~A.}\binits{C.~A.}},
  \bauthor{\bsnm{Gupta},~\bfnm{S.}\binits{S.}} \AND
  \bauthor{\bsnm{Marsh},~\bfnm{K.}\binits{K.}}
(\byear{1998}).
\btitle{Risk of severe malaria among African infants: direct evidence of
  clinical protection during early infancy}.
\bjournal{Journal of Infectious Diseases}
\bvolume{177}
\bpages{819-822}.
\end{barticle}
\endbibitem

\bibitem[\protect\citeauthoryear{Stock, Wright and
  Yogo}{2002}]{stock_survey_2002}
\begin{barticle}[author]
\bauthor{\bsnm{Stock},~\bfnm{James~H.}\binits{J.~H.}},
  \bauthor{\bsnm{Wright},~\bfnm{Jonathan~H.}\binits{J.~H.}} \AND
  \bauthor{\bsnm{Yogo},~\bfnm{Motohiro}\binits{M.}}
(\byear{2002}).
\btitle{A Survey of Weak Instruments and Weak Identification in Generalized
  Method of Moments}.
\bjournal{Journal of Business \& Economic Statistics}
\bvolume{20}.
\end{barticle}
\endbibitem

\bibitem[\protect\citeauthoryear{Stuart}{2010}]{stuart_matching_2010}
\begin{barticle}[author]
\bauthor{\bsnm{Stuart},~\bfnm{Elizabeth~A.}\binits{E.~A.}}
(\byear{2010}).
\btitle{Matching methods for causal inference: A review and a look forward}.
\bjournal{Statistical Science}
\bvolume{25}
\bpages{1}.
\end{barticle}
\endbibitem

\bibitem[\protect\citeauthoryear{Swanson and
  Hern{\'a}n}{2013}]{swanson_commentary_2013}
\begin{barticle}[author]
\bauthor{\bsnm{Swanson},~\bfnm{Sonja~A.}\binits{S.~A.}} \AND
  \bauthor{\bsnm{Hern{\'a}n},~\bfnm{Miguel~A.}\binits{M.~A.}}
(\byear{2013}).
\btitle{Commentary: How to Report Instrumental Variable Analyses (Suggestions
  Welcome)}.
\bjournal{Epidemiology}
\bvolume{24}
\bpages{370--374}.
\end{barticle}
\endbibitem

\bibitem[\protect\citeauthoryear{Tan}{2006}]{tan_regression_2006}
\begin{barticle}[author]
\bauthor{\bsnm{Tan},~\bfnm{Zhiqiang}\binits{Z.}}
(\byear{2006}).
\btitle{Regression and Weighting Methods for Causal Inference Using
  Instrumental Variables}.
\bjournal{Journal of the American Statistical Association}
\bvolume{101}
\bpages{1607-1618}.
\end{barticle}
\endbibitem

\bibitem[\protect\citeauthoryear{Taylor, Parobek and
  Fairhurst}{2012}]{taylor_haemoglobinopathies_2012}
\begin{barticle}[author]
\bauthor{\bsnm{Taylor},~\bfnm{Steve~M.}\binits{S.~M.}},
  \bauthor{\bsnm{Parobek},~\bfnm{Christian~M.}\binits{C.~M.}} \AND
  \bauthor{\bsnm{Fairhurst},~\bfnm{Rick~M.}\binits{R.~M.}}
(\byear{2012}).
\btitle{Haemoglobinopathies and the clinical epidemiology of malaria: a
  systematic review and meta-analysis}.
\bjournal{The Lancet Infectious Diseases}
\bvolume{12}
\bpages{457-468}.
\end{barticle}
\endbibitem

\bibitem[\protect\citeauthoryear{Williams et~al.}{2005}]{williams_immune_2005}
\begin{barticle}[author]
\bauthor{\bsnm{Williams},~\bfnm{Thomas~N.}\binits{T.~N.}},
  \bauthor{\bsnm{Mwangi},~\bfnm{Tabitha~W.}\binits{T.~W.}},
  \bauthor{\bsnm{Roberts},~\bfnm{David~J.}\binits{D.~J.}},
  \bauthor{\bsnm{Alexander},~\bfnm{Neal~D.}\binits{N.~D.}},
  \bauthor{\bsnm{Weatherall},~\bfnm{David~J.}\binits{D.~J.}},
  \bauthor{\bsnm{Wambua},~\bfnm{Sammy}\binits{S.}},
  \bauthor{\bsnm{Kortok},~\bfnm{Moses}\binits{M.}},
  \bauthor{\bsnm{Snow},~\bfnm{Robert~W.}\binits{R.~W.}} \AND
  \bauthor{\bsnm{Marsh},~\bfnm{Kevin}\binits{K.}}
(\byear{2005}).
\btitle{An Immune Basis for Malaria Protection by the Sickle Cell Trait}.
\bjournal{PLoS medicine}
\bvolume{2}
\bpages{e128}.
\end{barticle}
\endbibitem

\bibitem[\protect\citeauthoryear{Wooldridge}{2010}]{wooldridge_econometrics_2010}
\begin{bbook}[author]
\bauthor{\bsnm{Wooldridge},~\bfnm{Jeffrey~M.}\binits{J.~M.}}
(\byear{2010}).
\btitle{Econometric Analysis of Cross Section and Panel Data},
\bedition{2nd ed.} ed.
\bpublisher{MIT press}.
\end{bbook}
\endbibitem

\bibitem[\protect\citeauthoryear{Yang et~al.}{2014}]{yang_dissonant_2014}
\begin{barticle}[author]
\bauthor{\bsnm{Yang},~\bfnm{Fan}\binits{F.}},
  \bauthor{\bsnm{Zubizarreta},~\bfnm{Jos\'{e}~R.}\binits{J.~R.}},
  \bauthor{\bsnm{Small},~\bfnm{Dylan~S.}\binits{D.~S.}},
  \bauthor{\bsnm{Lorch},~\bfnm{Scott}\binits{S.}} \AND
  \bauthor{\bsnm{Rosenbaum},~\bfnm{Paul~R.}\binits{P.~R.}}
(\byear{2014}).
\btitle{Dissonant Conclusions When Testing the Validity of an Instrumental
  Variable}.
\bjournal{The American Statistician}
\bvolume{68}
\bpages{253-263}.
\end{barticle}
\endbibitem

\bibitem[\protect\citeauthoryear{Zubizarreta, Paredes and
  Rosenbaum}{2014}]{zubizarreta_matching_2014}
\begin{barticle}[author]
\bauthor{\bsnm{Zubizarreta},~\bfnm{Jos\'{e}~R.}\binits{J.~R.}},
  \bauthor{\bsnm{Paredes},~\bfnm{Ricardo~D.}\binits{R.~D.}} \AND
  \bauthor{\bsnm{Rosenbaum},~\bfnm{Paul~R.}\binits{P.~R.}}
(\byear{2014}).
\btitle{Matching for Balance, Pairing for Heterogeneity in an Observational
  Study of the Effectiveness of For-profit and Not-for-profit High Schools in
  Chile}.
\bjournal{Annals of Applied Statistics}
\bvolume{1}
\bpages{204-231}.
\end{barticle}
\endbibitem

\bibitem[\protect\citeauthoryear{Zubizarreta
  et~al.}{2013}]{zubizarreta_stronger_2013}
\begin{barticle}[author]
\bauthor{\bsnm{Zubizarreta},~\bfnm{Jos\'{e}~R.}\binits{J.~R.}},
  \bauthor{\bsnm{Small},~\bfnm{Dylan~S.}\binits{D.~S.}},
  \bauthor{\bsnm{Goyal},~\bfnm{Neera~K.}\binits{N.~K.}},
  \bauthor{\bsnm{Lorch},~\bfnm{Scott}\binits{S.}} \AND
  \bauthor{\bsnm{Rosenbaum},~\bfnm{Paul~R.}\binits{P.~R.}}
(\byear{2013}).
\btitle{Stronger instruments via integer programming in an observational study
  of late preterm birth outcomes}.
\bjournal{The Annals of Applied Statistics}
\bvolume{7}
\bpages{25--50}.
\end{barticle}
\endbibitem

\end{thebibliography}


\begin{thebibliography}{}

\bibitem[\protect\citeauthoryear{Serfling}{Serfling}{1980}]{serfling_approximation_1980}
Serfling, R. (1980). 
\newblock {\em Approximation Theorems of Mathematical Statistics}
\newblock New York: Wiley. 

\bibitem[\protect\citeauthoryear{Breiman}{Breiman}{1992}]{breiman_probability_1992}
Breiman, L. (1992).
\newblock {\em Probability: Classics in Applied Mathematics}, vol. 7, pp. 186,
\newblock Philadelphia: Society for Industrial and Applied Mathematics.

\bibitem[\protect\citeauthoryear{Cheng}{Cheng}{2011}]{cheng_using_2011}
Cheng, J. (2011).
\newblock Using the instrumental propensity score in observational studies for
  causal effects.
\newblock {\em Joint Statistical Meeting Presentation}.

\bibitem[\protect\citeauthoryear{Fr\"{o}lich}{Fr\"{o}lich}{2007}]{frolich_nonparametric_2007}
Fr\"{o}lich, M. (2007).
\newblock Nonparametric IV estimation of local average treatment effects with covariates.
\newblock {\em Journal of Econometrics}, \textbf{139}, 35-75.

\bibitem[\protect\citeauthoryear{Fr\"{o}lich and Melly}{Fr\"{o}lich and Melly}{2010}]{frolich_estimation_2010}
Fr\"{o}lich, M. and Melly, B. (2010).
\newblock Estimation of quantile treatment effects with Stata.
\newblock {\em Stata Journal}, \textbf{10}, 423-457.

\bibitem[\protect\citeauthoryear{Gastwirth, Krieger, and Rosenbaum}{Gastwirth
  et~al.}{2000}]{gastwirth_asymptotic_2000}
Gastwirth, J. L., Krieger, A. M. and Rosenbaum, P. R. (2000).
\newblock Asymptotic separability in sensitivity analysis.
\newblock {\em Journal of the Royal Statistical Society: Series B (Statistical
  Methodology)}, \textbf{62}, 545-555.
  
\bibitem[\protect\citeauthoryear{Hansen}{Hansen}{2004}]{hansen_full_2004}
Hansen, B. B. (2004).
\newblock Full matching in an observational study of coaching for the sat.
\newblock {\em Journal of the American Statistical Association}, \textbf{99}, 609-618.

\bibitem[\protect\citeauthoryear{Hayfield and Racine}{Hayfield and Racine}{2008}]{hayfield_nonparametric_2008}
Hayfield, T. and Racine, J. (2008).
\newblock Nonparametric econometrics: the np package
\newblock {\em Journal of Statistical Software}, \textbf{27}, 1-32.

\bibitem[\protect\citeauthoryear{Hodges and Lehmann}{Hodges and Lehmann}{1963}]{hodges_estimation_1963}
Hodges, J. L. and Lehmann, E. L. (1963).
\newblock Estimation of location based on ranks
\newblock {\em Annals of Mathematical Statistics}, \textbf{34}, 598-611.

\bibitem[\protect\citeauthoryear{Hogg, McKean, and Craig}{Hogg et~al.}{2005}]{hogg_introduction_2005}
Hogg, R. V., McKean, J. W., and Craig, A. T. (2005).
\newblock {\em Introduction to Mathematical Statistics}, 6th edn. 
\newblock New Jersey: Prentice Hall.

\bibitem[\protect\citeauthoryear{Kang, Kreuels, Adjei, Krumkamp, May, and
  Small}{Kang et~al.}{2013}]{kang_causal_2013}
Kang, H., Kreuels, B., Adjei, O., Krumkamp, R., May, J. and Small, D. S. (2013).
\newblock The causal effect of malaria on stunting: a mendelian randomization
  and matching approach.
\newblock {\em International Journal of Epidemiology}, \textbf{42}, 1390-1398.

\bibitem[\protect\citeauthoryear{Rosenbaum}{Rosenbaum}{2002}]{rosenbaum_observational_2002}
Rosenbaum, P. R. (2002).
\newblock {\em Observational Studies}, 2nd edn.
\newblock New York: Springer-Verlag.

\bibitem[\protect\citeauthoryear{Rosenbaum and Silber}{Rosenbaum and Silber}{2009}]{rosenbaum_amplification_2009}
\newblock Amplification of sensitivity analysis in matched observational studies.
\newblock {\em Journal of the American Statistical Association}, \textbf{104}, 1398-1405.

\bibitem[\protect\citeauthoryear{Small, Gastwirth, Krieger, and Rosenbaum}{Small 
  et~al.}{2009}]{small_simultaneous_2009}
Small, D. S., Gastwirth, J. L., Krieger, A. M., Rosenbaum, P. R. (2009).
\newblock Simultaneous sensitivity analysis for observational studies using 
  full matching or matching with multiple controls
\newblock {\em Statistics and Its Interface}, \textbf{2}, 203-211.

\bibitem[\protect\citeauthoryear{Wooldridge}{Wooldridge}{2010}]{wooldridge_econometrics_2010}
Wooldridge, J. M. (2010).
\newblock {\em Econometric Analysis of Cross Section and Panel Data}, 2nd edn.
\newblock Cambridge: MIT Press.

\end{thebibliography}

\appendix

\section{Supplementary Materials: Review of Notation}
We adopt the notation in Section 2.1 of the main manuscript. Also, we define the effect ratio, $\lambda$, our estimator $T(\lambda_0)$, and $S^2(\lambda_0)$ identically as Sections 2.4 and 2.5 of the main text.
\begin{align} \label{eq:effectRatioGeneral}
\lambda &= \frac{\sum_{i=1}^I \sum_{j=1}^{n_i} r_{1ij}^{(d_{1ij})} - r_{0ij}^{(d_{0ij})}}{\sum_{i=1}^I \sum_{j=1}^{n_i} d_{1ij} - d_{0ij}} \\
\label{eq:genTestStat}
T(\lambda_0) &= \frac{1}{I} \sum_{i=1}^I V_i(\lambda_0)\\
 \label{eq:genEstVar}
S^2(\lambda_0) &= \frac{1}{I(I-1)}\sum_{i=1}^I \{V_{i}(\lambda_0) - T(\lambda_0)\}^2
\end{align}
where 
\[
V_{i}(\lambda_0)= \frac{n_i}{m_i} \sum_{j=1}^{n_i} Z_{ij} (R_{ij} - \lambda_0 D_{ij}) - \frac{n_i}{n_i - m_i}\sum_{j=1}^{n_i} (1 - Z_{ij}) (R_{ij} - \lambda_0 D_{ij})
\]

\section{Supplementary  Materials: Identification and Interpretation of the Effect Ratio}
Let $\chi(\cdot)$ be an indicator function. Under the IV assumptions laid out in Section 2.3 of the main paper and the monotonicity assumption where $d_{1ij} \geq d_{0ij}$, we can identify the effect ratio and interpret it as the weighted average of the unit causal effect of the exposure on the treatment among individuals whose exposure was affected by the instrument. This is formalized in Proposition \ref{prop:0}.

\begin{proposition} \label{prop:0} Suppose the IV assumptions, (A1)-(A3), and SUTVA in Section 2.3 of the main manuscript holds and the exposure ranges from $0,1,2,\ldots,M$ where $M$ is an integer. Further suppose that the monotonicity assumption where $d_{1ij} \geq d_{0ij}$ holds for all $i,j$. Then, 
\begin{align*}
\lambda =& \frac{\sum_{i=1}^I \sum_{j=1}^{n_i} E(R_{ij} | Z_{ij} = 1, \mathcal{F}, \mathcal{Z}) - E(R_{ij} | Z_{ij} = 0, \mathcal{F}, \mathcal{Z})}{\sum_{i=1}^I \sum_{j=1}^{n_i} E(D_{ij} | Z_{ij} = 1, \mathcal{F}, \mathcal{Z}) - E(D_{ij} | Z_{ij} = 0, \mathcal{F}, \mathcal{Z})} \\
=& \frac{\sum_{i=1}^I \sum_{j=1}^{n_i} \sum_{k=1}^M (r_{ij}^{(k)} - r_{ij}^{(k-1)}) \chi(d_{1ij} \geq k > d_{0ij})}{\sum_{i=1}^I \sum_{j=1}^{n_i} \sum_{k=1}^M \chi(d_{1ij} \geq k > d_{0ij})}
\end{align*}
\end{proposition}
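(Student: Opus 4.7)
The plan is to prove the two equalities in the proposition separately. The first equality, $\lambda$ equals the ratio of differences in expectations, is the identification step and only uses (A3); the second equality, $\lambda$ equals the weighted average of differences $r_{ij}^{(k)} - r_{ij}^{(k-1)}$, is a rewriting step that uses (A2), monotonicity, and a telescoping argument. I do not anticipate any substantial obstacle; the main subtlety will just be careful bookkeeping of what is random and what is held fixed in $\mathcal{F}$.

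For the first equality, I would start from the observed-outcome identity $R_{ij} = r_{1ij}^{(d_{1ij})} Z_{ij} + r_{0ij}^{(d_{0ij})}(1-Z_{ij})$ from equation \eqref{eq:potObs}. Conditioning on $\mathcal{F}$ fixes all potential outcomes, so $E(R_{ij}\mid Z_{ij}=1,\mathcal{F},\mathcal{Z}) = r_{1ij}^{(d_{1ij})}$ and $E(R_{ij}\mid Z_{ij}=0,\mathcal{F},\mathcal{Z}) = r_{0ij}^{(d_{0ij})}$, using (A3) only to the extent that the event $\{Z_{ij}=z\}$ has positive conditional probability (which is guaranteed by $P(Z_{ij}=1\mid \mathcal{F},\mathcal{Z})=m_i/n_i$, $0<m_i<n_i$). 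The same argument for $D_{ij}$ yields $E(D_{ij}\mid Z_{ij}=1,\mathcal{F},\mathcal{Z})-E(D_{ij}\mid Z_{ij}=0,\mathcal{F},\mathcal{Z})=d_{1ij}-d_{0ij}$. Summing numerator and denominator over $i,j$ and taking the ratio recovers $\lambda$ as defined in \eqref{eq:effectRatioGeneral}.

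For the second equality, I would first invoke the exclusion restriction (A2) to collapse $r_{1ij}^{(k)}=r_{0ij}^{(k)}\equiv r_{ij}^{(k)}$, so that each numerator term becomes $r_{ij}^{(d_{1ij})}-r_{ij}^{(d_{0ij})}$. Monotonicity $d_{1ij}\ge d_{0ij}$ then lets me write this difference as a telescoping sum
\[
r_{ij}^{(d_{1ij})} - r_{ij}^{(d_{0ij})} \;=\; \sum_{k=d_{0ij}+1}^{d_{1ij}} \bigl(r_{ij}^{(k)} - r_{ij}^{(k-1)}\bigr) \;=\; \sum_{k=1}^{M} \bigl(r_{ij}^{(k)} - r_{ij}^{(k-1)}\bigr)\,\chi(d_{1ij}\ge k > d_{0ij}),
\]
where the second step just rewrites the range of summation using the indicator. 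The identical telescoping applied to the denominator with $r_{ij}^{(k)}\equiv k$ gives $d_{1ij}-d_{0ij}=\sum_{k=1}^M \chi(d_{1ij}\ge k>d_{0ij})$. Summing both over $i,j$ and taking the ratio yields exactly the claimed weighted-average expression.

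The only minor pitfall to watch for is the edge case $d_{1ij}=d_{0ij}$, where both the telescoping sum on the numerator side and the indicator count on the denominator side contribute zero, so the identity holds trivially term by term. Apart from this, the proof is essentially a definition-chasing exercise combined with one telescoping sum, and I expect to present it as two short displays preceded by the invocations of (A3) and then of (A2) plus monotonicity.
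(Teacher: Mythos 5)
Your proof is correct and follows essentially the same route as the paper's: (A3) to identify the numerator and denominator as $r_{1ij}^{(d_{1ij})}-r_{0ij}^{(d_{0ij})}$ and $d_{1ij}-d_{0ij}$, (A2) to collapse the potential outcomes, and monotonicity to express each difference as $\sum_{k=1}^M(\cdot)\,\chi(d_{1ij}\ge k>d_{0ij})$. The only difference is presentational: you telescope directly over $k=d_{0ij}+1,\ldots,d_{1ij}$, whereas the paper expands in indicators $\chi(d=k)=\chi(d\ge k)-\chi(d\ge k+1)$ and performs a summation by parts, arriving at the same identity.
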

\begin{proof}[Proof of Proposition \ref{prop:0}] By (A3), we have 
\begin{align*}
&E(R_{ij} | Z_{ij} = 1, \mathcal{F}, \mathcal{Z}) - E(R_{ij}| Z_{ij} = 0, \mathcal{F}, \mathcal{Z}) \\ =&r_{1ij}^{(d_{1ij})} - r_{0ij}^{(d_{0ij})} \\
=& \sum_{k=0}^M r_{1ij}^{(k)} \chi(d_{1ij} = k) - \sum_{k=0}^M r_{0ij}^{(k)} \chi(d_{0ij} = k) \\
=& \sum_{k=0}^M r_{1ij}^{(k)} \{\chi(d_{1ij} \geq k) - \chi(d_{1ij} \geq k + 1)\} - \sum_{k=0}^M r_{0ij}^{(k)} \{\chi(d_{0ij} \geq k) - \chi(d_{0ij} \geq k + 1)\} 
\end{align*}
By (A2), $r_{1ij}^{(k)} = r_{0ij}^{(k)}$ for all $k$. Then, we have
\begin{align*}
& \sum_{k=0}^M r_{ij}^{(k)} \{\chi(d_{1ij} \geq k) - \chi(d_{1ij} \geq k+1) - \chi(d_{0ij} \geq k) + \chi(d_{0ij} \geq k+1) \} \\
=& \sum_{k=0}^M r_{ij}^{(k)} \{ \chi(d_{1ij} \geq k) - \chi(d_{0ij} \geq k) \} - \sum_{k=0}^M r_{ij}^{(k)} \{ \chi(d_{1ij} \geq k+1) - \chi(d_{0ij} \geq k+1) \} \\
=& \sum_{k=1}^M r_{ij}^{(k)} \{ \chi(d_{1ij} \geq k) - \chi(d_{0ij} \geq k)\} - \sum_{k=1}^M r_{ij}^{(k-1)} \{\chi(d_{1ij} \geq k) - \chi(d_{0ij} \geq k) \} \\
=& \sum_{k=1}^M (r_{ij}^{(k)} - r_{ij}^{(k-1)}) \{\chi(d_{1ij} \geq k) - \chi(d_{0ij} \geq k)\}
\end{align*}
By monotonicity, $d_{1ij} \geq d_{0ij}$ for all $i,j$. Then,
\begin{align*}
&\sum_{k=1}^M (r_{ij}^{(k)} - r_{ij}^{(k-1)}) \{\chi(d_{1ij} \geq k) - \chi(d_{0ij} \geq k) \} \\
=& \sum_{k=1}^M (r_{ij}^{(k)} - r_{ij}^{(k-1)}) \chi\{ \chi(d_{1ij} \geq k) - \chi(d_{0ij} \geq k) = 1\} \\
=& \sum_{k=1}^M (r_{ij}^{(k)} - r_{ij}^{(k-1)}) \chi(d_{1ij} \geq k > d_{0ij})
\end{align*}
Similarly, by (A3), the expected differences between $Z_{ij} = 1$ and $Z_{ij} = 0$ for the exposure $D_{ij}$ can be written as 
\begin{align*}
&E(D_{ij} | Z_{ij} = 1, \mathcal{F}, \mathcal{Z}) - E(D_{ij} | Z_{ij} = 0, \mathcal{F}, \mathcal{Z}) \\
=& d_{1ij} - d_{0ij} \\
=& \sum_{k=0}^M k \chi(d_{1ij} = k) - \sum_{k=0}^M k \chi(d_{0ij} = k) \\
=& \sum_{k=0}^M k \{\chi(d_{1ij} \geq k) - \chi(d_{1ij} \geq k+1)\} - \sum_{k=0}^M k \{\chi(d_{0ij} \geq k) - \chi(d_{0ij} \geq k+1)\} \\
=& \sum_{k=0}^M k \{\chi(d_{1ij} \geq k) - \chi(d_{1ij} \geq k+1) - \chi(d_{0ij} \geq k) + \chi(d_{0ij} \geq k+1)\} \\
=& \sum_{k=0}^M k \{\chi(d_{1ij} \geq k) - \chi(d_{0ij} \geq k)\} - \sum_{k=0}^M k \{\chi(d_{1ij} \geq k+1) - \chi(d_{0ij} \geq k+1)\} \\
=& \sum_{k=1}^M k \{\chi(d_{1ij} \geq k) - \chi(d_{0ij} \geq k)\} - \sum_{k=1}^M (k -1)\{\chi(d_{1ij} \geq k) - \chi(d_{0ij} \geq k)\} \\
=& \sum_{k=1}^M \{\chi(d_{1ij} \geq k) - \chi(d_{0ij} \geq k)\}
\end{align*}
By monotonicity, we have 
\[
\sum_{k=1}^M \{\chi(d_{1ij} \geq k) - \chi(d_{0ij} \geq k)\} = \sum_{k=1}^M \chi(d_{1ij} \geq k > d_{0ij})
\]
Thus, we end up with 
\begin{align*}
&\frac{\sum_{i=1}^I \sum_{j=1}^{n_i} E(R_{ij} | Z_{ij} = 1,\mathcal{F}, \mathcal{Z}) - E(R_{ij} | Z_{ij} = 0, \mathcal{F}, \mathcal{Z})}{\sum_{i=1}^I \sum_{j=1}^{n_i} E(D_{ij} | Z_{ij} = 1, \mathcal{F}, \mathcal{Z}) - E(D_{ij} | Z_{ij} = 0, \mathcal{F}, \mathcal{Z})} \\
=& \frac{\sum_{i=1}^I \sum_{j=1}^{n_i} r_{1ij}^{(d_{1ij})} - r_{0ij}^{(d_{0ij})}}{ \sum_{i=1}^I \sum_{j=1}^{n_i} d_{1ij} - d_{0ij}} \\
=& \frac{\sum_{i=1}^I \sum_{j=1}^{n_i} \sum_{k=1}^M (r_{ij}^{(k)} - r_{ij}^{(k-1)}) \chi(d_{1ij} \geq k > d_{0ij})}{\sum_{i=1}^I \sum_{j=1}^{n_i} \sum_{k=1}^M \chi(d_{1ij} \geq k > d_{0ij})}
\end{align*}
\end{proof} 

\section{Supplementary Materials: Theoretical Properties of Test Statistic} \label{sec.supp:theoryEffectRatio}
Recall the hypothesis of interest from the main manuscript (Section 2.5)
\[
H_0: \lambda = \lambda_0, \quad{} H_a: \lambda \neq \lambda_0
\]
Proposition \ref{prop:1} provides an asymptotic  distribution for the test statistic $T(\lambda_0)$ in equation \eqref{eq:genTestStat} under the null hypothesis $H_0$.
\begin{proposition} \label{prop:1} Assume that for every $I$, (i) $n_i$ remains bounded and (ii) $\frac{1}{I} \sum_{i=1}^I \sum_{j=1}^{n_i} r_{1ij}^{(d_{1ij})} - r_{0ij}^{(d_{0ij})}$ and $\frac{1}{I} \sum_{i=1}^{I}  \sum_{j=1}^{n_i} d_{1ij} - d_{0ij}$ remains fixed at $\bar{r}$ and $\bar{d} \neq 0$, respectively, so that $\bar{\lambda} = \bar{r} / \bar{d}$. In addition, we assume the following moment conditions
\begin{equation} \label{eq:momentCondGeneral}
\sum_{i=1}^I E\{V_{i}^4(\bar{\lambda}) | \mathcal{F}, \mathcal{Z}\} = o(I^2), \quad{} \limsup_{I \to \infty} \frac{\sum_{i=1}^{I} E| V_{i}(\bar{\lambda}) - \mu_{i,\bar{\lambda}}|^3}{ \left[\sum_{i=1}^I Var\{V_i(\bar{\lambda})\} \right]^{3/2}} = 0
\end{equation}
Then, under the null hypothesis $H_0: \lambda =\bar{\lambda}$, for all $t > 0$,
\[
\limsup_{I \to \infty} P \left\{\frac{T(\bar{\lambda})}{S(\bar{\lambda})} \leq -t | \mathcal{F}, \mathcal{Z} \right\} \leq \Phi(-t), \quad{} \limsup_{I \to \infty} P \left\{\frac{T(\bar{\lambda})}{S(\bar{\lambda})} \geq t | \mathcal{F}, \mathcal{Z} \right\} \leq \Phi(-t)
\]
where $\Phi(\cdot)$ is the standard normal distribution.  
\end{proposition}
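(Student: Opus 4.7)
The plan is to decompose the proof into three stages: compute the conditional mean of $V_i(\bar{\lambda})$ and establish independence across matched sets, apply a Lyapunov central limit theorem to the numerator, and then handle the denominator $S(\bar{\lambda})$ via a conservative variance argument. Under (A3) and conditional on $\mathcal{F}$ and $\mathcal{Z}$, the assignment $\mathbf{Z}$ is uniform on $\Omega$, which factors as a product over matched sets, so the $V_i(\bar{\lambda})$'s are mutually independent. Using $Z_{ij}R_{ij} = Z_{ij}r_{1ij}^{(d_{1ij})}$, $(1-Z_{ij})R_{ij} = (1-Z_{ij})r_{0ij}^{(d_{0ij})}$ and the analogous identities for $D_{ij}$, together with $E\{Z_{ij}|\mathcal{F},\mathcal{Z}\} = m_i/n_i$, a direct calculation yields
\[
\mu_{i,\bar{\lambda}} := E\{V_i(\bar{\lambda})|\mathcal{F},\mathcal{Z}\} = \sum_{j=1}^{n_i}\bigl\{(r_{1ij}^{(d_{1ij})} - r_{0ij}^{(d_{0ij})}) - \bar{\lambda}(d_{1ij} - d_{0ij})\bigr\}.
\]
By assumption (ii), $I^{-1}\sum_i \mu_{i,\bar{\lambda}} = \bar{r} - \bar{\lambda}\bar{d}$, which equals zero under $H_0: \lambda = \bar{\lambda} = \bar{r}/\bar{d}$, so $E\{T(\bar{\lambda})|\mathcal{F},\mathcal{Z}\} = 0$.

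The second step is to apply the Lyapunov CLT to the independent centered summands $V_i(\bar{\lambda}) - \mu_{i,\bar{\lambda}}$: the third-moment condition stated in \eqref{eq:momentCondGeneral} is exactly the Lyapunov ratio, so
\[
I\,T(\bar{\lambda})\,\Big/\,\Bigl\{\sum_{i=1}^I \sigma_{i,\bar{\lambda}}^2\Bigr\}^{1/2} \Rightarrow N(0,1), \qquad \sigma_{i,\bar{\lambda}}^2 := \mathrm{Var}\{V_i(\bar{\lambda})|\mathcal{F},\mathcal{Z}\},
\]
where each $\sigma_{i,\bar{\lambda}}^2$ admits a closed form from sampling-without-replacement formulas within matched set $i$. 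For the denominator, a standard sample-variance bias calculation for independent, non-identically distributed variables gives
\[
E\{I^2 S^2(\bar{\lambda})|\mathcal{F},\mathcal{Z}\} = \sum_{i=1}^I \sigma_{i,\bar{\lambda}}^2 + \frac{I}{I-1}\sum_{i=1}^I (\mu_{i,\bar{\lambda}} - \bar{\mu}_{\bar{\lambda}})^2,
\]
where $\bar{\mu}_{\bar{\lambda}} = I^{-1}\sum_i \mu_{i,\bar{\lambda}}$. The first term equals $I^2\,\mathrm{Var}\{T(\bar{\lambda})|\mathcal{F},\mathcal{Z}\}$ exactly, and the second is nonnegative, so $S^2(\bar{\lambda})$ is on average at least as large as the true variance. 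Combining this with Chebyshev-based concentration of $S^2(\bar{\lambda})$ around its expectation (which is where the fourth-moment condition $\sum_i E[V_i^4(\bar{\lambda})|\mathcal{F},\mathcal{Z}] = o(I^2)$ from \eqref{eq:momentCondGeneral} enters) and Slutsky's theorem yields that $T(\bar{\lambda})/S(\bar{\lambda})$ converges in distribution to $N(0,\tau^2)$ with $\tau^2 \leq 1$. Since $\Phi$ is increasing, this gives both tail bounds via $\Phi(-t/\tau) \leq \Phi(-t)$ for $t > 0$.

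The principal obstacle I foresee is establishing the concentration of $S^2(\bar{\lambda})$ around $E\{S^2(\bar{\lambda})|\mathcal{F},\mathcal{Z}\}$. This requires bounding $\mathrm{Var}\{\sum_i (V_i - T(\bar{\lambda}))^2\}$, which one handles by first rewriting the sum around the nonrandom center $\bar{\mu}_{\bar{\lambda}}$, separately showing $T(\bar{\lambda}) - \bar{\mu}_{\bar{\lambda}} \to_p 0$ via Chebyshev on the variance of $T(\bar{\lambda})$, and then invoking the fourth-moment condition on the squared deviations. The boundedness of $n_i$ in assumption (i) is used here to ensure that each $V_i(\bar{\lambda})$ is a bounded linear combination of at most $n_i$ potential-outcome quantities, so control of $E[V_i^4]$ suffices to control the cross terms generated by averaging. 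Given this concentration, the remaining Slutsky step and the passage from the $N(0,\tau^2)$ limit to the stated $\Phi(-t)$ tail bounds are routine.
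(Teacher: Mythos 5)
Your proposal is correct and follows essentially the same route as the paper: independence of the $V_i(\bar{\lambda})$ across matched sets, a Lyapunov CLT for the centered numerator, the bias identity showing $E\{S^2(\bar{\lambda})\,|\,\mathcal{F},\mathcal{Z}\}$ exceeds $\mathrm{Var}\{T(\bar{\lambda})\,|\,\mathcal{F},\mathcal{Z}\}$ by a nonnegative term (the paper's Lemma 2), fourth-moment Chebyshev concentration of $S^2(\bar{\lambda})$, and Slutsky. The only cosmetic difference is that you assert a distributional limit $N(0,\tau^2)$, which would additionally require the ratio $\sum_{i}(\mu_{i,\bar{\lambda}}-\bar{\mu}_{\bar{\lambda}})^2/\sum_{i}\sigma_{i,\bar{\lambda}}^2$ to converge; the paper sidesteps this by stating only the one-sided $\limsup$ tail bounds, which your argument already delivers without that extra assumption.
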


To prove Proposition \ref{prop:1}, we require the following two Lemmas. Lemma \ref{lem:moments} characterizes the moments of the test statistics in \eqref{eq:genTestStat}. Lemma \ref{lem:biasVar} derives the bias of $S^2(\lambda_0)$ in estimating the variance of $T(\lambda_0)$. Proof of these Lemmas are in Section \ref{sec:proofLemmas} of the Supplementary Materials.
\begin{lemma} \label{lem:moments}
The expected value and the variance of the test statistic in equation \eqref{eq:genTestStat} are
\begin{align*}
E\{T(\lambda_0) | \mathcal{F},\mathcal{Z} \} &=  \frac{1}{I} (\lambda - \lambda_0) \sum_{i=1}^I \sum_{j=1}^{n_i} (d_{1ij} - d_{0ij}) \\
Var\{T(\lambda_0) | \mathcal{F},\mathcal{Z} \} &= \frac{1}{I^2} \sum_{i=1}^I  \frac{1}{n_i}  \sum_{i=1}^{n_i} (a_{ij,\lambda_0} - \bar{a}_{i,\lambda_0})^2
\end{align*}
where 
\[
a_{ij,\lambda_0} =   \frac{n_i}{m_i} y_{1ij,\lambda_0}^{(d_{1ij})} + \frac{n_i}{n_i - m_i} y_{0ij,\lambda_0}^{(d_{0ij})} , \quad{} \bar{a}_{i,\lambda_0} = \frac{1}{n_i} \sum_{i=1}^{n_i} a_{ij,\lambda_0}
\]
\end{lemma}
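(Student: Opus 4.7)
The plan is to compute the conditional moments of each $V_i(\lambda_0)$ using the exact randomization distribution implied by assumption (A3), then sum across matched sets by independence. The starting move is to rewrite $R_{ij} - \lambda_0 D_{ij}$ in purely potential-outcome form using \eqref{eq:potObs} and the algebraic identities $Z_{ij}^2 = Z_{ij}$ and $Z_{ij}(1 - Z_{ij}) = 0$. Introducing the shorthand $y_{1ij,\lambda_0}^{(d_{1ij})} = r_{1ij}^{(d_{1ij})} - \lambda_0 d_{1ij}$ and $y_{0ij,\lambda_0}^{(d_{0ij})} = r_{0ij}^{(d_{0ij})} - \lambda_0 d_{0ij}$, this reduces $V_i(\lambda_0)$ to a linear function of the $Z_{ij}$'s whose coefficients are fixed given $\mathcal{F}$.

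For the mean, I would apply (A3), which gives $E(Z_{ij} \mid \mathcal{F}, \mathcal{Z}) = m_i / n_i$, so the normalizing factors $n_i/m_i$ and $n_i/(n_i - m_i)$ cancel to yield $E\{V_i(\lambda_0) \mid \mathcal{F}, \mathcal{Z}\} = \sum_j (r_{1ij}^{(d_{1ij})} - r_{0ij}^{(d_{0ij})}) - \lambda_0 \sum_j (d_{1ij} - d_{0ij})$. Averaging over $i$ and invoking the definition of $\lambda$ in \eqref{eq:effectRatioGeneral} to replace $\sum_{i,j}(r_{1ij}^{(d_{1ij})} - r_{0ij}^{(d_{0ij})})$ by $\lambda \sum_{i,j}(d_{1ij} - d_{0ij})$ produces the claimed formula for $E\{T(\lambda_0) \mid \mathcal{F}, \mathcal{Z}\}$.

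For the variance, I would first absorb the $(1 - Z_{ij})$ term into a constant plus a $Z_{ij}$ term, giving $V_i(\lambda_0) = c_i + \sum_j Z_{ij}\, a_{ij,\lambda_0}$ with $a_{ij,\lambda_0}$ as defined in the lemma and $c_i$ a quantity determined by $\mathcal{F}$ alone. Under (A3), the indicator vector $(Z_{i1}, \dots, Z_{i n_i})$ is a uniformly chosen configuration with exactly $m_i$ ones out of $n_i$ positions, i.e., a simple random sample without replacement. The standard finite-population variance formula then gives
\[
\operatorname{Var}\!\left(\sum_j Z_{ij} a_{ij,\lambda_0} \,\bigg|\, \mathcal{F}, \mathcal{Z}\right) = \frac{m_i(n_i - m_i)}{n_i(n_i - 1)} \sum_j (a_{ij,\lambda_0} - \bar a_{i,\lambda_0})^2.
\]
The crucial simplification is that in full matching $m_i \in \{1, n_i - 1\}$, so $m_i(n_i - m_i) = n_i - 1$ and the prefactor collapses to $1/n_i$, matching the stated expression. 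Finally, independence of the random assignments across matched sets (again a consequence of (A3) and $\mathcal{Z}$ restricting to $\Omega$) lets me write $\operatorname{Var}\{T(\lambda_0) \mid \mathcal{F}, \mathcal{Z}\} = I^{-2} \sum_i \operatorname{Var}\{V_i(\lambda_0) \mid \mathcal{F}, \mathcal{Z}\}$, finishing the proof.

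The only real subtlety is recognizing that the full-matching constraint $m_i \in \{1, n_i - 1\}$ is exactly what makes the finite-population correction $m_i(n_i - m_i)/[n_i(n_i - 1)]$ collapse to $1/n_i$; without this structural fact the formula would carry an extra $m_i(n_i - m_i)/(n_i - 1)$ weighting. Everything else is bookkeeping and linearity of expectation.
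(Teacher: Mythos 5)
Your proposal is correct and follows essentially the same route as the paper: both reduce $V_i(\lambda_0)$ to a constant plus $\sum_j Z_{ij}a_{ij,\lambda_0}$, compute the mean from $E(Z_{ij}\mid\mathcal{F},\mathcal{Z})=m_i/n_i$ together with the definition of $\lambda$, and obtain the variance from the sampling-without-replacement distribution of the $Z_{ij}$'s within each matched set, summing across sets by independence. The only cosmetic difference is that you invoke the textbook finite-population variance formula and then use $m_i(n_i-m_i)=n_i-1$, whereas the paper computes $E(Z_{ij}Z_{ik})$ directly and uses the same full-matching fact to simplify it to $(m_i-1)/n_i$; these are the same calculation, and you correctly identified the full-matching constraint as the step that collapses the prefactor to $1/n_i$.
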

\begin{lemma} \label{lem:biasVar}
Let $\mu_{i,\lambda_0} = E\{V_{i}(\lambda_0) | \mathcal{F}, \mathcal{Z}\}$ and $\mu_{\lambda_0} = E\{T(\lambda_0) | \mathcal{F},\mathcal{Z}\}$. The bias of \eqref{eq:genEstVar} in estimating the variance of the test statistic in \eqref{eq:genTestStat} is
\begin{equation} \label{eq:biasVar}
E\{S^2(\lambda_0) | \mathcal{F}, \mathcal{Z}\} - Var\{T(\lambda_0) | \mathcal{F},\mathcal{Z}\} = \frac{1}{I(I-1)} \sum_{i=1}^I ( \mu_{i,\lambda_0} - \mu_{\lambda_0})^2
\end{equation}
\end{lemma}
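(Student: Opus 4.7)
\textbf{Proof proposal for Lemma \ref{lem:biasVar}.} The plan is to treat $S^2(\lambda_0)$ as a scaled sample variance of the stratum-level statistics $V_i(\lambda_0)$ and exploit the fact that, conditional on $\mathcal{F}$ and $\mathcal{Z}$, the random assignments within different matched sets are independent. This independence across strata is inherited from (A3), which says the instruments are randomized within each matched set, so $V_1(\lambda_0),\dots,V_I(\lambda_0)$ are mutually independent conditional on $(\mathcal{F},\mathcal{Z})$. Writing $\mu_{i,\lambda_0}$ and $\mu_{\lambda_0}$ as in the statement, and denoting $\sigma_i^2 = \mathrm{Var}\{V_i(\lambda_0)\mid\mathcal{F},\mathcal{Z}\}$, independence then gives the identity $\mathrm{Var}\{T(\lambda_0)\mid\mathcal{F},\mathcal{Z}\} = I^{-2}\sum_i \sigma_i^2$.

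The first concrete step is the algebraic identity
\[
\sum_{i=1}^I \{V_i(\lambda_0) - T(\lambda_0)\}^2 \;=\; \sum_{i=1}^I V_i(\lambda_0)^2 \;-\; I\,T(\lambda_0)^2,
\]
which follows from $T(\lambda_0) = I^{-1}\sum_i V_i(\lambda_0)$. Taking conditional expectations and using $E[X^2] = \mathrm{Var}(X) + (E X)^2$ on each term, I would write
\[
E\Big\{\sum_i V_i(\lambda_0)^2 \,\Big|\,\mathcal{F},\mathcal{Z}\Big\} = \sum_i \sigma_i^2 + \sum_i \mu_{i,\lambda_0}^2,
\]
\[
E\{I\,T(\lambda_0)^2 \mid \mathcal{F},\mathcal{Z}\} = I\,\mathrm{Var}\{T(\lambda_0)\mid\mathcal{F},\mathcal{Z}\} + I\,\mu_{\lambda_0}^2 = \tfrac{1}{I}\sum_i \sigma_i^2 + I\,\mu_{\lambda_0}^2,
\]
where the last equality uses the cross-stratum independence noted above.

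Subtracting and dividing by $I(I-1)$, the $\sigma_i^2$ terms combine to give $\frac{1}{I^2}\sum_i \sigma_i^2 = \mathrm{Var}\{T(\lambda_0)\mid\mathcal{F},\mathcal{Z}\}$, which is exactly the quantity I want to subtract off. What remains is
\[
\frac{1}{I(I-1)}\Big(\sum_i \mu_{i,\lambda_0}^2 - I\,\mu_{\lambda_0}^2\Big),
\]
and since $\mu_{\lambda_0} = I^{-1}\sum_i \mu_{i,\lambda_0}$, the standard identity $\sum_i \mu_{i,\lambda_0}^2 - I\mu_{\lambda_0}^2 = \sum_i (\mu_{i,\lambda_0} - \mu_{\lambda_0})^2$ delivers the stated formula. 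The only real content is the independence of the $V_i$'s across strata; everything else is bookkeeping, so I do not anticipate a serious obstacle. If anything, the minor care needed is to verify that $V_i(\lambda_0)$ is a function only of $(Z_{i1},\dots,Z_{i,n_i})$ together with quantities fixed in $\mathcal{F}$ (via the potential outcomes and $D_{ij},R_{ij}$), which is immediate from the definition of $V_i(\lambda_0)$ and the relations in equation \eqref{eq:potObs}.
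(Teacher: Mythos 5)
Your proposal is correct and follows essentially the same route as the paper's proof: both expand the sum of squared deviations, use the independence of $V_1(\lambda_0),\dots,V_I(\lambda_0)$ across matched sets (from (A3)) to get $\mathrm{Var}\{T(\lambda_0)\mid\mathcal{F},\mathcal{Z}\}=I^{-2}\sum_i\sigma_i^2$, and collect the mean terms into $\sum_i(\mu_{i,\lambda_0}-\mu_{\lambda_0})^2$. The only cosmetic difference is that you use the shortcut identity $\sum_i\{V_i(\lambda_0)-T(\lambda_0)\}^2=\sum_i V_i(\lambda_0)^2-I\,T(\lambda_0)^2$ whereas the paper expands $E\{V_i^2\}$, $E\{T^2\}$, and $E\{V_iT\}$ term by term, arriving at the same bookkeeping.
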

\begin{proof}[Proof of Proposition \ref{prop:1}] We use the same notation adopted in the proof of Lemma \ref{lem:biasVar}, mainly $\mu_{i,\bar{\lambda}}$, $\mu_{\bar{\lambda}}$, and $v_{i,\bar{\lambda}}^2$. In addition, let $q_{i,\bar{\lambda}} = E\{V_{i}^2(\bar{\lambda}) | \mathcal{F}, \mathcal{Z}\}$, and $v_{\bar{\lambda}} = Var\{T(\bar{\lambda}) | \mathcal{F}, \mathcal{Z}\}$. First, $\sum_{i=1}^I V_{i}^2(\bar{\lambda})/I$ is an unbiased estimator for $\sum_{i=1}^I q_{i,\bar{\lambda}} /I$. In addition, 
\[
Var\left\{\frac{1}{I} \sum_{i=1}^I V_{i}^2(\bar{\lambda}) | \mathcal{F}, \mathcal{Z} \right\} \leq \frac{1}{I^2} \sum_{i=1}^I E\{V_i^4(\bar{\lambda}) | \mathcal{F}, \mathcal{Z}\}
\]
By the fourth moment condition in \eqref{eq:momentCondGeneral}, we have $ \sum_{i=1}^I  V_i^2(\bar{\lambda})/I - \sum_{i=1}^{I}  q_{i,\bar{\lambda}} /I \to 0$ in probability. Similarly, the same fourth moment condition in \eqref{eq:momentCondGeneral} and the same reasoning gives $T(\bar{\lambda}) - \mu_{\bar{\lambda}} \to 0$ in probability because of the growth of the variance of $T(\bar{\lambda})$ is controlled by the moment condition. Since $\mu_{\bar{\lambda}} = 0$ for all $I$ under the null hypothesis, we have, by the continuous mapping theorem,  $T^2(\bar{\lambda}) \to 0$ in probability. Combining all these convergence results, we get that for $\epsilon > 0$ and $\delta > 0$, there exists $I^*$ such that
\[ \text{for $I \geq I^*$:}
P\left\{\frac{1}{I} \sum_{i=1}^I V_{i}^2(\bar{\lambda}) - \frac{1}{I} \sum_{i=1}^I q_{i,\bar{\lambda}} < -\frac{\epsilon}{2} \right\} < \frac{\delta}{2}, \quad{} P\left\{T^2(\bar{\lambda}) < -\frac{\epsilon}{2} \right\} < \frac{\delta}{2}
\]
and 
\begin{align*}
&P\left\{ I S^2(\bar{\lambda}) -  I v_{\bar{\lambda}} < -\epsilon  \right\} \\
=& P\left[ \frac{I}{I-1} \left\{\frac{1}{I} \sum_{i=1}^I  V_{i}^2(\bar{\lambda}) - T^2(\bar{\lambda}) \right\} - I v_{\bar{\lambda}}  < - \epsilon \right] \\
=& P\left[ \frac{I}{I-1} \left\{\frac{1}{I} \sum_{i=1}^I V_{i}^2(\bar{\lambda}) - \frac{1}{I} \sum_{i=1}^I  q_{i,\bar{\lambda}} + \frac{1}{I} \sum_{i=1}^I  q_{i,\bar{\lambda}} - T^2(\bar{\lambda}) \right\} - Iv_{\bar{\lambda}} < - \epsilon \right] \\
=& P\left[ \frac{I}{I-1} \left\{\frac{1}{I} \sum_{i=1}^I  V_{i}^2(\bar{\lambda}) - \frac{1}{I} \sum_{i=1}^I  q_{i,\bar{\lambda}} - T^2(\bar{\lambda}) \right\} - Iv_{\bar{\lambda}} + \frac{1}{I-1} \sum_{i=1}^I  q_{i,\bar{\lambda}} < - \epsilon \right] \\
\leq& P\left[ \frac{I}{I-1} \left\{\frac{1}{I} \sum_{i=1}^I V_{i}^2(\bar{\lambda}) - \frac{1}{I} \sum_{i=1}^I q_{i,\bar{\lambda}} - T^2(\bar{\lambda}) \right\}  < - \epsilon \right] \\
\leq& \frac{\delta}{2} + \frac{\delta}{2} = \delta
\end{align*}
Stated in words, $IS^2(\bar{\lambda})$ will over-estimate $Iv_{\bar{\lambda}}$ with high probability. 

Second, under the null hypothesis $H_0: \lambda = \bar{\lambda}$ and from Lemma \ref{lem:moments}, $ \sum_{i=1}^I \mu_{i,\bar{\lambda}} /I = 0$. Hence, we can rewrite the test statistic as 
\[
T(\bar{\lambda}) = \frac{1}{I} \sum_{i=1}^I V_{i}(\bar{\lambda}) = \frac{1}{I} \sum_{i=1}^I  [V_{i}(\bar{\lambda}) - \mu_{i,\bar{\lambda}}]
\] 
where the test statistic becomes a sum of independent random variables $V_{i}(\bar{\lambda}) - \mu_{i,\bar{\lambda}}$ with mean zero and variance $v_{i,\bar{\lambda}}$.

Finally, combining the two facts, under the null $H_0: \lambda =\bar{\lambda}$, we have
\[
\frac{T(\bar{\lambda})}{S(\bar{\lambda})} = \left[ \frac{ \frac{1}{I}\sum_{i=1}^I \{V_{i}(\bar{\lambda}) - \mu_{i,\bar{\lambda}}\}}{\sqrt{ \frac{1}{I^2} \sum_{i=1}^I  v_{i,\bar{\lambda}}}} \right] \left\{ \frac{\sqrt{ \frac{1}{I^2} \sum_{i=1}^I  v_{i,\bar{\lambda}}}}{\sqrt{S^2(\bar{\lambda})}} \right\}
\]
By conditions specified in \citet[pg 186]{breiman_probability_1992} for the central limit theorem with non-identical distributions, the first parenthesis term converges to the standard Normal distribution. From our result about $IS_{\bar{\lambda}}^2$ overestimating $Iv_{\bar{\lambda}}$, the second parenthesis term will be smaller than 1 with high probability. Hence, taking the $\sup$ of the entire expression, we obtain
\[
\limsup_{I \to \infty} P \left\{\frac{T(\bar{\lambda})}{S(\bar{\lambda})} \leq -t | \mathcal{F}, \mathcal{Z} \right\} \leq \Phi(-t), \quad{} \limsup_{I \to \infty} P \left\{\frac{T(\bar{\lambda})}{S(\bar{\lambda})} \geq t | \mathcal{F}, \mathcal{Z} \right\} \leq \Phi(-t)
\]
where $\Phi()$ is the standard normal distribution.  
\end{proof}

Proposition \ref{prop:1} provides a way to estimate the effect ratio, compute p-values, and calculate confidence intervals. In particular, in the spirit of \citet{hodges_estimation_1963}, the estimator for the effect ratio, denoted as $\hat{\lambda}$, is the solution to the equation $T(\hat{\lambda})/ S(\hat{\lambda}) = 0$. The 95\% confidence interval for the effect ratio is the solution to the equation $T(\lambda) / S(\lambda) = \pm 1.96$. Corollary \ref{coro:effectRatioEstimatorFormula} presents a solution to the equation $T(\hat{\lambda}) / S(\hat{\lambda}) = q$ for any value of $q$.
\begin{corollary} \label{coro:effectRatioEstimatorFormula} For any value $q$, the solution to $T(\lambda) / S(\lambda) = q$ is a solution to the quadratic equation $A_2 \lambda^2 + A_1 \lambda + A_0 = 0$ where 
\begin{align*}
A_2 &= \bar{H}_{.}^2 - \frac{q^2}{I(I-1)} \sum_{i=1}^I (H_i - \bar{H}_{.})^2 \\
A_1 &= -2\bar{G}_{.} \bar{H}_{.} + \frac{2q^2}{I(I-1)} \left\{ \sum_{i=1}^I (G_i - \bar{G}_{.})(H_i - \bar{H}_{.}) \right\}  \\
A_0 &= \bar{G}_{.}^2 - \frac{q^2}{I(I-1)} \sum_{i=1}^I (G_i - \bar{G}_{.})^2 
\end{align*}
where
\begin{align*}
G_i &=  \frac{n_i^2}{m_i (n_i - m_i)} \sum_{j=1}^{n_i} (Z_{ij} - \bar{Z}_{i.})(R_{ij} - \bar{R}_{i.}) \\
H_i &= \frac{n_i^2}{m_i (n_i - m_i)} \sum_{j=1}^{n_i} (Z_{ij} - \bar{Z}_{i.})(D_{ij} - \bar{D}_{i.}) \\
\bar{Z}_{i.} &= \frac{1}{n_i} \sum_{j=1}^{n_i} Z_{ij} , \quad{} \bar{D}_{i.} = \frac{1}{n_i} \sum_{j=1}^{n_i} D_{ij}, \quad{} \bar{R}_{i.} = \frac{1}{n_i} \sum_{j=1}^{n_i} R_{ij} \\ \bar{H}_{.} &= \frac{1}{I} \sum_{i=1}^I H_i, \quad{} \bar{G}_{.} = \frac{1}{I} \sum_{i=1}^I G_i
\end{align*}
\end{corollary}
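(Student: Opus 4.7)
The plan rests on exploiting the linearity of $V_i(\lambda)$ in $\lambda$. Writing
\[
V_i(\lambda) = \frac{n_i}{m_i}\sum_{j=1}^{n_i} Z_{ij}(R_{ij}-\lambda D_{ij}) - \frac{n_i}{n_i-m_i}\sum_{j=1}^{n_i}(1-Z_{ij})(R_{ij}-\lambda D_{ij}),
\]
I would split it as $V_i(\lambda)=V_i^{R}-\lambda V_i^{D}$, where $V_i^{R}$ and $V_i^{D}$ are the usual weighted contrasts in $R$ and $D$ respectively. A short algebraic step, using $\bar{Z}_{i.}=m_i/n_i$ and hence $\sum_j(Z_{ij}-\bar{Z}_{i.})(R_{ij}-\bar{R}_{i.})=\sum_j Z_{ij}R_{ij}-\tfrac{m_i}{n_i}\sum_j R_{ij}$, shows
\[
V_i^{R} = \Bigl(\tfrac{n_i}{m_i}+\tfrac{n_i}{n_i-m_i}\Bigr)\sum_{j}Z_{ij}R_{ij} - \tfrac{n_i}{n_i-m_i}\sum_j R_{ij} = G_i,
\]
and the identical manipulation with $D$ in place of $R$ gives $V_i^{D}=H_i$. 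Thus $V_i(\lambda)=G_i-\lambda H_i$.

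Next, averaging over $i$ yields $T(\lambda)=\bar{G}_{.}-\lambda\bar{H}_{.}$, and subtracting gives $V_i(\lambda)-T(\lambda)=(G_i-\bar{G}_{.})-\lambda(H_i-\bar{H}_{.})$. Substituting into the definition of $S^2(\lambda)$ produces
\[
S^2(\lambda)=\frac{1}{I(I-1)}\sum_{i=1}^{I}\bigl\{(G_i-\bar{G}_{.})-\lambda(H_i-\bar{H}_{.})\bigr\}^2.
\]
So both $T(\lambda)^2$ and $S^2(\lambda)$ are quadratic polynomials in $\lambda$.

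Finally, since $T(\lambda)/S(\lambda)=q$ implies $T(\lambda)^2=q^2 S^2(\lambda)$, I would expand $T(\lambda)^2-q^2 S^2(\lambda)=0$ in powers of $\lambda$. The $\lambda^2$ coefficient is $\bar{H}_{.}^{\,2}-\tfrac{q^2}{I(I-1)}\sum_i(H_i-\bar{H}_{.})^2=A_2$, the $\lambda$ coefficient is $-2\bar{G}_{.}\bar{H}_{.}+\tfrac{2q^2}{I(I-1)}\sum_i(G_i-\bar{G}_{.})(H_i-\bar{H}_{.})=A_1$, and the constant is $\bar{G}_{.}^{\,2}-\tfrac{q^2}{I(I-1)}\sum_i(G_i-\bar{G}_{.})^2=A_0$, matching the statement. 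Note that squaring introduces the companion root corresponding to $T/S=-q$; this is a feature rather than a bug for confidence intervals, where the two roots of the quadratic with $q=1.96$ supply both endpoints simultaneously.

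There is essentially no technical obstacle here: the whole argument is driven by the linearity of $V_i(\lambda)$ in $\lambda$, and the only place to be slightly careful is the algebraic identification $V_i^{R}=G_i$, which requires recognizing that $\tfrac{n_i}{m_i}+\tfrac{n_i}{n_i-m_i}=\tfrac{n_i^2}{m_i(n_i-m_i)}$ and exploiting $\bar{Z}_{i.}=m_i/n_i$ so that the constant terms from $V_i^{R}$ and from the centered form of $G_i$ coincide.
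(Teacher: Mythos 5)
Your proposal is correct and follows essentially the same route as the paper: both hinge on the linearity of $V_i(\lambda)$ in $\lambda$, the algebraic identification $V_i(\lambda) = G_i - \lambda H_i$ (via $\tfrac{n_i}{m_i}+\tfrac{n_i}{n_i-m_i}=\tfrac{n_i^2}{m_i(n_i-m_i)}$ and $\bar{Z}_{i.}=m_i/n_i$), and expanding $T^2(\lambda)=q^2S^2(\lambda)$ as a quadratic in $\lambda$. Your bookkeeping—writing $S^2(\lambda)$ directly in the centered form $\frac{1}{I(I-1)}\sum_i\{(G_i-\bar{G}_{.})-\lambda(H_i-\bar{H}_{.})\}^2$—arrives at the stated coefficients a bit more directly than the paper's expansion of $\sum_i V_i^2 - IT^2$, but the argument is the same.
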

\begin{proof}
First, we see that $T(\lambda)/S(\lambda) = q$ implies $T^2(\lambda) = q^2 S^2(\lambda)$. This expression can be rewritten as
\begin{equation} \label{eq:effectRatioConfIntFormula1}
T^2(\lambda) = \frac{q^2}{I(I-1)}\sum_{i=1}^I (V_i(\lambda) - T(\lambda))^2 = \frac{q^2}{I(I-1)} \left \{\sum_{i=1}^I  V_i^2(\lambda) - I T^2(\lambda)\right\}
\end{equation}
Rearranging the terms in \eqref{eq:effectRatioConfIntFormula1}, we get
\[
T^2(\lambda) \left(1 + \frac{q^2}{I-1}\right) = \frac{q^2}{I(I-1)} \sum_{i=1}^I V_i^2(\lambda)
\]
Second, we can re-express $V_{i}(\lambda)$ as follows.
\begin{align*}
V_{i}(\lambda) &= \sum_{j=1}^{n_i} \left(\frac{n_i}{m_i} + \frac{n_i}{n_i - m_i}\right)  Z_{ij} R_{ij} - \sum_{j=1}^{n_i} \frac{n_i}{n_i - m_i} R_{ij} \\
&\quad{} - \sum_{j=1}^{n_i} \left(\frac{n_i}{m_i} + \frac{n_i}{n_i - m_i}\right)\lambda Z_{ij} D_{ij} + \sum_{j=1}^{n_i} \frac{n_i}{n_i - m_i} \lambda D_{ij} \\
 &= \sum_{j=1}^{n_i} \frac{n_i^2}{m_i(n_i - m_i)}  Z_{ij} R_{ij} - \left(\sum_{j=1}^{n_i} \frac{n_i}{n_i - m_i} R_{ij}\right) \left(\frac{1}{m_i} \sum_{j=1}^{n_i} Z_{ij} \right) \\
&\quad{} - \sum_{j=1}^{n_i} \frac{n_i^2}{m_i(n_i - m_i)} \lambda Z_{ij} D_{ij} + \left(\sum_{j=1}^{n_i} \frac{n_i}{n_i - m_i} \lambda D_{ij} \right)\left( \frac{1}{m_i} \sum_{j=1}^{n_i} Z_{ij}\right) \\
 &= \frac{n_i^2}{m_i (n_i - m_i)} \left(\sum_{j=1}^{n_i} Z_{ij} R_{ij} - \frac{1}{n_i} \sum_{j=1}^{n_i} R_{ij}\sum_{j=1}^{n_i} Z_{ij}\right) \\
&\quad{} - \lambda \frac{n_i^2}{m_i (n_i - m_i)} \left( \sum_{j=1}^{n_i} Z_{ij} D_{ij} - \frac{1}{n_i} \sum_{j=1}^{n_i} D_{ij} \sum_{j=1}^{n_i} Z_{ij} \right) \\
 &= \frac{n_i^2}{m_i (n_i - m_i)} \sum_{j=1}^{n_i} (Z_{ij} - \bar{Z}_{i.})(R_{ij} - \bar{R}_{i.}) - \lambda \frac{n_i^2}{m_i (n_i - m_i)} \sum_{j=1}^{n_i} (Z_{ij} - \bar{Z}_{i.})(D_{ij} - \bar{D}_{i.})
\end{align*}
Immediately, we also have $ V_i(\lambda) = G_i - \lambda H_i$. Then, we can rewrite $\sum_{i=1}^I V_i^2(\lambda) $ and $T^2(\lambda)$ as follows
\begin{align*}
\sum_{i=1}^I  V_i^2(\lambda) &= \sum_{i=1}^I (G_i - \lambda H_i)^2 \\
&= \sum_{i=1}^I G_i^2  - 2 \lambda \sum_{i=1}^I G_i H_i + \lambda^2 \sum_{i=1}^I H_i^2 \\
T^2(\lambda) &= \frac{1}{I^2} \left\{ \sum_{i=1}^I  V_i(\lambda) \right\}^2 \\
&= \frac{1}{I^2} \left\{ \sum_{i=1}^I (G_i - \lambda H_i) \right\}^2 \\
&= \frac{1}{I^2} \left\{ \left(\sum_{i=1}^I G_i\right)^2 - 2\lambda \sum_{i=1}^I G_i \sum_{i=1}^I H_i + \lambda^2 \left(\sum_{i=1}^I H_i\right)^2 \right\}
\end{align*}
Overall, we can rewrite the equation \eqref{eq:effectRatioConfIntFormula1} as 
\begin{align*}
&\frac{1}{I^2} \left\{ \left(\sum_{i=1}^I G_i\right)^2 - 2\lambda \sum_{i=1}^I G_i \sum_{i=1}^I H_i + \lambda^2 \left(\sum_{i=1}^I H_i\right)^2 \right\}\left(1 + \frac{q^2}{I-1}\right) \\
=& \frac{q^2}{I(I-1)} \left(\sum_{i=1}^I G_i^2  - 2 \lambda \sum_{i=1}^I G_i H_i + \lambda^2 \sum_{i=1}^I H_i^2\right)
\end{align*}
Finally, we pull out the coefficients associated with $\lambda^2$ and $\lambda$, denoted as $A_2$ and $A_1$, respectively. The remaining term are constants and we denote them as $A_0$. All $A_2$, $A_1$, and $A_0$ are explicitly written below.
\begin{align*}
A_2 &= \frac{1}{I^2} \left( \sum_{i=1}^I H_i\right)^2 + \frac{q^2}{I(I-1)} \left\{ \frac{1}{I} \left(\sum_{i=1}^I H_i \right)^2 - \sum_{i=1}^I H_i^2 \right\} \\
 &= \bar{H}_{.}^2 - \frac{q^2}{I(I-1)} \sum_{i=1}^I (H_i - \bar{H}_{.})^2 \\
A_1 &= -2 \left[ \frac{1}{I^2} \sum_{i=1}^I G_i \sum_{i=1}^{I} H_i + \frac{q^2}{I(I-1)} \left\{\frac{1}{I} \sum_{i=1}^I G_i \sum_{i=1}^I H_i - \sum_{i=1}^I G_i H_i \right\}\right] \\
&= -2 \left[ \bar{G}_{.} \bar{H}_{.} - \frac{q^2}{I(I-1)} \left\{ \sum_{i=1}^I (G_i - \bar{G}_{.})(H_i - \bar{H}_{.}) \right\} \right] \\
A_0 &= \frac{1}{I^2} \left( \sum_{i=1}^I G_i\right)^2 + \frac{q^2}{I(I-1)} \left\{ \frac{1}{I} \left(\sum_{i=1}^I G_i \right)^2 - \sum_{i=1}^I G_i^2 \right\} \\
 &= \bar{G}_{.}^2 - \frac{q^2}{I(I-1)} \sum_{i=1}^I (G_i - \bar{G}_{.})^2 
\end{align*}
\end{proof}
If $q = 0$ in Corollary \ref{coro:effectRatioEstimatorFormula}, there is only one solution to the quadratic equation since 
\[
A_2 \lambda^2 + A_1 \lambda + A_0 = \bar{H}_{.}^2 \lambda^2 - 2 \bar{H}_{.} \bar{G}_{.} \lambda + \bar{G}_{.}^2 = (\bar{H}_{.}\lambda - \bar{G}_{.})^2 = 0
\]
This gives us an explicit formula for the estimator of the effect ratio, denoted as $\hat{\lambda}$.
\begin{equation} \label{eq:effectRatioEstimate}
\hat{\lambda} = \frac{\bar{G}_{.}}{\bar{H}{.}} = \frac{\sum_{i=1}^I \frac{n_i^2}{m_i (n_i - m_i)} \sum_{j=1}^{n_i} (R_{ij} - \bar{R}_{i.})(Z_{ij} - \bar{Z}_{i.})}{\sum_{i=1}^I  \frac{n_i^2}{m_i(n_i - m_i)} \sum_{j=1}^{n_i} (D_{ij} - \bar{D}_{i.})(Z_{ij} - \bar{Z}_{i.})}
\end{equation}

\section{Supplementary Materials: Sensitivity Analysis} \label{sec:sens}

To model this deviation from randomized assignment due to unmeasured confounders, let $\pi_{ij} = P(Z_{ij} = 1 | \mathcal{F})$ and $\pi_{ik} = P(Z_{ik} = 1 | \mathcal{F})$ for each unit $j$ and $k$ in the $i$th matched set. The odds that unit $j$ will receive $Z_{ij} =1$ instead of $Z_{ij} = 0$ is $\pi_{ij}/(1 - \pi_{ij})$. Similarly, the odds for unit $k$ is $\pi_{ik}/(1 - \pi_{ik})$. Suppose the ratio of these odds is bounded by $\Gamma \geq 1$
\begin{equation} \label{eq:gamma}
\frac{1}{\Gamma} \leq \frac{\pi_{ij}  (1 -\pi_{ik})}{\pi_{ik} (1 - \pi_{ij})} \leq \Gamma
\end{equation}
If unmeasured confounders play no role in the assignment of $Z_{ij}$, $\Gamma = 1$ and $\pi_{ij} = \pi_{ik}$. That is, child $j$ and $k$ have the same probability of receiving $Z_{ij} =1$ in matched set $i$. If there are unmeasured confounders that affect the distribution of $Z_{ij}$, then $\pi_{ij} \neq \pi_{ik}$ and $\Gamma > 1$. By \citet{rosenbaum_observational_2002}, equation \eqref{eq:gamma} is equivalent to 
\begin{equation} \label{eq:z}
P(\mathbf{Z} = \mathbf{z} | \mathcal{F},\mathcal{Z}) = \frac{\exp(\gamma \mathbf{z}^T \mathbf{u})}{\sum_{b \in \Omega} \exp(\gamma \mathbf{b}^T \mathbf{u})} 
\end{equation}
where $\mathbf{u} = (u_{11},u_{12},...,u_{In_{I}})$. Unfortunately, the exact probability of \eqref{eq:z} is unknown as it depends on the vector of unobserved confounders, $(u_{11},\ldots,u_{In_I})$. However, for a fixed $\Gamma > 1$, we can obtain lower and upper bounds on \eqref{eq:z}. Furthermore, since the inference on the effect ratio $\lambda$ is derived from the distribution of $P(\mathbf{Z} = \mathbf{z} | \mathcal{F}, \mathcal{Z})$, these bounds can be used to compute a range of possible p-values under the null hypothesis. The range of p-values indicates the effect of unmeasured confounders on the conclusions reached by the inference on $\lambda$. If the range contains $\alpha$, the significance value, then we cannot reject the null hypothesis at the $\alpha$ level when there is an unmeasured confounder with an effect quantified by $\Gamma$.

Specifically, consider Fisher's sharp null hypothesis, $H_0: r_{1ij}^{(d_{1ij})} = r_{0ij}^{(d_{0ij})}$ for all $i=1,\ldots,n$ and $j=1,\ldots,n_i$. Note that this hypothesis implies the hypothesis $H_0: \lambda = 0$. Furthermore, the test statistic in \eqref{eq:genTestStat} simplifies to
\begin{align*}
T(0) &= \frac{1}{I} \sum_{i=1}^I \left\{\frac{n_i}{m_i}  \sum_{j=1}^{n_i} Z_{ij} R_{ij} - \frac{n_i}{n_i -m_i} \sum_{j=1}^{n_i} (1 - Z_{ij}) R_{ij} \right\} \\
&= \frac{1}{I} \sum_{i=1}^I \frac{n_i^2}{m_i(n_i - m_i)} \sum_{j=1}^{n_i} Z_{ij} R_{ij} - \frac{1}{I}\sum_{i=1}^{I} \frac{n_i}{n_i - m_i} \sum_{j=1}^{n_i} R_{ij}
\end{align*}
Regardless of the distribution of $P(\mathbf{Z} = \mathbf{z} | \mathcal{F}, \mathcal{Z})$, $\frac{1}{I} \sum_{i=1}^I n_i / (n_i - m_i) \sum_{j=1}^{n_i} R_{ij}$ is a constant since $r_{1ij}^{(d_{1ij})} = r_{0ij}^{(d_{0ij})}$ under Fisher's sharp null hypothesis. Hence, we can use the simpler statistic, $\tilde{T}(0)$,
\begin{equation} \label{eq:signscore}
\tilde{T}(0) = \frac{1}{I}\sum_{i=1}^I \frac{n_i}{m_i (n_i - m_i)} \sum_{j=1}^{n_i} Z_{ij} R_{ij}  
\end{equation}
to test the Fisher's sharp null hypothesis. If the responses are binary, equation \eqref{eq:signscore} is the sign-score test statistic for which exact bounds on p-values exist \citep{rosenbaum_observational_2002}. If the responses are continuous, \citet{gastwirth_asymptotic_2000} and \citet{small_simultaneous_2009} provide an approximate bound on p-values.

\section{Supplementary Materials: Amplification of Sensitivity Analysis}
Following \citet{rosenbaum_amplification_2009}, we can also reinterpret the sensitivity parameter $\Gamma$ by considering a binary unmeasured confounder with two values $\Delta$ and $\Lambda$ where $\Delta$ and $\Lambda$ have the following property
\begin{equation} \label{eq:amplifySens}
\Gamma = \frac{\Delta \Lambda + 1}{\Delta + \Lambda}, \quad{} \Delta > 0, \Lambda > 0
\end{equation}
The parameter $\Lambda$ refers to the odds of having one instrument value over another. The parameter $\Delta$ refers to the odds of having one outcome over another. For each $\Gamma$, we can use equation \eqref{eq:amplifySens} and translate the interpretation of $\Gamma$ as the combined effect an unmeasured confounder must have on the instrument, $\Lambda$, and on the outcome, $\Delta$, to change the inference.

\begin{figure}
\vspace{6pc}
\includegraphics[scale = 0.6]{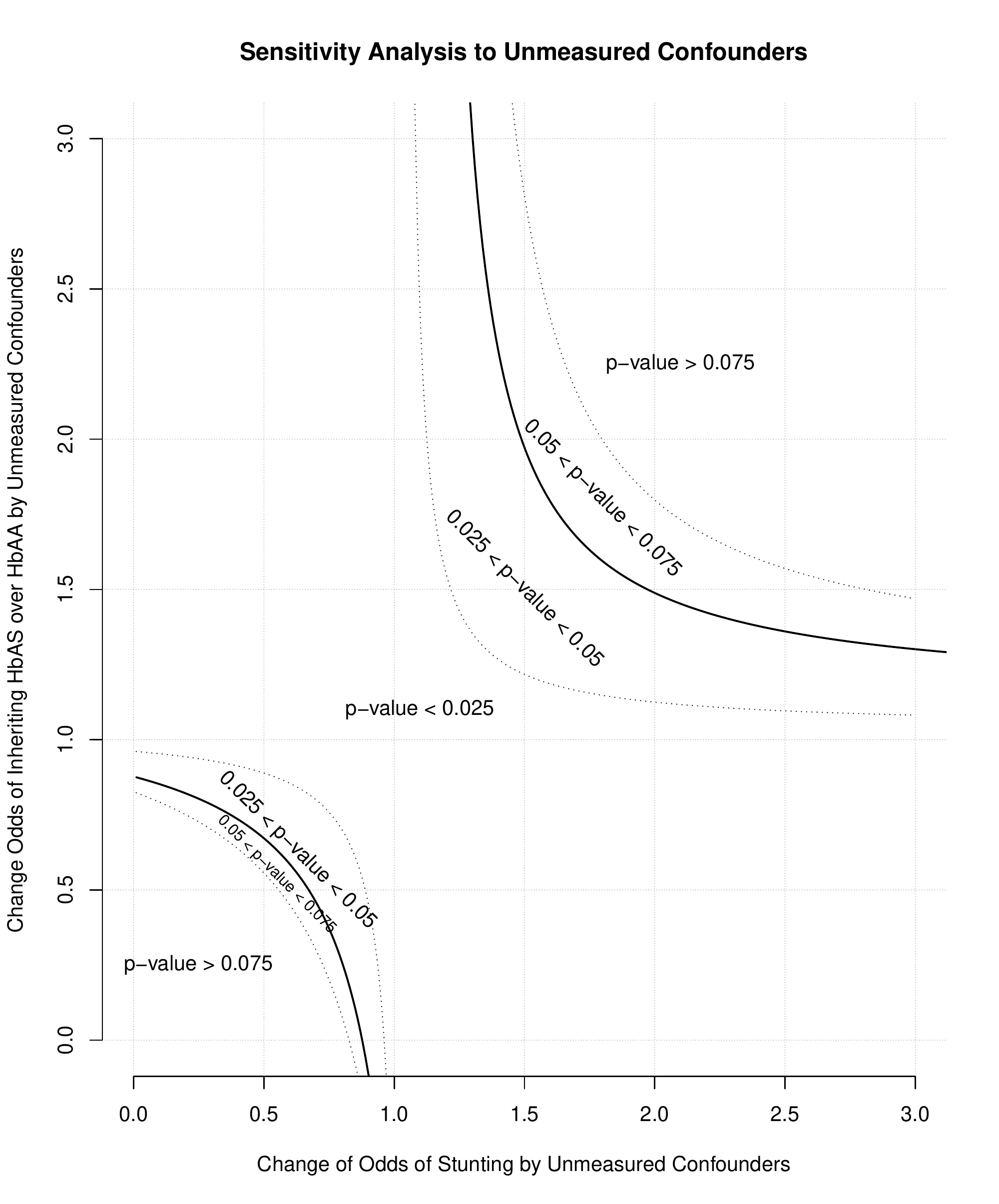}
\caption[]{Amplification of sensitivity analysis. Each point on the graph represents an effect by an unmeasured confounder on the instrument (HbAS) and on the outcome (stunting) to change the inference, specifically the p-value. Points within the two bold curves correspond to effects by unmeasured confounders that will give us p-values $<$ 0.05 and points outside the two bold curves correspond to effects that will give us p-values $>$ 0.05, thereby retaining our null hypothesis.}
\label{fig:amplifiedSens}
\end{figure}

Figure \ref{fig:amplifiedSens} shows the result of applying the amplification of $\Gamma$ by looking at the effect by unmeasured confounders on the odds of stunting and odds of inheriting HbAS over HbAA and on the inference. Specifically, the different values of $\Gamma$ in the sensitivity analysis provides us with range of possible p-values. By equation \eqref{eq:amplifySens}, each $\Gamma$ is associated with two other sensitivity parameters $\Delta$, odds of stunting, and $\Lambda$, odds of inheriting HbAS over HbAA, and can be presented as a two-dimensional plot with each axis representing $\Delta$ and $\Lambda$. For example, the point $(\Delta = 1.5, \Lambda = 1.5)$ on Figure \ref{fig:amplifiedSens} represents an unmeasured confounder that increases the odds of stunting and inheriting HbAS over HbAA by a factor of 1.5 and produces a p-value in between 0.025 and 0.05, which does not contain the significance level of 0.05. Hence, the null hypothesis would still be rejected despite having such an unmeasured confounder. In contrast, if the unmeasured confounder had an effect of $(2.0,2.0)$ specified on the plot, the null hypothesis would be retained since the p-value contains the significance level of 0.05.

\section{Supplementary Materials: Efficiency} \label{sec:efficiency}
\subsection{Formula for efficiency}
One of the advantages of full matching is its flexibility to accommodate various sizes of matched sets. All things being equal in terms of covariate balance, we would like an estimator of the effect ratio $\lambda$ that is as efficient as possible. This is particularly the case with full matching where an unconstrained full matching can create large matched sets which reduces efficiency \citep{hansen_full_2004}. However, we can constrain full matching to increase efficiency by restricting matched sets to have a maximum number of controls and/or treated units per matched set \citep{hansen_full_2004}. This section studies the statistical efficiency of the estimator for $\lambda$ in equation \eqref{eq:effectRatioEstimate} under different constraints on full matching.

To study the efficiency of the effect ratio estimator for different $n_i$ and $m_i$, we study a simple version of the structural equations model introduced popular in econometrics and has been used to study the properties of 2SLS, the most popular IV estimator \citep{wooldridge_econometrics_2010}. Let $(R_{ij},D_{ij}, Z_{ij})$ be i.i.d. observations from an infinite population under the following model.
\begin{align} \label{eq:eff1}
R_{ij} &= \alpha_i + \beta D_{ij} + \epsilon_{ij}, \quad{} E(\epsilon_{ij} | Z_{ij} ) = 0\\
D_{ij} &= \tau_i + \gamma Z_{ij} + \xi_{ij}, \quad{} E(\xi_{ij} | Z_{ij}) = 0 \label{eq:eff2}
\end{align}
with the following moment conditions.
\begin{align*}
Var(\epsilon_{ij} | Z_{ij} ) = \sigma_{i,R}^2, \quad{} Var(\xi_{ij} | Z_{ij} ) = \sigma_{i,D}^2, \quad{} E(\epsilon_{ij} \xi_{ij} | Z_{ij}) = \sigma_{i,RD}
\end{align*}
The parameters $\alpha_i, i=1,\ldots,I$ measure the effect on the outcome from being in matched set $i$. The parameter $\beta$ is the effect of interest, the effect of the exposure on the outcome. Note that the treatment effect in \eqref{eq:eff1} is assumed to be homogeneous for everyone, which is not assumed in the main manuscript. The parameters $\tau_i, i=1,\ldots,I$ measure the effect on the exposure from being in matched set $i$. The parameter $\gamma$ is the effect of the instrument on the exposure. By including $\alpha_i$ and $\tau_i$, the models \eqref{eq:eff1} and \eqref{eq:eff2} incorporate the matching aspect of IV estimation since each matched set $i$ have effects on $R_{ij}$ and $D_{ij}$ that are unique to that matched set. 

The effect ratio, $\lambda$, is related to parameters found in standard structural equation models in \eqref{eq:eff1} and \eqref{eq:eff2}. To illustrate this, note that the potential outcomes notation can be rewritten under the models \eqref{eq:eff1} and \eqref{eq:eff2} as follows.
\begin{align*}
R_{ij} &= \begin{cases} r_{1ij}^{(d_{1ij})} = 
 \alpha_i + \beta \tau_i + \beta \gamma + \beta \xi_{ij} + \epsilon_{ij} &\text{if $Z_{ij} = 1$}\\
r_{0ij}^{(d_{0ij})} = \alpha_i + \beta\tau_i + \beta \xi_{ij} + \epsilon_{ij} & \text{if $Z_{ij} = 0$}
\end{cases} \\
D_{ij} &= 
\begin{cases}
d_{1ij} = \tau_i + \gamma + \xi_{ij} & \text{if $Z_{ij} = 1$} \\
d_{0ij} = \tau_i + \xi_{ij} & \text{if $Z_{ij} = 0$}
\end{cases}
\end{align*}
Then, the effect ratio in \eqref{eq:effectRatioGeneral} turns out to be
\begin{align*}
\lambda =  \frac{ \sum_{i=1}^I \sum_{j=1}^{n_i} r_{1ij}^{(d_{1ij})} - r_{0ij}^{(d_{0ij})}}{\sum_{i=1}^I \sum_{j=1}^{n_i} d_{1ij} - d_{0ij}} = \frac{ \sum_{i=1}^I \sum_{j=1}^{n_i} \beta \gamma}{\sum_{i=1}^I  \sum_{j=1}^{n_i} \gamma} = \frac{\beta \gamma}{\gamma} = \beta
\end{align*}
Hence, $\lambda = \beta$ and because of this equivalence, inferences for the effect ratio provides inference for $\beta$.

Thus, the parameter $\beta$ can be estimated by the effect ratio estimator discussed in Section \ref{sec.supp:theoryEffectRatio} of the Supplementary Materials, specifically equation \eqref{eq:effectRatioEstimate},
\[
\hat{\beta} = \frac{\sum_{i=1}^I \frac{n_i^2}{m_i (n_i - m_i)} \sum_{j=1}^{n_i} (Z_{ij} - \bar{Z}_{i.}) (R_{ij} - \bar{R}_{i.})}{\sum_{i=1}^I \frac{n_i^2}{m_i(n_i - m_i)} \sum_{j=1}^{n_i} (Z_{ij} - \bar{Z}_{i.}) (D_{ij} - \bar{D}_{i.})}
\]
Proposition \ref{prop:eff} computes the asymptotic variance of $\hat{\beta}$ to study the efficiency of the effect ratio estimator.
\begin{proposition} \label{prop:eff} Suppose we have models \eqref{eq:eff1} and \eqref{eq:eff2} with $\gamma \neq 0$ and the third moment of $\epsilon_{ij}$ is bounded for all $i,j$. Define the following variables
\begin{align*}
J_i &= \sum_{j=1}^{n_i} (Z_{ij} - \bar{Z}_{i.})(\epsilon_{ij} - \bar{\epsilon}_{i.}), \quad{} H_i = \sum_{j=1}^{n_i} (Z_{ij} - \bar{Z}_{i.})(D_{ij} - \bar{D}_{i.}), \quad{}
\bar{\epsilon}_{i.} = \frac{1}{n_i} \sum_{j=1}^{n_i} \epsilon_{ij} \\
s_I^2 &= \sum_{i=1}^I \frac{n_i^3}{m_i (n_i - m_i)} \sigma_{i,R}^2
\end{align*}
Assume that (i) $Z_{ij}$ are fixed, (ii) $n_i$ remain bounded for all $i$, and the following moment conditions are met for $J_i$ and $H_i$
\[
\limsup_{I \to \infty} \frac{1}{s_I^{3}} \sum_{i=1}^I \frac{n_i^6}{m_i^3(n_i - m_i)^3} E(|J_i|^3) = 0, \quad{}
\sum_{i=1}^I Var \left( \frac{n_i^2}{m_i (n_i - m_i)} H_i^2 \right) = o(I^2)
\]
Then, the asymptotic variance of the effect ratio estimator in \eqref{eq:effectRatioEstimate} is 
\[
\sqrt{I}(\hat{\beta} - \beta) \to N\left\{0, \frac{ \left(\lim_{I \to \infty} \frac{s_I}{\sqrt{I}} \right)^2}{\gamma^2 \left(\lim_{I \to \infty} \frac{1}{I}  \sum_{i=1}^I n_i \right)^2} \right\}
\]
\end{proposition}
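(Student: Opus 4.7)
The plan is to rewrite $\hat{\beta}-\beta$ as the ratio of a scaled sum to a scaled sum, establish a CLT for the numerator and a law of large numbers for the denominator, and then apply Slutsky.

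First I would substitute the structural equation $R_{ij} = \alpha_i + \beta D_{ij} + \epsilon_{ij}$ into the numerator of $\hat{\beta}$. Since $\bar{R}_{i.} = \alpha_i + \beta\bar{D}_{i.} + \bar{\epsilon}_{i.}$, one has $R_{ij}-\bar{R}_{i.} = \beta(D_{ij}-\bar{D}_{i.}) + (\epsilon_{ij}-\bar{\epsilon}_{i.})$, so
\[
\hat{\beta} - \beta \;=\; \frac{\sum_{i=1}^I \frac{n_i^2}{m_i(n_i-m_i)}\, J_i}{\sum_{i=1}^I \frac{n_i^2}{m_i(n_i-m_i)}\, H_i}.
\]
A short calculation, using $\sum_j (Z_{ij}-\bar{Z}_{i.}) = 0$, shows $J_i = \sum_j (Z_{ij}-\bar{Z}_{i.})\epsilon_{ij}$, so (with $Z$ fixed and $E(\epsilon_{ij}\mid Z_{ij})=0$) the $J_i$ are independent mean-zero random variables with $\operatorname{Var}(J_i)=\sum_j(Z_{ij}-\bar{Z}_{i.})^2\sigma_{i,R}^2 = \frac{m_i(n_i-m_i)}{n_i}\sigma_{i,R}^2$; hence $\operatorname{Var}\!\bigl(\tfrac{n_i^2}{m_i(n_i-m_i)}J_i\bigr) = \tfrac{n_i^3}{m_i(n_i-m_i)}\sigma_{i,R}^2$, and the total variance of the numerator is exactly $s_I^2$.

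Second, I would analyze the denominator by substituting $D_{ij}-\bar{D}_{i.} = \gamma(Z_{ij}-\bar{Z}_{i.}) + (\xi_{ij}-\bar{\xi}_{i.})$. For binary $Z$ with $m_i$ ones, $\sum_j(Z_{ij}-\bar{Z}_{i.})^2 = m_i(n_i-m_i)/n_i$, so $\tfrac{n_i^2}{m_i(n_i-m_i)}H_i = \gamma n_i + \tfrac{n_i^2}{m_i(n_i-m_i)}\sum_j (Z_{ij}-\bar{Z}_{i.})(\xi_{ij}-\bar{\xi}_{i.})$. Dividing by $I$, the deterministic piece converges to $\gamma\lim_I \tfrac{1}{I}\sum_i n_i$, while the stochastic piece has mean zero and, by the given moment condition on $H_i$ combined with boundedness of $n_i, m_i, n_i-m_i$, has variance $o(1)$; Chebyshev then gives
\[
\frac{1}{I}\sum_{i=1}^I \frac{n_i^2}{m_i(n_i-m_i)}\, H_i \;\xrightarrow{\;p\;}\; \gamma \lim_{I\to\infty}\frac{1}{I}\sum_{i=1}^I n_i.
\]

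Third, I would apply Lyapunov's CLT to $s_I^{-1}\sum_i \tfrac{n_i^2}{m_i(n_i-m_i)}J_i$: the hypothesis $\limsup_I s_I^{-3}\sum_i \tfrac{n_i^6}{m_i^3(n_i-m_i)^3} E|J_i|^3 = 0$ is exactly the Lyapunov-$\delta{=}1$ condition for the independent summands $\tfrac{n_i^2}{m_i(n_i-m_i)}J_i$, so this ratio converges to $N(0,1)$. Rewriting gives $\tfrac{1}{\sqrt{I}}\sum_i \tfrac{n_i^2}{m_i(n_i-m_i)}J_i \Rightarrow N\!\bigl(0,\,(\lim_I s_I/\sqrt{I})^2\bigr)$. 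Finally, Slutsky's theorem applied to $\sqrt{I}(\hat{\beta}-\beta) = [\tfrac{1}{\sqrt{I}}\sum_i \tfrac{n_i^2}{m_i(n_i-m_i)}J_i]/[\tfrac{1}{I}\sum_i \tfrac{n_i^2}{m_i(n_i-m_i)}H_i]$ delivers the claimed limit with variance $(\lim_I s_I/\sqrt{I})^2/\bigl[\gamma^2(\lim_I \tfrac{1}{I}\sum_i n_i)^2\bigr]$.

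The main obstacle I foresee is purely bookkeeping: carefully verifying that the Lyapunov condition stated in the hypothesis matches the required normalization once one expands $\operatorname{Var}(J_i)$, and confirming that the $H_i^2$-moment assumption supplies the $o(1)$ variance needed for the denominator's LLN (boundedness of $n_i$ lets one reduce a $H_i^2$ control to a $H_i$ control via $\operatorname{Var}(H_i) \le E(H_i^2)$ and the mean formula). Everything else is standard ratio-limit bookkeeping.
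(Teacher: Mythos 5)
Your proposal is correct and follows essentially the same route as the paper's proof: the same decomposition $\hat{\beta}-\beta = \bigl(\sum_i \tfrac{n_i^2}{m_i(n_i-m_i)}J_i\bigr)/\bigl(\sum_i \tfrac{n_i^2}{m_i(n_i-m_i)}H_i\bigr)$, the same variance computation yielding $s_I^2$, a Lyapunov-type CLT for the numerator, a weak law for the denominator (the paper cites Serfling's Theorem C where you invoke Chebyshev directly, an immaterial difference), and Slutsky to finish.
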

\begin{proof}[Proof of Proposition \ref{prop:eff}] 
First, for all $i=1,\ldots,I$ and $j = 1,\ldots,n_i$, we have
\[
Z_{ij} - \bar{Z}_{i.} = \begin{cases}
1 - \frac{m_i}{n_i} & \text{if $Z_{ij} = 1$} \\
- \frac{m_i}{n_i} & \text{if $Z_{ij} = 0$}
\end{cases}
\] 
Furthermore,
\[
\sum_{j=1}^{n_i} (Z_{ij} - \bar{Z}_{i.}) = 0, \quad{} \sum_{j=1}^{n_i} (Z_{ij} - \bar{Z}_{i.})^2 = \frac{m_i (n_i - m_i)}{n_i} 
\]
Second, for fixed $Z_{ij}$, we have the following expected values for $J_i$ 
\begin{align*}
E(J_i) =& 0 \\
E(J_i^2) =& Var \left\{\sum_{j=1}^{n_i} (Z_{ij} - \bar{Z}_{i.})(\epsilon_{ij} - \bar{\epsilon}_{i.}) \right\} \\
=& \sum_{j=1}^{n_i} (Z_{ij} - \bar{Z}_{i.})^2 Var(\epsilon_{ij} - \bar{\epsilon}_{i.})  + \sum_{j,k} (Z_{ij} - \bar{Z}_{i.}) (Z_{ik} - \bar{Z}_{k.}) Cov(\epsilon_{ij} - \bar{\epsilon}_{i.},\epsilon_{ik} - \bar{\epsilon}_{i.}) \\
=& (1 - \frac{1}{n_i}) \sigma_{i,R}^2 \sum_{j=1}^{n_i} (Z_{ij} - \bar{Z}_{i.})^2 - \frac{1}{n_i} \sigma_{i,R}^2 \sum_{j,k} (Z_{ij} - \bar{Z}_{i.}) (Z_{ik} - \bar{Z}_{k.}) \\
=& \sigma_{i,R}^2 \sum_{j=1}^{n_i} (Z_{ij} - \bar{Z}_{i.})^2 - \frac{1}{n_i} \sigma_{i,R}^2 \left\{\sum_{j=1}^{n_i}(Z_{ij} - \bar{Z}_{i.})\right\}^2 \\
=& \sigma_{i,R}^2 \frac{m_i (n_i - m_i)}{n_i} 
\end{align*}
For the third moment, for each $i$, let $k_1,\ldots,k_{n_i}$ be non-negative integers and define the multinomial coefficient as follows.
\[
\binom{3}{k_1,\ldots,k_{n_i}} = \frac{3!}{k_1! \cdots k_{n_i}!}
\]
Then, we have
\begin{align*}
E(|J_i^3|) =&E|\left\{ \sum_{j=1}^{n_i} (Z_{ij} - \bar{Z}_{i.})(\epsilon_{ij} - \bar{\epsilon}_{i.}) \right\}^3| \\
=& E| \sum_{k_1 + \cdots + k_{n_i} = 3} \binom{3}{k_1,\ldots,k_{n_i}} \prod_{j=1}^{n_i} \left\{(Z_{ij} - \bar{Z}_{i.})(\epsilon_{ij} - \bar{\epsilon}_{i.})\right\}^{k_j} | \\
\leq& \sum_{k_1 + \cdots + k_{n_i} = 3} \binom{3}{k_1,\ldots,k_{n_i}} \prod_{j=1}^{n_i} |Z_{ij} - \bar{Z}_{i.}|^{k_j} E|\epsilon_{ij} - \bar{\epsilon}_{i.}|^{k_j} < \infty
\end{align*}
because third moments exist and are bounded for all $\epsilon_{ij}$ and $n_i$ is bounded. Third, based on these moment calculations, it immediately follows that
\[
E \left[ \sum_{i=1}^I \left\{\frac{ n_i^2}{(m_i)(n_i - m_i)} J_i \right\}^2 \right] = \sum_{i=1}^I \left\{\frac{ n_i^4}{(m_i)^2(n_i - m_i)^2} \right\} \left\{\frac{m_i(n_i - m_i)}{n_i} \sigma_{i,R}^2 \right\} = s_I^2 
\]
Then, by Theorem 9.2 in Chapter 9, Section 3 of \citet{breiman_probability_1992} (pg 187), the sum of $J_i$ weighted by $n_i^2 / m_i (n_i - m_i)$ is a standard Normal distribution  
\[
\frac{\sum_{i=1}^I \frac{ n_i^2}{m_i (n_i -m_i)} J_i}{s_I} \to N(0,1)
\]
Fourth, for $H_i$, we have the following moments 
\begin{align*}
E(H_i) =& \gamma  m_i (1 - \frac{m_i}{n_i}) \\
Var(H_i) =& Var \left(\sum_{j=1}^{n_i} (Z_{ij} - \bar{Z}_{i.})(D_{ij} - \bar{D}_{i.}) \right) \\
=& (1 - \frac{1}{n_i}) \sigma_{i,D}^2 \sum_{j=1}^{n_i} (Z_{ij} - \bar{Z}_{i.})^2 - \frac{1}{n_i} \sigma_{i,D}^2 \sum_{j,k} (Z_{ij} - \bar{Z}_{i.}) (Z_{ik} - \bar{Z}_{k.}) \\
=& \sigma_{i,D}^2 \sum_{j=1}^{n_i} (Z_{ij} - \bar{Z}_{i.})^2 - \frac{1}{n_i} \sigma_{i,D}^2 \left(\sum_{j=1}^{n_i}(Z_{ij} - \bar{Z}_{i.})\right)^2 \\
=& \sigma_{i,D}^2 \frac{m_i (n_i - m_i)}{n_i}
\end{align*}
Fifth, by Theorem C in page 27 of \citet{serfling_approximation_1980}, 
\begin{align*}
&\frac{1}{I} \sum_{i=1}^I \frac{n_i^2}{m_i(n_i - m_i)} H_i - \gamma \frac{1}{I} \sum_{i=1}^I E\left\{ \frac{n_i^2}{m_i (n_i - m_i)} H_i \right\} \\
=& \frac{1}{I} \sum_{i=1}^I \frac{n_i^2}{m_i(n_i - m_i)} H_i - \gamma \frac{1}{I} \sum_{i=1}^I n_i \to 0  
\end{align*}
Finally, combining all these facts together, we can rewrite the effect ratio estimator as follows.
\begin{align*}
\hat{\beta} &= \frac{\sum_{i=1}^I \frac{n_i^2}{m_i (n_i - m_i)} \sum_{j=1}^{n_i} (Z_{ij} - \bar{Z}_{i.}) (R_{ij} - \bar{R}_{i.})}{\sum_{i=1}^I \frac{n_i^2}{m_i(n_i - m_i)} \sum_{j=1}^{n_i} (Z_{ij} - \bar{Z}_{i.}) (D_{ij} - \bar{D}_{i.})} \\
&=\beta + \frac{\sum_{i=1}^I \frac{n_i^2}{m_i (n_i - m_i)} \sum_{j=1}^{n_i} (Z_{ij} - \bar{Z}_{i.}) (\epsilon_{ij} - \bar{\epsilon}_{i.})}{\sum_{i=1}^I \frac{n_i^2}{m_i(n_i - m_i)} H_i} \\
&= \beta + \frac{\sum_{i=1}^I \frac{n_i^2}{m_i(n_i - m_i)} J_i}{\sum_{i=1}^{I} \frac{n_i^2}{m_i(n_i - m_i)} H_i}
\end{align*}
which leads to 
\[
\sqrt{I} (\hat{\beta} - \beta) = \left\{\frac{\sum_{i=1}^I \frac{n_i^2}{m_i(n_i - m_i)} J_i}{s_I} \right\} \left\{\frac{\frac{1}{\sqrt{I}} s_I}{\frac{1}{I}\sum_{i=1}^{I} \frac{n_i^2}{m_i(n_i - m_i)} H_i} \right\}
\]
Finally, using Slutsky's Theorem, $\sqrt{I}(\hat{\beta} - \beta)$ converges to a Normal distribution with mean $0$ and stated asymptotic variance.
\end{proof}
Proposition \ref{prop:eff} provides an easy way to compare between different types of full matching methods and their effect on the estimation of the effect ratio. For example, in the simple case of homoscedastic variance, the approximate variance of $\hat{\lambda}$ is
\[
Var(\hat{\lambda}) \approx K \frac{\sum_{i=1}^{I} \frac{n_i^3}{n_i -1}}{\left(\sum_{i=1}^I n_i \right)^2}
\]
where $K$ is some constant that depends on the variance of $R_{ij}$ and the strength of the instrument. Since $K$ will be identical for all full matched designs, we can simply look at the quantities to the right of $K$ to tweak our full matching algorithm to produce the most efficient estimator.

With regards to the quality of the approximation, the asymptotic variance is a decent approximation to the estimator's variance if the number of matched sets, $I$, are large or if the instruments are strong. This is demonstrated in Table \ref{tab:efficiencyRatio} which is a result of the following simulation study. The variables $R_{ij}, D_{ij}$ and $Z_{ij}$ are generated via the model in \eqref{eq:eff1} and \eqref{eq:eff2} with $Z_{ij}$ assumed to be fixed. We randomly pick $\alpha_i, \tau_i$, and $\beta$. We pick $\gamma$ to be $1$ for the strong instrument case and $-0.2$ for the weak instrument case. We assume a homoscedastic variance for the error terms where all the $\sigma_{i,R}^2, \sigma_{i,D}^2$, and $\sigma_{i,RD}$ are the same for every $i$. We compute the effect ratio estimator, repeat this process $1000$ times, and compute the simulated variance. The theoretical variance is calculated based on the formula provided in Proposition \ref{prop:eff}.

\begin{table}
\caption{ Comparison of simulated variance and theoretical variance for different strength of instruments and matched set number $I$.}
\label{tab:efficiencyRatio}
\begin{tabular}{l l l l l}
$I$ & \multicolumn{2}{c}{Theoretical Variance} & \multicolumn{2}{c}{Simulated Variance}  \\  \cline{2-3}  \cline{4-5} 
 &  Strong & Weak & Strong & Weak \\ \hline
50 & 0.024 & 0.59 & 0.028 & 3224.30\\
100 & 0.012 & 0.30 & 0.012 & 181.06\\
110 & 0.011 & 0.27 & 0.012 & 2506.92\\
500 & 0.0024 & 0.060 & 0.0025 & 2.05\\
1000 & 0.0012 & 0.030 & 0.0012 & 0.037\\
5000 & 0.00024 & 0.0060 & 0.00024 & 0.0063\\
10000 & 0.00012 & 0.0030& 0.00012 & 0.0030\\
\end{tabular}
\end{table}

Table \ref{tab:efficiencyRatio} shows us that for strong instruments, the agreement between theoretical formula in Proposition \ref{prop:eff} and simulation is quite good for all values of $I$. On the other hand, for weak instruments, there is substantial deviation between the theoretical variance and the simulated variance until $I$ is above $5000$.

\subsection{Simulation to approximate efficiency}
The prior section offers a formula to compute efficiency of various full matching schemes. However, for the formula to be valid, it requires, among other things, a linear model between the outcome, $R_{ij}$ and the exposure $D_{ij}$. In our study where stunting, the outcome, is a binary variable and malaria, the exposure, is a whole number, it is unreasonable to assume that $R_{ij}$ is a linear function of $D_{ij}$.

In such cases, we propose a simulation study to analyze efficiency for different full matching schemes. As an illustration, consider our study with the effect of malaria on stunting. For each matching scheme, we fix $Z_{ij}$ and $X_{ij}$, which, in turn, fixes the matched sets. For the other variables, $D_{ij}$ and $R_{ij}$, we assume a Poisson relationship between $D_{ij}$ and $Z_{ij}$ and a logistic relationship between $D_{ij}$ and $R_{ij}$. In particular, we use the following model 
\[
P(R_{ij} = 1) = \frac{1}{1 + e^{-(\alpha_i + \beta D_{ij} + u_{ij})}}, \quad{} E(D_{ij}) = e^{\tau_i + \gamma Z_{ij}}
\]
We fix $\beta$, the effect of malaria on stunting, to be $0.32$ and $\gamma$, the strength of the instrument, to be $-0.20$ based on the estimates in \citet{kang_causal_2013}; the estimate of $\gamma$ was based on the risk ratio estimate. We also randomly choose $\alpha_i$ and $\tau_i$, the intercepts, from Normal distributions with means $-1.67$ and $-0.19$, respectively, and variances $0.12$ and $0.027$, respectively. The mean and the variance for $\alpha_i$ is from the intercept term and its corresponding standard error of the logistic regression between $R_{ij}$ and $D_{ij}$. Similarly, the mean and the variance for $\tau_i$ is from the intercept term and its corresponding standard error of the Poisson regression between $D_{ij}$ and $Z_{ij}$. Once all the parameters are set, we sample $884$ observations of $(R_{ij},D_{ij})$ (i.e. the sample size of the malaria data set) and compute the effect ratio estimator based on the sample of 884. Note that the effect ratio estimator should be able to estimate $\beta$ since it doesn't rely on the functional form between stunting (i.e. outcome) and malaria episodes (i.e. exposure). We repeat the simulation $5000$ times and compute the median absolute deviation as a robust proxy for variance of the effect ratio estimator. 

\begin{table}
\caption{Trade-off between efficiency and balance for different full matching schemes that use all the data based on simulation based on median absolute deviation and standardized bias.}
\label{tab:effData}
\begin{tabular}{l l l}
Matching & Median absolute deviation & Standardized bias \\ \hline
Full matching (max strata size is 9) & 0.90 & 0.23 \\
Full matching (max strata size is 10) & 0.96 & 0.19 \\
Full matching (max strata size is 15) & 0.97 & 0.10 \\
Full matching (unrestricted) & 0.98 & 0.055 \\
\end{tabular}
\end{table}

Table \ref{tab:effData} shows the trade-off between efficiency and covariate balance for different full matching schemes that use all $884$ samples of the malaria data. In particular, we restrict the matched set sizes to different values to see their impact on efficiency and standardized bias. The standardized bias is the instrumental propensity score \citep{cheng_using_2011} and is calculated as the difference in propensity scores before and after matching normalized by the within group standard deviation before matching (the square root of the average of the variances within the group).
We see that unrestricted full matching has the lowest bias among all other full matching schemes. However, full matching with restricted strata size of 9 has the lowest median absolute deviation, albeit by a little in comparison to other matching schemes. Given the large bias reduction by using unrestricted full matching with a small gain in median absolute deviation, we use unrestricted full matching in our main manuscript.
 
\section{Supplementary Materials: Extended Simulation}
\subsection{Strength of instruments}
The simulation setup is identical to the one in Section 3 of the main manuscript. We present another aspect of our estimator's performance in relation to 2SLS, specifically the median absolute deviation (MAD). Figure \ref{fig:strengthMAD} measures the MAD of 2SLS and our method. Our method tends to have a slightly higher MAD than 2SLS. This higher variability of our method is to be expected since our method uses a nonparametric approach whereas 2SLS is a parametric approach. However, as the instrument gets stronger (i.e. high concentration parameter), the gap between the two MADs shrinks quickly. 

\begin{figure}
\vspace{6pc}
\includegraphics[scale=0.55]{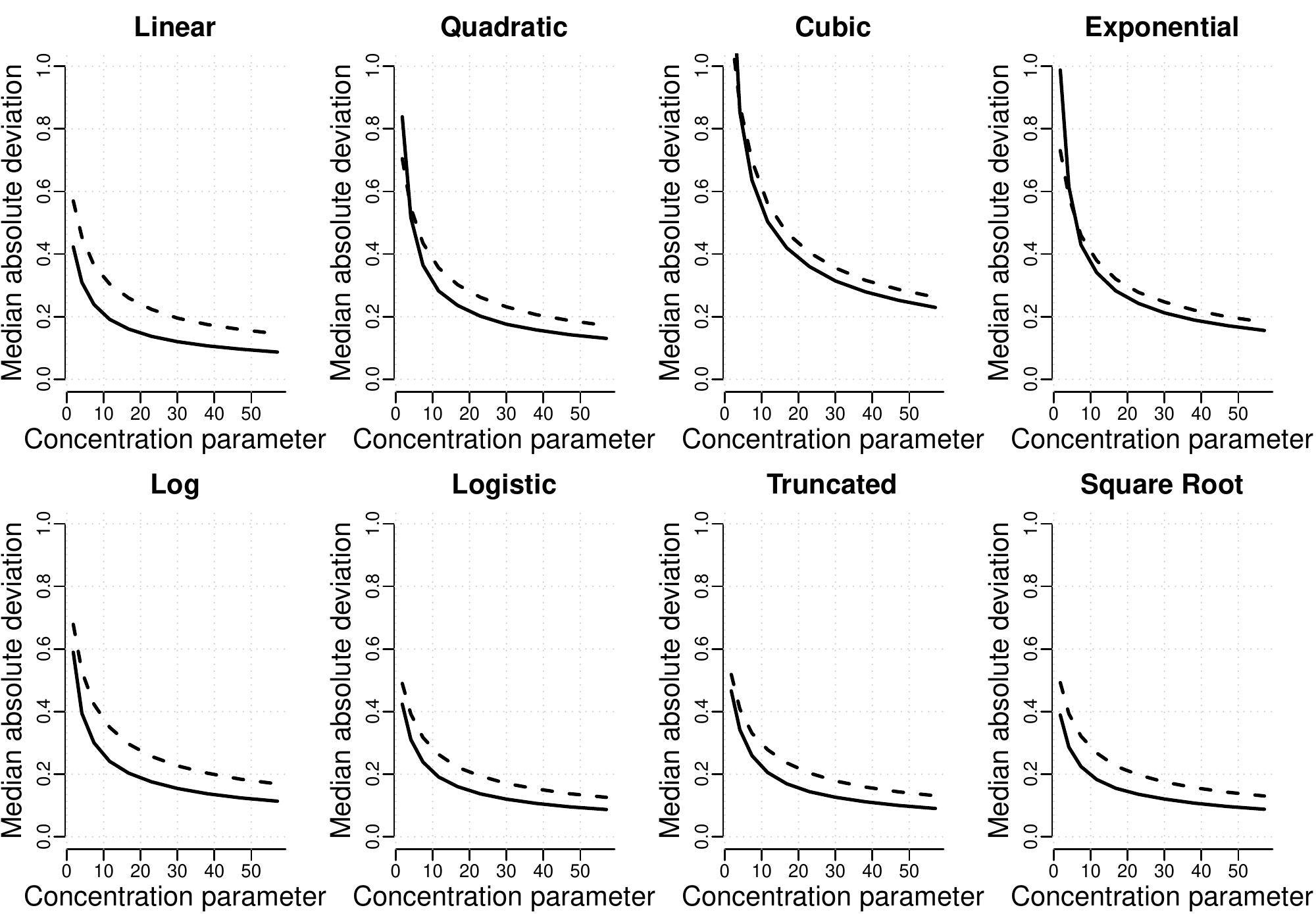}
\caption[]{Median absolute deviation between our method and two stage least squares (2SLS) for different concentration parameters. The solid line indicates 2SLS and the dotted line indicates our method.}
\label{fig:strengthMAD}
\end{figure}

\subsection{Sample size}
The simulation result presented here has the identical setup as the one in Section 3 of the main manuscript. However, we fix the strength of the instrument to be very strong, but vary the sample size. We keep the ratio between $Z_{ij} = 1$ to $Z_{ij} = 0$ to be $1$ to $7$, respectively. We compare the performance of 2SLS and our method with respect to bias, variance, and type I error rate as we vary $f(\cdot)$. 

\begin{figure}
\vspace{6pc}
\includegraphics[scale=0.6]{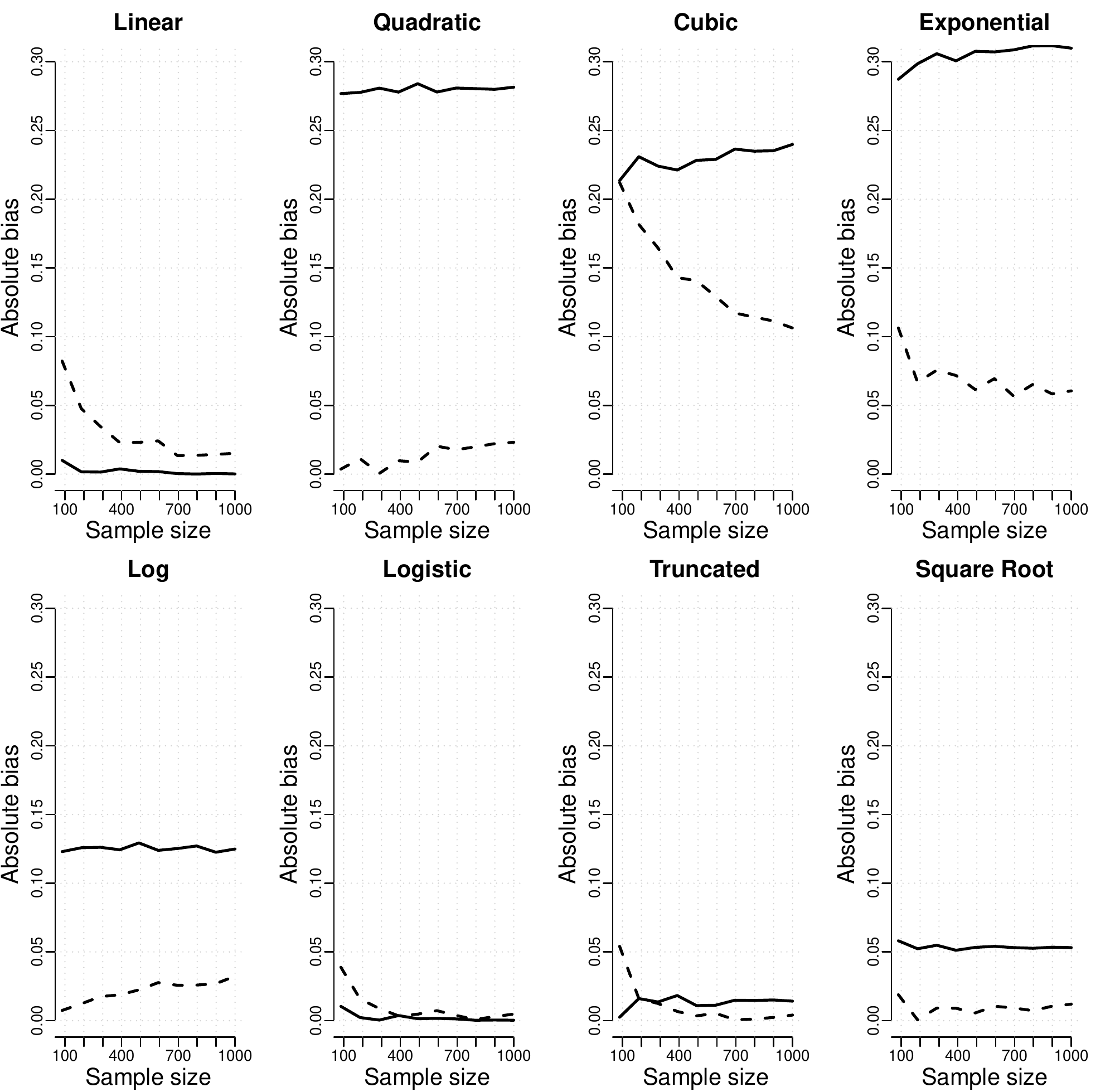}
\caption{Absolute bias of the median between our method and two stage least squares (2SLS) for different sample sizes. The solid line indicates 2SLS and the dotted line indicates our method.}
\label{fig:samplesizeBias}
\end{figure}

Figure \ref{fig:samplesizeBias} measures the absolute bias of 2SLS and our method. When $f(\cdot)$ is a linear function of the observed covariates $\mathbf{x}_{ij}$, 2SLS does better than our method, which is to be expected since 2SLS works best when the model is linear. However, if $f(\cdot)$ is non-linear, our matching estimator does better than 2SLS and is never substantially worse. For example, for quadratic, cubic, exponential, log, and square root functions, our method has lower bias than 2SLS for all sample size. For logistic and truncated functions, our method is similar in performance to 2SLS.

\begin{figure}
\includegraphics[scale=0.6]{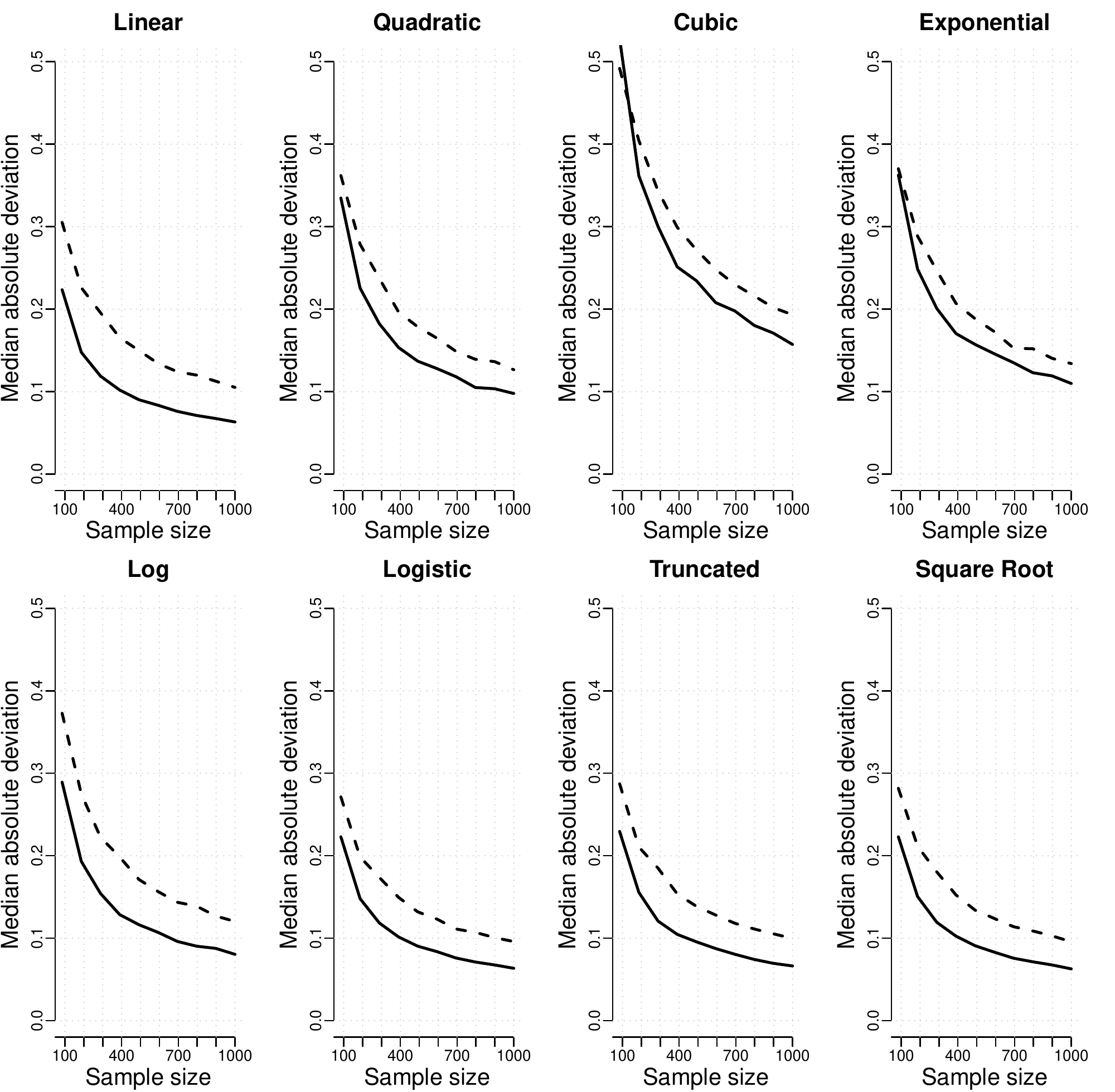}
\caption{Median absolute deviation between our method and two stage least squares (2SLS) for different sample sizes. The solid line indicates 2SLS and the dotted line indicates our method.}
\label{fig:samplesizeMAD}
\end{figure}

Figure \ref{fig:samplesizeMAD} measures the median absolute deviation (MAD) of 2SLS and our method. Our method tends to have a slightly higher MAD than 2SLS. This higher variability of our method is to be expected since our method uses a nonparametric approach whereas 2SLS is a parametric approach. 

\begin{figure}
\includegraphics[scale=0.6]{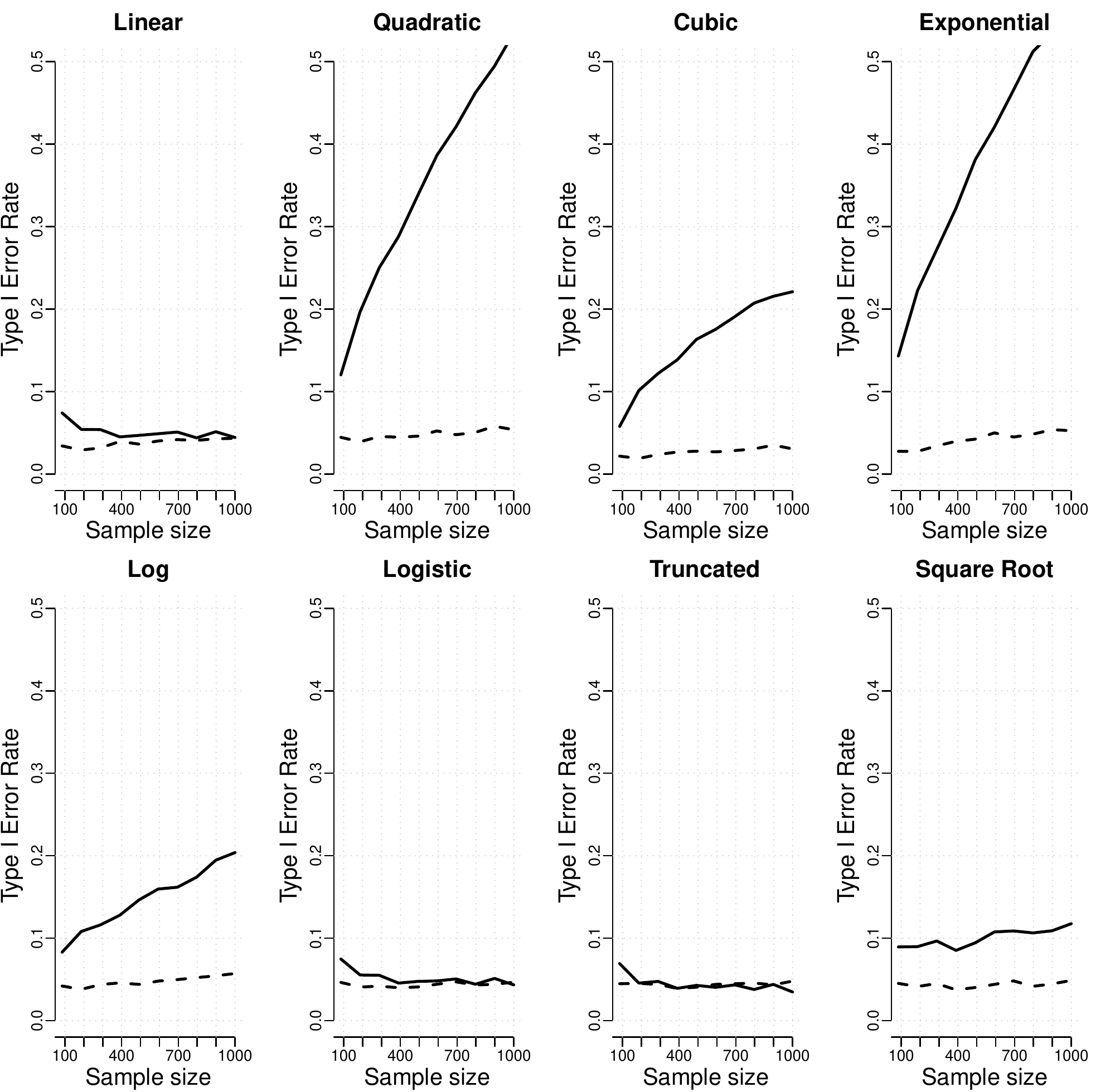}
\caption{ Type I error rate between our method and two stage least squares (2SLS) for different sample sizes. The solid line indicates 2SLS and the dotted line indicates our method.}
\label{fig:samplesizeTypeI}
\end{figure}

Finally, Figure \ref{fig:samplesizeTypeI} measures the Type I error rate of 2SLS and our method. Regardless of the function type and and sample size, our method retains the nominal $0.05$ rate. In fact, even for the linear case where 2SLS is designed to excel, our estimator has the correct Type I error rate for all sample size while 2SLS has higher Type I error for small sample size. For all the non-linear functions, the Type I error rate for 2SLS remains above the 0.05 line, with the notable exception of  logistic and truncated functions whose 2SLS estimators has similar Type I error as our method. In contrast, our estimator maintains the nominal Type I error rate for all sample sizes. This provides evidence that our estimator will have the correct 95\% coverage for confidence intervals regardless of the non-linearity or for different sample size. 

\subsection{Comparison to \citet{frolich_nonparametric_2007}}
In this section, we provide a few additional details of our matching estimator to another non-parametric IV estimator with covariates explored by \citet{frolich_nonparametric_2007} as presented in the main manuscript. The simulation setup is designed to mimic the data type in the malaria data where we have a binary $Z_{ij}$, discrete $D_{ij}$ taking on values $0,1$ and $2$, and a continuous response $R_{ij}$ and is identical to the one presented in the main manuscript. Specifically, we have
\begin{align*}
R_{ij} &=  \alpha + \beta D_{ij} + f(\mathbf{X}_{ij}) + U_{ij} + \epsilon_{ij} \\
D_{ij} &= \chi(D_{ij}^* < -1) + 2\chi(-1 \leq D_{ij}^*  < 1) + 3 \chi(1 \leq D_{ij}^*) \\
D_{ij}^* &= \kappa + \pi Z_{ij} + \bm{\rho}^T \mathbf{X}_{ij} + U_{ij} + \xi_{ij}
\end{align*}
where $D_{ij}^*$ is a latent variable, $U_{ij}$ serve as the unmeasured confounder, and $\epsilon_{ij}, \xi_{ij}, U_{ij}$ are all i.i.d Normal. Covariates $X_{ij}$ are generated similar to the simulation study in the main manuscript. As before, we look at bias and variance across different strengths of instruments, different sample size, and different functions $f(\cdot)$. Similar to the main manuscript, we use the default settings provided in \citet{frolich_estimation_2010}, which implements the method by \citet{frolich_nonparametric_2007}. 

We were not able to produce Type I error results for the method of \citet{frolich_nonparametric_2007} because of a coding error in the code provided by \citet{frolich_estimation_2010} which provided negative standard errors on the estimates produced by it. Fr\"{o}lich (personal communication) is aware of the issue and will be releasing a new version in the future.

\section{Supplementary Materials: Proof to Lemmas} \label{sec:proofLemmas}
\begin{proof}[Proof of Lemma 1] 
Let $y_{0ij,\lambda_0} = r_{0ij}^{(d_{0ij})} - \lambda_0 d_{0ij}$ and $y_{1ij,\lambda_0} = r_{1ij}^{(d_{1ij})} - \lambda_0 d_{1ij}$. Then, $V_i(\lambda_0)$ becomes
\begin{align*}
V_{i}(\lambda_0) &= \frac{n_i}{m_i} \sum_{j=1}^{n_i} Z_{ij} (R_{ij} - \lambda_0 D_{ij}) - \frac{n_i}{n_i - m_i} \sum_{j=1}^{n_i} (1 - Z_{ij}) (R_{ij} - \lambda_0 D_{ij}) \\
&= \frac{n_i}{m_i} \sum_{j=1}^{n_i} Z_{ij} y_{1ij,\lambda_0} - \frac{n_i}{n_i - m_i} \sum_{j=1}^{n_i} (1 - Z_{ij}) y_{0ij,\lambda_0} 
\end{align*}
By assumption (A3) of IV in the main manuscript, $Z_{ij}$ are independent within each strata. Then, for any $i = 1,\ldots,I$ and for $j, k = 1,...,n_i$ where $j \neq k$
\[
E(Z_{ij} | \mathcal{F}, \mathcal{Z}) =\frac{m_i}{n_i}, \quad{} E(Z_{ij} Z_{ik} | \mathcal{F}, \mathcal{Z}) = \frac{m_i (m_i -1)}{n_i (n_i -1)} = \frac{m_i - 1}{n_i} 
\]
where the second equality is true because in full matching, $m_i = 1$ and $n_i = m_i -1$ or $m_i = n_i -1$ and $n_i = 1$.  Then, the expectation of $V_{i}(\lambda_0)$ and the test statistic $T(\lambda_0)$ are
\begin{align*}
E\{V_{i}(\lambda_0)| \mathcal{F}, \mathcal{Z}\} &= \sum_{j=1}^{n_i} (r_{1ij}^{(d_{1ij})} - r_{0ij}^{(d_{0ij})}) - \lambda_0 (d_{1ij} - d_{0ij})  \\
E\{T(\lambda_0) | \mathcal{F},\mathcal{Z} \} &= \frac{1}{I} \sum_{i=1}^I  E\{V_{i}(\lambda_0) | \mathcal{F}, \mathcal{Z} \}= \frac{1}{I} (\lambda - \lambda_0) \sum_{i=1}^I  \sum_{j=1}^{n_i} (d_{1ij} - d_{0ij})
\end{align*}
For variance of $V_{i}(\lambda_0)$, Proposition 2 in \citet[Sec. 2.4.4]{rosenbaum_observational_2002} gives us
\begin{align*}
&Var\{V_{i}(\lambda_0) | \mathcal{F}, \mathcal{Z}\} \\
=& Var \left\{ \sum_{j=1}^{n_i} Z_{ij} \left( \frac{n_i}{m_i} y_{1ij,\lambda_0} + \frac{n_i}{n_i - m_i} y_{0ij,\lambda_0} \right) | \mathcal{F}, \mathcal{Z} \right\}  \\
=& \sum_{j=1}^{n_i} \left(\frac{m_{i}}{n_i} - \frac{m_i^2}{n_i^2} \right) a_{ij,\lambda_0}^2 + \left( \frac{m_i - 1}{n_i} - \frac{m_i^2}{n_i^2} \right) \sum_{j\neq k} a_{ij,\lambda_0} a_{ik,\lambda_0}  \\
=& \left(\frac{m_{i}}{n_i} - \frac{m_i^2}{n_i^2} - \frac{m_i -1}{n_i} + \frac{m_{i}^2}{n_i^2} \right)\sum_{j=1}^{n_i} a_{ij,\lambda_0}^2 + \left( \frac{m_i - 1}{n_i} - \frac{m_i^2}{n_i^2} \right) \sum_{j, k} a_{ij,\lambda_0} a_{ik,\lambda_0} \\
=& \frac{1}{n_i} \sum_{j=1}^{n_i} a_{ij,\lambda_0}^2 + \frac{n_i(m_i - 1) - m_i^2}{n_i^2} \sum_{j,k} a_{ij,\lambda_0} a_{ik,\lambda_0} \\
=& \frac{1}{n_i} \sum_{j=1}^{n_i} a_{ij,\lambda_0}^2 - \frac{1}{n_i^2} \sum_{j,k} a_{ij,\lambda_0} a_{ik,\lambda_0} \\
=& \frac{1}{n_i} \sum_{i=1}^{n_i} (a_{ij,\lambda_0} - \bar{a}_{i,\lambda})^2
\end{align*}
Finally, the variance of $T(\lambda_0)$ is given by
\[
Var\{T(\lambda_0) | \mathcal{F}, \mathcal{Z}\} = \frac{1}{D^2} \sum_{i=1}^I Var \{ V_{i}(\lambda_0) | \mathcal{F},\mathcal{Z} \} = \frac{1}{I^2} \sum_{i=1}^I \frac{1}{n_i}\sum_{j=1}^{n_i} (a_{ij,\lambda_0} - \bar{a}_{i,\lambda})^2
\]
\end{proof}

\begin{proof}[Proof of Lemma 2] Let $v_{i,\lambda_0}^2 = Var \{V_{i}(\lambda_0) | \mathcal{F}, \mathcal{Z}\}$. Under the generalized effect ratio, the bias of the estimator \eqref{eq:genEstVar} is
\begin{align*}
&E\{S^2(\lambda_0) | \mathcal{F}, \mathcal{Z} \} \\
=&\frac{1}{I(I-1)} \sum_{i=1}^I E[\{V_{i}(\lambda_0) - T(\lambda_0)\}^2 | \mathcal{F}, \mathcal{Z}] \\
=&\frac{1}{I(I-1)} \sum_{i=1}^I  E\{V_{i}^2(\lambda_0) | \mathcal{F}, \mathcal{Z}\} + E\{T^2(\lambda_0) | \mathcal{F}, \mathcal{Z}\} - 2E\{V_{i}(\lambda_0) T(\lambda_0) | \mathcal{F}, \mathcal{Z}\} \\
=&\frac{1}{I(I-1)} \sum_{i=1}^I (\mu_{i,\lambda_0}^2 + v_{i,\lambda_0}) + \left(\mu_{\lambda_0}^2 + \frac{1}{I^2} \sum_{j=1}^I v_{j,\lambda_0}\right) \\
&-\frac{2}{I}\left( \mu_{i,\lambda_0}^2 + v_{i,\lambda_0} + \sum_{j \neq i} \mu_{i,\lambda_0} \mu_{j,\lambda_0} \right)  \\
=& \frac{1}{I(I-1)} \sum_{i=1}^I \left( v_{i,\lambda_0} - \frac{2}{I}  v_{i,\lambda_0} + \frac{1}{I^2} \sum_{j=1}^I v_{j,\lambda_0}\right) \\
&+ \frac{1}{I(I-1)} \sum_{i=1}^I \left(  \mu_{i,\lambda_0}^2 + \mu_{\lambda_0}^2 - \frac{2}{I} \sum_{j=1}^I \mu_{i,\lambda_0} \mu_{j,\lambda_0}\right) \\
=& \left(\frac{I^2 - 2I + I}{I(I-1)} \right) \frac{1}{I^2} \sum_{i=1}^n  v_{i,\lambda_0} + \frac{1}{I(I-1)} \sum_{i=1}^I (\mu_{i,\lambda_0} - \mu_{\lambda_0})^2 \\
=& \frac{1}{I^2} \sum_{i=1}^I  v_{i,\lambda_0} + \frac{1}{I(I-1)} \sum_{i=1}^I  (\mu_{i,\lambda_0}- \mu_{\lambda_0})^2
\end{align*}
\end{proof}

\end{document}